\newtheorem{theorem}{Theorem}
\newtheorem{lemma}{Lemma}
\newtheorem{remark}{\bf Remark}
\newcommand{\xar}{\xrightarrow}
\newcommand{\eps}{\varepsilon}
\newcommand{\vpt}{\varphi, \vartheta}
\DeclareMathOperator*{\argmin}{argmin}
\newcommand{\md}{\mathrm{d}}
\newcommand{\sumtn}{\sum_{t=1}^n}
\newcommand{\Ft}{\mathcal{F}_{t-1}}
\newcommand{\X}{\mathbf{X}}
\newcommand{\Z}{\mathbf{Z}}
\newcommand{\topr}{\xar[]{p}}
\newcommand{\op}{o_p}
\newcommand{\Op}{O_p}
\begin{document}

\begin{center}
 {\bf\LARGE  Quantile Regression for Location-Scale Time Series
 Models with Conditional Heteroscedasticity}
 \vspace{.5cm}
\end{center}
\begin{center}
Jungsik Noh\footnote{Quantitative Biomedical Research Center, Department of Clinical Sciences, University of Texas Southwestern Medical Center, Dallas, TX 75390, USA. Email: nohjssunny@gmail.com}
and
Sangyeol Lee\footnote{Department of Statistics, Seoul National
University, Seoul 151-747, Korea. Email: sylee@stats.snu.ac.kr}\\
$^1$University of Texas Southwestern Medical Center \\
$^2$Seoul National University
\end{center}

\begin{center}
    Revised February 28, 2015
\end{center}

\begin{abstract}
    { This paper considers quantile regression for a wide class of time series models
    including ARMA models with asymmetric GARCH (AGARCH) errors.
    The classical mean-variance models are reinterpreted as conditional location-scale
    models so that the quantile regression method can be naturally
    geared into the considered models.
    The consistency and asymptotic normality of the quantile regression
    estimator is established in location-scale time series models under mild conditions.
    In the application of this result to ARMA-AGARCH models, more primitive conditions are deduced to obtain the asymptotic
    properties. For illustration, a simulation study and a real data
    analysis are provided.
    }
\end{abstract}

\noindent{\bf MSC2010 subject classifications}: Primary 62M10; secondary 62F12.
\\
\noindent{\bf Key words and phrases}: Quantile regression,
conditional location-scale time series models, ARMA-AGARCH models,
CAViaR models, consistency, asymptotic normality, identifiability
condition.
\\
\noindent{\bf Abbreviated title}: Quantile regression for
location-scale time series models

\section{Introduction}

Quantile regression, introduced by \citet{KoenkerBassett1978},
generalizes the notion of sample quantiles to linear and nonlinear
regression models including the least absolute deviation estimation
as its special case. The method provides an estimation of
conditional quantile functions at any probability levels and it is
well known that the family of estimated conditional quantiles sheds
a new light on the impact of covariates on the conditional location,
scale and shape of the response distribution: see
\citet{Koenker2000}.
Quantile regression has been widely used to analyze time series data
as an alternative
to the least squares method (see  \citealt{Fitzenbergeretal2002,
Koenker2005}) since it is not only robust to heavy tails but also
allows a flexible analysis of the covariate effects. Especially, in
risk management, it is also a functional tool to calculate the
value-at-risk (VaR). Quantile regression has been studied in linear
and nonlinear autoregressive models by
\citet{BloomfieldSteiger1983}, \citet{Weiss1991},
\citet{KoulSaleh1995}, and \citet{DavisDunsmuir1997}: see also
\citet{KoenkerZhao1996} and \citet{XiaoKoenker2009}, who handled
`linear' autoregressive conditional heteroscedasticity (ARCH) and
generalized ARCH (GARCH) models, and \citet{LeeNoh2013} who
considered ordinary GARCH models. \citet{EngleManganelli2004}
considered the quantile regression method for a broad class of time
series models and designated the conditional autoregressive VaR
(CAViaR) model. Although the results of \citet{EngleManganelli2004}
are applicable to a wide class of time series models, the CAViaR
specification therein mainly focuses on the case of pure volatility
models, as pointed out by \citet{Kuesteretal2006} and
\citet{Schaumburg2012}. Unlike the previous studies dealing with the
models having either conditional location or scale components, in
this study, we take an approach to simultaneously estimate the
conditional mean and variance through the quantile regression
method.
\citet{KoenkerBassett1982} and \citet{KoenkerXiao2002} explored the
quantile regression for location-scale models without autoregressive
structure and proposed a robust test for heteroscedasticity.


This paper focuses on the quantile regression for a wide class of
conditional location-scale time series models including the ARMA
models with asymmetric GARCH (AGARCH) errors in which the dynamic
relation between current and past observations is characterized in
terms of a conditional mean and variance structure. Typically, the
conditional mean is assumed to follow an either AR or ARMA type
model and the conditional volatility is assumed to follow a GARCH
type model (\citealt{Bollerslev2008}). Here, we demonstrate that the
quantile regression can be extended to conditional location-scale
models rather than mean-variance models through a slight
modification, and as such, the estimation of the conditional
location and scale can be properly carried out. More precisely, to
activate the proposed method, we remove the constraints imposed on
the mean and variance of the model innovations and reformulate the
mean-variance model to become the conditional location-scale model
described in {Section}~\ref{subsec22}.
It is noteworthy that the reformulated models to incur the quantile
regression estimation are exactly the same as those in (1.3) of
\citet{NeweySteigerwald1997} who pointed out that non-Gaussian
quasi-maximum likelihood (QML) estimators may be inconsistent in the
usual conditional mean-variance models and instead proposed
location-scale models to remedy an asymptotic bias effect. From this
angle, it may be mentioned that our quantile regression method is
comparable with other estimation methods like the Gaussian and
non-Gaussian QML estimation methods.

In this study, we intend to verify the strong consistency and
asymptotic normality of quantile regression estimators in  general
conditional location-scale time series models. Particularly, in the
derivation of the $\sqrt{n}$-consistency, one has to overcome the
difficulty caused by the lack of smoothness of the quantile
regression loss function. To resolve this problem, we adopt the idea
of \citet{Huber1967} and \citet{Pollard1985} and extend Lemma~3 of
\citet{Huber1967} to stationary and ergodic time series cases; see
{Section}~\ref{subsec23} and Lemma~\ref{BR lemma} in the Appendix
for details. To apply the obtained results in general models to the
ARMA-AGARCH model, we deduce certain primitive conditions leading to
the desired asymptotic properties. Here, the task of checking the
identifiability condition appears to be quite demanding and
accordingly a newly designed technique is proposed: see
Remark~\ref{identifiability proof remark} below.

In comparison to \citet{EngleManganelli2004}, our approach has merit
in its own right. First, a weaker moment condition is used to obtain
the asymptotic normality:
for instance, in the ARMA-AGARCH model, only a finite second moment
condition is required while a third moment condition is demanded in
their paper.
Second, more basic conditions such as strict stationarity and
ergodicity of models are assumed in our case rather than the law of
large numbers and central limit theorems assumed in their paper:
however, more general data generating processes are considered
therein.
Third, our parametrization of conditional quantile functions
exhibits a more explicit relationship with the parametrization of
original models.
Finally, a general identifiability condition is provided for the
ARMA-AGARCH model and is rigorously verified.

The rest of this article is organized as follows. In
Section~\ref{sec2}, we introduce the general conditional
location-scale time series models and establish the asymptotic
properties of the quantile regression estimator. In
Section~\ref{sec3}, we verify the conditions for the strong
consistency and asymptotic normality in the ARMA-AGARCH model. In
Section~\ref{sec4}, we report a finite sample performance of the
estimator in comparison with the Gaussian-QMLE. In
Section~\ref{sec5}, we demonstrate the validity of our method by
analyzing the daily returns of Hong Kong Hang Seng Index. All the
proofs are provided in the Appendix and the supplementary material.

\section{Quantile regression estimation of conditional heteroscedasticity models}\label{sec2}

\subsection{An example: reparameterized AR($1$)-ARCH($1$) models}\label{subsec21}

Before we proceed to general conditional location-scale models (see
(\ref{loc-scale model}) below), we first illustrate conditional
quantile estimation for the AR($1$)-ARCH($1$) model:
\begin{align}\label{AR-ARCH eqn}
\begin{aligned}
    Y_t &= \phi_0^\circ + \phi_1^\circ Y_{t-1} + \eps_t, \\
    \eps_t &= \sigma_t \eta_t, \quad  \sigma_t^2 = \omega^\circ + \alpha^\circ \eps_{t-1}^2 \quad \text{for
    }t\in\mathbb{Z},
\end{aligned}
\end{align}
where $\{\eta_t\}$ are i.i.d. random variables with $E\eta_t=0$ and
$E\eta_t^2=1$. In what follows, we denote by
$\mathcal{F}_{t}=\sigma\left( Y_s:s\leq t\right)$ the $\sigma$-field
generated by $\{Y_{s}:s\leq t\}$. Provided that $\eta_t$ is
independent of $\Ft$, the $\tau$th conditional quantile of $Y_t$
given $\Ft$ can be expressed as
\begin{align}\label{cond quantile of AR-ARCH}
    Q_\tau(Y_t |\Ft) &= \phi_0^\circ + \phi_1^\circ Y_{t-1} + F^{-1}_\eta (\tau)
    \left( \omega^\circ + \alpha^\circ \left(Y_{t-1} - \phi_0^\circ - \phi_1^\circ Y_{t-2}\right)^2
    \right)^{1/2},
\end{align}
where $F^{-1}_\eta(\tau)=\inf\{x: P(\eta_1\leq x)\geq \tau \}$. 
Since the $\tau$th quantile of $\eta_t$ is unknown, it is apparent
that the parameters in (\ref{cond quantile of AR-ARCH}) are not
identifiable. As in \citet{LeeNoh2013}, this problem can be overcome
by reparameterizing the ARCH component as follows:
\begin{align*}
    \eps_t &=  h_t u_t, \quad  h_t^2 =1+\gamma^\circ \eps_{t-1}^2
\end{align*}
with $h_t^2 = \sigma_t^2 /{\omega^\circ}$,
$u_t=\sqrt{\omega^\circ}\eta_t$ and
$\gamma^\circ=\alpha^\circ/\omega^\circ$. Here, $h_t$ is only
proportional  to the conditional standard deviation, and thus, can
be interpreted to be a conditional scale: this reparameterization
procedure expresses the ARCH model as a conditional scale model with
no scale constraints on the i.i.d. innovations. The conditional
quantile in this case is then expressed as \begin{align}\label{cond
quantile of repra AR-ARCH}
    Q_\tau(Y_t |\Ft) &= \phi_0^\circ + \phi_1^\circ Y_{t-1} + F^{-1}_u(\tau)
    \left( 1 + \gamma^\circ \left(Y_{t-1} - \phi_0^\circ - \phi_1^\circ Y_{t-2}\right)^2
    \right)^{1/2},
\end{align}
wherein the parameters can be shown to be identifiable: see
Lemma~\ref{identification lemma} that deals with more general
ARMA-AGARCH models.

In fact, the condition $E\eta_t=0$ in (\ref{AR-ARCH eqn}) is not
necessarily required to deduce the conditional quantile function,
since the conditional quantile specification in (\ref{cond quantile
of repra AR-ARCH}) is also valid for the AR($1$)-ARCH($1$) model
without assuming this condition. As seen in Section~\ref{subsec23},
conditional quantile estimators and their asymptotic properties are
irrelevant to the location constraint on $u_t$, and thus, the
condition of $E u_t=0$ is not needed for estimating conditional
quantiles. An analogous approach will be taken to handle the
quantile regression for general location-scale models.

\subsection{Conditional location-scale models}\label{subsec22}

Let us consider the general conditional location-scale model of the
form:
\begin{align}\label{loc-scale model}
\begin{aligned}
    Y_t = f_t(\alpha^\circ) + h_t(\alpha^\circ) u_t \quad \text{for }t\in\mathbb{Z},
\end{aligned}
\end{align}
where $f_t(\alpha^\circ)$ and $h_t(\alpha^\circ)$ respectively
denote $f(Y_{t-1}, Y_{t-2}, \ldots;\alpha^\circ)$ and $h(Y_{t-1},
Y_{t-2}, \ldots;\alpha^\circ)$ for some measurable functions $f,
h:\mathbb{R}^\infty\times\Theta_1 \to\mathbb{R}$; $\alpha^\circ$
denotes the true model parameter; $\Theta_1$ is a model parameter
space; $\{u_t\}$ are i.i.d. random variables with an unknown common
distribution function $F_u$.

Many conditionally heteroscedastic time series models can be
described by the autoregressive representation addressed in
(\ref{loc-scale model}). For example, the reparameterized
AR($1$)-ARCH($1$) model in Section~\ref{subsec21} can be expressed
as a form of (\ref{loc-scale model}) with $\alpha=(\phi_0, \phi_1,
\gamma)$, $f_t(\alpha)=\phi_0 + \phi_1 Y_{t-1}$ and $h_t^2(\alpha) =
1 + \gamma \left(Y_{t-1} - \phi_0 - \phi_1 Y_{t-2}\right)^2 $.
Further, it can be readily seen that invertible ARMA models and
stationary GARCH models also admit the form of (\ref{loc-scale
model}): see Theorem~2.1 of \citet{Berkesetal2003} for the latter.
In Section~\ref{sec3}, the ARMA-AGARCH model will be expressed as a
form of (\ref{loc-scale model}).

{In order to facilitate the conditional quantile estimation,
model~(\ref{loc-scale model}) is assumed to be a reparameterized
version of the time series models as discussed in
Section~\ref{subsec21}, and as such, the innovation distribution
$F_u$ is not assumed to have zero mean and unit variance and
$h_t(\alpha^\circ)$ is interpreted to be a relative conditional
scale rather than variance.  However, restricted to ARMA-AGARCH
models in Sections~\ref{sec3}--\ref{sec5}, we focus on the case of
$Eu_t=0$ considering the popularity in practice.}

In what follows, the following conditions are presumed:
\begin{itemize}
  \item[{\bf (M1)}] $\{Y_t:t\in\mathbb{Z}\}$ satisfying (\ref{loc-scale model}) is strictly stationary and ergodic.
  \item[{\bf (M2)}] $u_t$ is independent of $\mathcal{F}_s$ for $s<t$.
\end{itemize}

 {Conditions {\bf (M1)} and {\bf (M2)} hold for a broad class of time series models. For example,
\citet{BougerolPicard1992} verified that the GARCH model is strictly
stationary if and only if its Lyapunov exponent is negative, which
actually entails {\bf (M2)}. \citet{StraumannMikosch2006} provided
sufficient conditions for the stationarity and ergodicity in general
conditional variance models. \citet{MeitzSaikkonen2008} provided
such conditions in nonlinear AR models with GARCH errors. In
Section~\ref{sec3}, we  specify some conditions for the ARMA-AGARCH
model to admit the autoregressive representation in (\ref{loc-scale
model}) and also to satisfy  {\bf (M1)} and {\bf (M2)}. }

Under {\bf (M2)}, the $\tau$th
quantile of $Y_t$ conditional on the past observations is given by $
 Q_\tau(Y_t |\Ft) =f_t(\alpha^\circ) + \xi^\circ(\tau) h_t(\alpha^\circ)
$ for $0<\tau<1$, wherein
 {the innovation quantile}
$\xi^\circ(\tau) = F_u^{-1}(\tau)$ appears as a new parameter. We
denote by $\theta^\circ(\tau)=(\xi^\circ(\tau), {\alpha^\circ}^T)^T$
the true parameter vector.
 {Note that the conditional quantile}
$Q_\tau(Y_t |\Ft)$ can be expressed as a function of the infinite
number of past observations and parameter $\theta^\circ(\tau)$.
Then, taking into consideration the form of $Q_\tau(Y_t |\Ft)$,
given the stationary solution $\{Y_t\}$ to model~(\ref{loc-scale
model}) and a parameter vector $\theta=(\xi, \alpha^T)^T$,  we
introduce conditional quantile functions:
\begin{align}\label{def of q_t(theta)}
    q_t(\theta)= f(Y_{t-1}, Y_{t-2}, \ldots;\alpha) + \xi h(Y_{t-1}, Y_{t-2}, \ldots;\alpha) \quad \text{for }t\in\mathbb{Z},
\end{align}
where $\alpha$ is a parameter within a domain that allows the above
autoregressive representation. In practice, since $\{Y_t : t\leq0 \}
$ is unobservable, we cannot obtain $q_t(\theta), 1\leq t\leq n$.
Thus, we approximate them with observable $\tilde{q}_t(\theta),
1\leq t\leq n$. A typical example is
$\tilde{q}_t(\theta)=f(Y_{t-1},\ldots, Y_{1},0, \ldots;\alpha)+\xi
h(Y_{t-1},\ldots, Y_{1},0, \ldots;\alpha)$, where all $Y_t$ with
$t\leq 0$ are put to be 0: see \citet{Panetal2008}. One can also use
a model specific approximation as in Section~\ref{sec3}.
Then, the $\tau$th quantile regression estimator of
$\theta^\circ(\tau)$ for model~(\ref{loc-scale model}) is defined by
\begin{align}\label{def of QRE}
    \hat{\theta}_n(\tau) = \argmin_{\theta\in\Theta} {1\over n} \sum_{t=1}^n \rho_\tau(Y_t - \tilde{q}_t(\theta)),
\end{align}
where $\Theta\subset\mathbb{R}^d$ is a parameter space,
$\rho_\tau(u)=u(\tau-I(u<0))$,
 {
and $I(\cdot)$ denotes the indicator function.
}

\subsection{Asymptotic properties of quantile regression estimators}\label{subsec23}

In this subsection, we  {show} the strong consistency and asymptotic
normality of the quantile regression estimator defined in (\ref{def
of QRE}). {The result is applicable to various mean-variance time
series models including the ARMA-AGARCH model handled in
Section~\ref{sec3}.}
 {The} asymptotic properties
are proved by utilizing the affinity between $q_t(\cdot)$ and
$\tilde{q}_t(\cdot)$,  similarly to the {case of the QML estimator
in GARCH-type models: see \citet{Berkesetal2003},
\citet{FrancqZakoian2004}, \citet{StraumannMikosch2006},
\citet{LeeLee2012}, and the references therein. However, the
asymptotic normality is derived in a nonstandard situation, as
discussed below, owing to the non-differentiability of the
 {loss} function $\rho_\tau(\cdot)$.}

  In what follows, we define $\|A\|=\sum_{i,j}|a_{ij}|$ for  matrix $A=(a_{ij})$.
To verify the consistency of $\hat\theta_n(\tau)$, we introduce the
following assumptions:
\begin{description}
    \item [(C1)] The $\tau$th quantile of $u_1$ is unique, that is, $F_u(F_u^{-1}(\tau)-\eps)<\tau< F_u(F_u^{-1}(\tau)+\eps)$ for all $\eps>0$.
    \item[(C2)] $\theta^\circ(\tau)$ belongs to $\Theta$ which is a compact subset of $\mathbb{R}^d$.
    \item[(C3)] (i) $q_1(\theta)$ is continuous in $\theta\in\Theta$  {a.s.};
                 {  (ii) $E\left[ \sup_{\theta\in\Theta} \left| q_1(\theta) \right| \right] <\infty $.  }
    \item[(C4)] If $q_t(\theta)=q_t(\theta^\circ(\tau))$ a.s. for some $t\in\mathbb{Z}$ and $\theta\in\Theta$, then $\theta=\theta^\circ(\tau)$.
    \item[(C5)] There exists a positive constant $c_0$  such that $ h_1(\alpha^\circ)\geq c_0$  {a.s}.
    \item[(C6)]   {$\sum_{t=1}^\infty \sup_{\theta\in\Theta} \left| q_t(\theta) -\tilde{q}_t(\theta) \right| <\infty$ a.s.}
\end{description}

\begin{theorem}\label{strong consistency}
    Suppose that  assumptions {\bf (M1)}, {\bf (M2)}, and {\bf (C1)--(C6)} hold for model~(\ref{loc-scale
    model}).
    Then,  $\hat{\theta}_n(\tau) \to \theta^\circ(\tau)$ a.s. as $n\to\infty$.
\end{theorem}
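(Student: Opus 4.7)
The plan is to follow the standard three-step recipe for strong consistency of an $M$-estimator with a non-smooth loss: (i) remove the approximation error between $\tilde q_t$ and $q_t$; (ii) establish a uniform strong law for the sample criterion $L_n(\theta):= n^{-1}\sum_{t=1}^n \rho_\tau(Y_t-q_t(\theta))$; (iii) show that the limit criterion is uniquely minimized at $\theta^\circ(\tau)$; and then invoke the standard argmin argument on the compact space $\Theta$.

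For step (i), the Lipschitz property $|\rho_\tau(a)-\rho_\tau(b)|\leq |a-b|$ yields $\sup_{\theta\in\Theta}|\tilde L_n(\theta)-L_n(\theta)|\leq n^{-1}\sum_{t=1}^n \sup_{\theta\in\Theta}|q_t(\theta)-\tilde q_t(\theta)|$, which by (C6) and Ces\`aro's lemma converges to $0$ a.s. For step (ii), because no moment of $Y_t$ is directly supplied, I would pass to the centered loss $g_t(\theta):=\rho_\tau(Y_t-q_t(\theta))-\rho_\tau(Y_t-q_t(\theta^\circ(\tau)))$, which satisfies $|g_t(\theta)|\leq 2\sup_{\theta\in\Theta}|q_t(\theta)|$ and hence has an integrable envelope by (C3)(ii). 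Combining (M1), the a.s.\ continuity of $\theta\mapsto q_1(\theta)$ from (C3)(i), and the compactness of $\Theta$ from (C2), a standard uniform ergodic theorem then gives $\sup_{\theta\in\Theta}|n^{-1}\sum_{t=1}^n g_t(\theta)-G(\theta)|\to 0$ a.s., where $G(\theta):=E g_1(\theta)$. Minimizing $L_n$ and $n^{-1}\sum_t g_t$ are equivalent since they differ by a term independent of $\theta$.

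For step (iii), I would condition on $\mathcal{F}_{t-1}$: by (M2), $u_t$ is independent of $\mathcal{F}_{t-1}$; by (C5), $h_t(\alpha^\circ)\geq c_0>0$; and by (C1), the $\tau$-quantile of $u_t$ is unique. Writing $Y_t-q_t(\theta)=h_t(\alpha^\circ)[u_t-\xi^\circ(\tau)]+[q_t(\theta^\circ(\tau))-q_t(\theta)]$ and applying the classical check-function identity ($E\rho_\tau(X-c)-E\rho_\tau(X-q)\geq 0$ for $q$ the unique $\tau$-quantile of $X$, with strict inequality when $c\neq q$), one obtains $E[g_t(\theta)\mid\mathcal{F}_{t-1}]\geq 0$ a.s., with equality a.s.\ if and only if $q_t(\theta)=q_t(\theta^\circ(\tau))$ a.s. By the identifiability assumption (C4), this forces $\theta=\theta^\circ(\tau)$; hence $G$ has $\theta^\circ(\tau)$ as its unique minimizer. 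Any subsequential limit of $\hat\theta_n(\tau)$ must lie in the compact $\Theta$ and minimize $G$ by the uniform convergence already established, so $\hat\theta_n(\tau)\to\theta^\circ(\tau)$ a.s.

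I expect step (ii) to be the main obstacle. The non-differentiability of $\rho_\tau$ rules out derivative-based uniformity arguments, so the a.s.\ upgrade of pointwise ergodic convergence to uniform convergence must be obtained directly from the Lipschitz-in-$q_t(\theta)$ envelope together with (C3) and compactness of $\Theta$, for instance via a finite-cover plus equicontinuity argument or a bracketing bound. A secondary subtlety is that $E|Y_t|$ is not assumed finite, which is why step (ii) must be carried out with the centered loss $g_t$ rather than $\rho_\tau(Y_t-q_t(\theta))$ itself.
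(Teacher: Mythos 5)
Your proposal is correct and follows essentially the same route as the paper's proof: the initial-value effect is removed via the Lipschitz property of $\rho_\tau$ and {\bf (C6)}, the uniform strong law is applied to the centered loss $\rho_\tau(Y_t-q_t(\theta))-\rho_\tau(Y_t-q_t(\theta^\circ(\tau)))$ with envelope from {\bf (C3)}(ii) (the paper simply cites the Straumann--Mikosch uniform ergodic theorem for $C(\Theta)$-valued stationary ergodic sequences rather than re-deriving it), and the limit is shown to be uniquely minimized at $\theta^\circ(\tau)$ via the uniqueness of the $\tau$-quantile of $u_t$, positivity of $h_t(\alpha^\circ)$ ({\bf (C1)}, {\bf (C5)}) and the identifiability condition {\bf (C4)}, followed by the standard compactness argument. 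Your conditioning on $\mathcal{F}_{t-1}$ is just an explicit form of the paper's computation $\Gamma(\theta)=E\bigl[h_t(\alpha^\circ)\,H\bigl((q_t(\theta)-f_t(\alpha^\circ))/h_t(\alpha^\circ)\bigr)\bigr]$, so the two arguments coincide.
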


 {It can be seen that $\{q_t(\theta):t\in\mathbb{Z}\}$ in
(\ref{def of q_t(theta)}) is strictly stationary and ergodic for
each $\theta$ (see Proposition~2.5 of
\citealt{StraumannMikosch2006}), while its approximation
$\{\tilde{q}_t(\theta):t\in\mathbb{N}\}$ in (\ref{def of QRE}) is
not so since $\tilde{q}_t(\theta)$ is recursively defined with given
initials. Assumptions {\bf(C2)}, {\bf(C3)} and {\bf(C6)} are needed
to show the uniform convergence of the objective function in
(\ref{def of QRE}),
 based on the ergodic theorem   
of \citet{StraumannMikosch2006}. It can be shown from assumptions
{\bf(C1)}, {\bf(C4)} and {\bf(C5)} that the a.s. limit of the
objective function is uniquely minimized at
$\theta=\theta^\circ(\tau)$. Conventionally, all the assumptions
excepting the identifiability assumption {\bf (C4)} are easy to
check from the existing literatures. The ARMA-AGARCH model can be
shown to satisfy these conditions in Section~\ref{sec3}, based on
\citet{BrockwellDavis1991} and \citet{Panetal2008}. }

 {Below, we discuss two issues as to {\bf (C4)}.
First, when $\xi^\circ(\tau)=F_u^{-1}(\tau)=0$,
 {\bf (C4)} is not satisfied for heteroscedastic models.
 It is because the parameters involved in
$h_t(\alpha^\circ)$ is not identifiable as seen in the GARCH model:
see Remark~3 of \citet{LeeNoh2013}. If $\xi^\circ(\tau)=0$, only the
parameters in the conditional location is estimable via using
\citet{Weiss1991} who proposed the conditional median estimation for
some models similar to (\ref{loc-scale model}) when
$\xi^\circ(0.5)=0$. This indicates that the $\tau$th conditional
quantile estimation for heteroscedastic models requires a different
conditional quantile specification at some $\tau$, usually the one
corresponding to a center of locations.}

{Secondly, the verification of {\bf (C4)} is non-standard even in
the AR($1$)-ARCH($1$) model, see (\ref{cond quantile of repra
AR-ARCH}). It is mainly because the conditional quantile function
(\ref{def of q_t(theta)}) is a nonlinear function of parameters. We
verify {\bf (C4)} for the ARMA-AGARCH model in
Lemma~\ref{identification lemma} below by using the method
introduced in \citet{NohLee2013}, which may be applicable to the
models other than the ARMA-AGARCH model.}



Turning to the asymptotic normality issue of quantile regression
estimator $\hat\theta_n(\tau)$, notice that the objective function
in (\ref{def of QRE}) is not twice differentiable with respect to
$\theta$ even if $q_t(\theta)$ is smooth, and thus, a second order
Taylor's expansion is not applicable. This lack of smoothness in the
quantile regression is often overcome by using the empirical process
techniques: see, for instance, \citet{JureckovaProchazka1994} and
\citet{XiaoKoenker2009}. \citet{Huber1967} designed a method to
derive the asymptotic normality under nonstandard conditions and
\citet{Pollard1985} recast this method using the empirical process
techniques. \citet{Weiss1991}, \citet{EngleManganelli2004}, and
\citet{Komunjer2005} applied the method of \citet{Huber1967} to the
nonlinear quantile regression for
 {$\alpha$-mixing} observations.
\citet{ZhuLing2011} and \citet{LeeNoh2013} also employed the
 {method of \citet{Pollard1985}
for analyzing stationary processes.}

 {When the objective function is non-convex and non-differentiable,
it is often complicated to verify the rate of convergence of the
estimators. In this study, the root-$n$ consistency of
$\hat\theta_n(\tau)$ is proved through a local quadratic
approximation of the objective function, similarly to the one in
\citet{Pollard1985}. As a device to provide the quadratic
approximation, we derive Lemma~\ref{BR lemma} in the Appendix, which
is an extension of Lemma~3 of \citet{Huber1967} and Lemma~4 of
\citet{Pollard1985} to stationary and ergodic processes.}

In what follows, we list some additional assumptions to ensure the
asymptotic normality of $\hat\theta_n(\tau)$:

\begin{description}
    \item[(N1)]  $F_u$ has a bounded continuous density $f_u$ with $f_u(F_u^{-1}(\tau))>0$.
    \item[(N2)] $\theta^\circ(\tau)$ is an interior point of $\Theta$.
    \item[(N3)] (i) There exists a neighborhood $N_\delta$
    of $\theta^\circ(\tau)$ such that  for all $t\in\mathbb{Z}$,
         $q_t(\theta)$ is differentiable in $\theta\in N_\delta$
            and its derivative ${\partial}q_t(\theta)/{\partial \theta}$ is Lipschitz continuous  {a.s}, \\
        (ii) $E\left[ \sup_{\theta\in N_\delta} \left\| {\partial}q_1(\theta)/{\partial \theta}  \right\|^2   \right] < \infty$, \\
        (iii) $E\left[ \sup_{\theta\in N_\delta} \left\| {\partial^2}q_1(\theta)/{\partial \theta \partial\theta^T}  \right\|   \right] < \infty$.
    \item[(N4)]
        (i) For all $t\geq 1$, $\tilde{q}_t(\theta)$ is differentiable in $N_\delta$
            and its derivative  is Lipschitz continuous  {a.s}, \\
        (ii) {
$
    \sum_{t=1}^\infty \sup_{\theta\in N_\delta}  \left\| {\partial q_t(\theta)}/{\partial\theta} - {\partial\tilde{q}_t(\theta)}/{\partial\theta} \right\| < \infty
$ a.s}, \\
        (iii)  {
$
    \sum_{t=1}^\infty \sup_{\theta\in N_\delta}  \left\| { {\partial^2} {q}_t(\theta)}/{\partial \theta \partial\theta^T} -  { {\partial^2} \tilde{q}_t(\theta)}/{\partial \theta \partial\theta^T} \right\| <\infty
$  a.s.}
    \item[(N5)] Matrix $J(\tau)$ is positive definite, where
\begin{equation}\label{def of J(tau)}
        J(\tau) = E\left[ {1\over h_t({\alpha^\circ})} \frac{\partial q_t(\theta^\circ(\tau))}{\partial\theta}\frac{\partial q_t(\theta^\circ(\tau))}{\partial\theta^T}
    \right].
\end{equation}
\end{description}

\begin{remark}\label{remark: Lipschitz continuity}
  In the case of ARMA-GARCH model, $q_t(\theta)$ defined in (\ref{def of q_t(theta)}) is twice continuously
  differentiable,
  whereas the condition fails in the case of ARMA-AGARCH model:
  see Remark~\ref{Lipschitz in AGARCH} in Section~3.
  The Lipschitz continuity in {\bf (N3)} and {\bf (N4)} is intended to cover such models.
  Recall that the Lipschitz continuous functions have derivatives almost everywhere.
\end{remark}

\begin{theorem}\label{asymptotic normality}
   If the assumptions in Theorem~\ref{strong consistency} and assumptions {\bf (N1)--(N5)} hold for model~(\ref{loc-scale
   model}), then
    \begin{align*}
        \sqrt{n} ( \hat{\theta}_n(\tau) - \theta^\circ(\tau) ) \Rightarrow
        N \left(0, \frac{\tau(1-\tau)}{f_u^2(F_u^{-1}(\tau))} J(\tau)^{-1} V(\tau) J(\tau)^{-1}
        \right),
    \end{align*}
where $J(\tau)$ is given in (\ref{def of J(tau)}) and
    \begin{align*}
        V(\tau) &= E \left[ \frac{\partial q_t(\theta^\circ(\tau))}{\partial \theta} \frac{\partial q_t(\theta^\circ(\tau))}{\partial \theta^T}
        \right].
    \end{align*}
\end{theorem}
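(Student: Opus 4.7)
The plan is to follow the Huber--Pollard strategy for $M$-estimation with a non-smooth criterion, using Theorem~\ref{strong consistency} to localize and the stationary-ergodic extension of Huber's Lemma~3 (Lemma~\ref{BR lemma}) to bypass the non-differentiability of $\rho_\tau$. By Theorem~\ref{strong consistency}, $\hat\theta_n(\tau)$ eventually lies in the neighborhood $N_\delta$ of \textbf{(N3)}, where $q_t$ and $\tilde q_t$ are differentiable with Lipschitz derivatives; assumption \textbf{(N4)} together with \textbf{(C6)} will reduce the subgradient equation based on $\tilde q_t$ to one based on $q_t$ modulo $\op(n^{-1/2})$. Thus it suffices to analyse the idealized score
\begin{equation*}
    T_n(\theta) = \frac{1}{n}\sumtn \psi_\tau\bigl(Y_t - q_t(\theta)\bigr) \frac{\partial q_t(\theta)}{\partial \theta}, \qquad \psi_\tau(x) = \tau - I(x<0).
\end{equation*}

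Next I would prove a central limit theorem at $\theta^\circ(\tau)$. Writing $Y_t - q_t(\theta^\circ(\tau)) = h_t(\alpha^\circ)(u_t - F_u^{-1}(\tau))$ and using \textbf{(M2)} together with \textbf{(C1)}, the summands $\psi_\tau(Y_t - q_t(\theta^\circ(\tau)))\,\partial q_t(\theta^\circ(\tau))/\partial\theta$ form a stationary ergodic martingale difference sequence relative to $\{\mathcal{F}_t\}$, with conditional covariance $\tau(1-\tau)\,(\partial q_t(\theta^\circ(\tau))/\partial\theta)(\partial q_t(\theta^\circ(\tau))/\partial\theta)^T$. The martingale CLT---with the requisite second moment furnished by \textbf{(N3)}(ii) and nondegeneracy by \textbf{(N5)}---then yields
\begin{equation*}
    \sqrt{n}\,T_n(\theta^\circ(\tau)) \Rightarrow N\bigl(0,\ \tau(1-\tau) V(\tau)\bigr).
\end{equation*}

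The heart of the argument is the local Bahadur-type expansion
\begin{equation*}
    \sqrt{n}\,T_n\bigl(\theta^\circ(\tau)+u/\sqrt{n}\bigr) = \sqrt{n}\,T_n(\theta^\circ(\tau)) + f_u\bigl(F_u^{-1}(\tau)\bigr)\, J(\tau)\,u + \op(1),
\end{equation*}
uniformly for $u$ in compact subsets of $\mathbb{R}^d$. The deterministic drift arises from differentiating $E[\psi_\tau(Y_t - q_t(\theta))\mid\Ft] = \tau - F_u\bigl((q_t(\theta)-f_t(\alpha^\circ))/h_t(\alpha^\circ)\bigr)$ at $\theta^\circ(\tau)$ using \textbf{(N1)}; after dividing by $h_t(\alpha^\circ)$ this produces the factor $f_u(F_u^{-1}(\tau))$ and exactly the matrix $J(\tau)$ of (\ref{def of J(tau)}), with \textbf{(C5)} ensuring regularity of the change of variables. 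The stochastic remainder---the oscillation of $\sqrt{n}(T_n(\theta) - ET_n(\theta))$ over an $O(n^{-1/2})$-neighborhood of $\theta^\circ(\tau)$---is the main obstacle, since the integrand jumps in $\theta$ and standard uniform ergodic theorems do not apply. This is precisely where Lemma~\ref{BR lemma} is invoked, its hypotheses being verified via the Lipschitz continuity of $\partial q_t/\partial\theta$ from \textbf{(N3)}(i) and the integrability bounds of \textbf{(N3)}(ii)--(iii).

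Finally, since $\theta^\circ(\tau)$ is interior to $\Theta$ by \textbf{(N2)}, the subgradient characterization of the minimizer of $\rho_\tau$ yields $T_n(\hat\theta_n(\tau)) = \op(n^{-1/2})$. Substituting $u = \sqrt{n}(\hat\theta_n(\tau)-\theta^\circ(\tau))$ into the expansion above and inverting the positive definite matrix $f_u(F_u^{-1}(\tau)) J(\tau)$, I would obtain
\begin{equation*}
    \sqrt{n}\bigl(\hat\theta_n(\tau)-\theta^\circ(\tau)\bigr) = -\bigl[f_u(F_u^{-1}(\tau))\, J(\tau)\bigr]^{-1} \sqrt{n}\,T_n(\theta^\circ(\tau)) + \op(1),
\end{equation*}
and combining with the martingale CLT above delivers the stated limit with sandwich covariance $\tau(1-\tau)/f_u^2(F_u^{-1}(\tau))\cdot J(\tau)^{-1} V(\tau) J(\tau)^{-1}$.
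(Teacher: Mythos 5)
Your route is the estimating-equation (Huber/Weiss/Engle--Manganelli) version of the argument, whereas the paper works directly with the objective function: it establishes a quadratic approximation of $G_n(\theta)-G_n(\theta^\circ)$ (Lemma~\ref{G_n(theta) approximation}) and of $\tilde G_n$ (Lemma~\ref{tilde G approximation}), and then concludes by comparing $\tilde G_n(\hat\theta_n)$ with $\tilde G_n(\theta^\circ)$ and $\tilde G_n(\theta^\circ+U_n)$ in the style of Pollard (1985), never touching a first-order condition. Your martingale CLT for the score at $\theta^\circ$ and the computation of the drift $f_u(F_u^{-1}(\tau))J(\tau)$ from the conditional expectation are correct and parallel what the paper does, and Lemma~\ref{BR lemma} is indeed the right tool for the stochastic remainder. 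However, there are two genuine gaps in your plan. First, the step ``the subgradient characterization of the minimizer yields $T_n(\hat\theta_n(\tau))=\op(n^{-1/2})$'' is not justified here: the criterion $\theta\mapsto\sumtn\rho_\tau(Y_t-\tilde q_t(\theta))$ is \emph{not convex}, because $\tilde q_t(\theta)$ is nonlinear in $\theta$ (it involves $\xi h_t(\varphi,\vartheta)$ with $h_t$ a square root of a quadratic functional of past residuals). Hence there is no linear-programming/convexity argument bounding the score at the minimizer by $d\max_t\|\partial\tilde q_t/\partial\theta\|/n$ as in linear quantile regression, and an interior minimizer of a merely Lipschitz, a.e.\ differentiable, non-convex function need not make your particular selection $T_n$ small. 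Establishing this near-zero score property would require a separate, non-trivial argument; the paper's objective-comparison device is designed precisely to avoid it.

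Second, your Bahadur expansion is claimed ``uniformly for $u$ in compact subsets of $\mathbb{R}^d$,'' which is only useful after one already knows $\sqrt{n}(\hat\theta_n(\tau)-\theta^\circ(\tau))=\Op(1)$; Theorem~\ref{strong consistency} gives only $\op(1)$, so the substitution $u=\sqrt{n}(\hat\theta_n-\theta^\circ)$ is not licensed. You need either an expansion over shrinking neighborhoods with the Huber normalization $1+\sqrt{n}\|\theta-\theta^\circ\|$ (this is exactly the form of the remainder in Lemma~\ref{G_n(theta) approximation}) together with a rate-improvement argument exploiting positive definiteness of $J(\tau)$ and the minimizing property, or some other device; the paper carries this out explicitly before invoking Lemma~\ref{tilde G approximation}. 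Relatedly, the reduction from $\tilde q_t$ to $q_t$ ``modulo $\op(n^{-1/2})$'' is not a pathwise consequence of \textbf{(C6)} and \textbf{(N4)} alone: the score difference contains terms $\sum_t\|\partial q_t(\theta^\circ)/\partial\theta\|\,|I(Y_t<\tilde q_t(\theta^\circ))-I(Y_t<q_t(\theta^\circ))|$, whose control requires conditioning, the density bound from \textbf{(N1)} and \textbf{(C5)}, and a martingale convergence argument, as in the paper's proof of Lemma~\ref{tilde G approximation}. Until these points are supplied, the proposal is an outline of a plausible alternative strategy rather than a proof.
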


The obtained asymptotic covariance matrix
 {coincides with those for the models with location/scale components}
in \citet{JureckovaProchazka1994}, \citet{DavisDunsmuir1997},
\citet{KoenkerZhao1996}, and \citet{LeeNoh2013}. The models
considered in \citet{Weiss1991} and \citet{EngleManganelli2004}
allow a time varying conditional distribution of $Y_t$ unlike in our
study. The asymptotic covariance matrices in their results involve
the conditional density of $Y_t - Q_\tau(Y_t |\Ft)$ at $0$, which
becomes $h_t(\alpha^\circ)^{-1} f_u(F_u^{-1}(\tau))$ in our set-up.
Thus, the covariance matrix in Theorem~\ref{asymptotic normality}
can be also shown to coincide with that of
\citet{EngleManganelli2004} under the stationarity assumption.
    For the estimation of the asymptotic covariance matrix,
    we can employ the following estimator as given in \citet{Powell1991} and \citet{EngleManganelli2004}:
    \begin{align}\label{asymptotic covariance estimator}
        \tau (1-\tau) \hat{H}_n^{-1}(\tau) \hat{V}_n(\tau)  \hat{H}_n^{-1}(\tau),
    \end{align}
    where
    \begin{align*}
        \hat{V}_n(\tau) &= {1\over n} \sumtn \frac{\partial \tilde{q}_t(\hat{\theta}_n(\tau) )}{\partial \theta}
        \frac{\partial \tilde{q}_t(\hat{\theta}_n(\tau) )}{\partial \theta^T},\\
        \hat{H}_n(\tau)
        &= {1\over 2c_n n} \sumtn I\left( \left|Y_t - \tilde{q}_t(\hat\theta_n(\tau) ) \right| <c_n \right)
        \frac{\partial \tilde{q}_t(\hat{\theta}_n(\tau) )}{\partial \theta}
        \frac{\partial \tilde{q}_t(\hat{\theta}_n(\tau) )}{\partial \theta^T},
    \end{align*}
    and $c_n$ is a bandwidth satisfying $c_n\rightarrow 0$ and $\sqrt{n}c_n \to \infty$.
Theorem~3 of \citet{EngleManganelli2004} shows that the asymptotic
covariance estimator in (\ref{asymptotic covariance estimator}) is
consistent under certain regularity conditions including a more
stringent moment condition than those of Theorem~\ref{asymptotic
normality}.

\section{Quantile regression in ARMA-asymmetric GARCH models}\label{sec3}

In this section, we consider an application of the results in
Section~\ref{sec2} to the ARMA-AGARCH model taking into account
their broad usage in practice. We verify that the assumptions in
Section~\ref{sec2} hold in this model and deduce some more primitive
conditions to ensure the asymptotic properties of the quantile
regression estimator. The AGARCH model is well known to capture
asymmetric properties of conditional volatilities (see
\citealt{Glostenetal1993} and \citealt{Dingetal1993}) and to reflect
the phenomenon that past positive and negative returns impose a
different impact on current volatilities.

Let $Y_1,  \ldots, Y_n$ be  the {observations} from the
ARMA($P,Q$)-AGARCH($p,q$) model defined by
\begin{align}
    Y_t &= \phi^\circ_0 + \sum_{j=1}^P \phi^\circ_j Y_{t-j} + \sum_{i=1}^Q \psi^\circ_i \eps_{t-i} + \eps_t,    \label{ARMA model eqn}\\
    \eps_t &= h_t u_t,   \quad\quad
    h_t^2 = 1+ \sum_{i=1}^q \gamma^\circ_{1i} (\eps_{t-i}^+)^2 + \sum_{i=1}^q \gamma^\circ_{2i} (\eps_{t-i}^-)^2 + \sum_{j=1}^p \beta^\circ_j h_{t-j}^2,  \label{AGARCH model eqn}
\end{align}
where  {$a^+=\max\{a,0\}$, $a^-=\max\{-a,0\}$}, $\gamma_{li}^\circ
\geq 0$ for $l=1,2$, $i=1,\ldots,q$, $\beta_j^\circ\geq0$,
$j=1,\ldots,p$, and $\{u_t\}$ are i.i.d. random variables with
$Eu_t=0$ and $Eu_t^2=\omega^\circ$. Here, the AGARCH model in
(\ref{AGARCH model eqn}) is a reparameterized version as described
in {Section}~\ref{subsec21}. We denote by
${\varphi^\circ}=(\phi^\circ_0, \phi^\circ_1, \ldots, \phi^\circ_P,
\psi^\circ_1, \ldots, \psi^\circ_Q)^T$ and
${\vartheta^\circ}=(\gamma^\circ_{11}, \ldots, \gamma_{1q}^\circ,
\gamma_{21}^\circ, \ldots, \gamma_{2q}^\circ, \beta_1^\circ, \ldots,
\beta_p^\circ)^T$ the true ARMA and AGARCH model parameters,
respectively. Further, we denote characteristic polynomials by
$\phi^\circ(z) = 1-\sum_{j=1}^P \phi^\circ_j z^j$, $\psi^\circ(z) =
1+\sum_{i=1}^Q \psi^\circ_i z^i$, $\beta^\circ(z) =1-\sum_{j=1}^p
\beta^\circ_j z^j$, and $\gamma_l^\circ(z) = \sum_{i=1}^q
\gamma^\circ_{li} z^i$ for $l=1, 2$.

 {
The ARMA-AGARCH model (\ref{ARMA model eqn})--(\ref{AGARCH model
eqn}) admits the autoregressive representation in (\ref{loc-scale
model}) and satisfies {\bf (M1)} and {\bf(M2)} in
Section~\ref{subsec22} under some standard model assumptions: see
{\bf (A1)} and {\bf(A2)} below. \citet{Panetal2008} considered the
QML and least absolute deviation estimation for the
power-transformed and threshold GARCH models that include  AGARCH
models as a special case when the power equals 2. Theorem~5 of
\citet{Panetal2008} shows that
 equation~(\ref{AGARCH model eqn}) defines a unique strictly
stationary and ergodic solution if and only if the Lyapunov exponent
is negative: the specific formula of the exponent is given in
\citet[p.~373]{Panetal2008}. It can be seen that the Lyapunov
exponent remains the same after the reparameterization and the
condition is $E \left[ \log( \beta_1^\circ + \gamma_{11}^\circ
(u_{t}^+)^2 +\gamma^\circ_{21} (u_{t}^-)^2   ) \right]<0$ for the
AGARCH($1,1$) case. It also follows from the theorem that $\eps_t$
is a function of $\{ u_s: s\leq t\}$ and $h_t^2$ has the following
ARCH($\infty$) representation
\begin{align}\label{h_t^2 representation}
    h_t^2  = \left( 1-\sum_{j=1}^p \beta^\circ_j \right)^{-1} + \sum_{k=1}^\infty c_{1k}^\circ (\eps_{t-k}^+)^2
            + \sum_{k=1}^\infty c_{2k}^\circ (\eps_{t-k}^-)^2,
\end{align}
where $ \sum_{k=1}^\infty c_{lk}^\circ z^k =
\gamma_l^\circ(z)/\beta^\circ(z)$ for $|z|\leq 1$ and $l=1,2$. Given
the stationary AGARCH process $\{\eps_t:t\in\mathbb{Z} \}$,
assumption {\bf(A2)} below implies that $\{Y_t: t\in\mathbb{Z} \}$
is stationary and ergodic, and has the AR($\infty$) representation:
\begin{align}\label{f_t representation}
    Y_t = \left(1+\sum_{i=1}^Q \psi_i^\circ \right)^{-1}  \phi_0^\circ - \sum_{k=1}^\infty d_{k}^\circ Y_{t-k} +\eps_t,
\end{align}
where $1+\sum_{k=1}^\infty d_k^\circ z^k =
\phi^\circ(z)/\psi^\circ(z)$ for $|z|\leq 1$: see
\citet{BrockwellDavis1991}. Combining (\ref{h_t^2 representation})
and (\ref{f_t representation}), model (\ref{ARMA model
eqn})\textcolor[rgb]{1,0,0}{--}(\ref{AGARCH model eqn}) is shown to admit the autoregressive
representation in (\ref{loc-scale model}).
In addition, it follows from {\bf(A2)} that
 $Y_t$ is a function of $\{ \eps_s: s\leq t\}$,
 so is measurable with respect to the $\sigma$-field generated by $\{ u_s: s\leq
 t\}$. Therefore,
since $\mathcal{F}_t=\sigma\left( Y_s : s\leq t \right)$, {\bf (M2)}
is satisfied, and then,  the $\tau$th quantile of $Y_t$ conditional
on $\Ft$ is given by $Q_\tau(Y_t|\Ft)= f_t(\varphi^\circ) +
\xi^\circ(\tau) h_t(\varphi^\circ, \vartheta^\circ)$, where
$\xi^\circ(\tau)$ is the $\tau$th quantile of $u_1$,
$f_t(\varphi^\circ)= (1+\sum_{i=1}^Q \psi_i^\circ  )^{-1}
\phi_0^\circ - \sum_{k=1}^\infty d_{k}^\circ Y_{t-k}$, and
$h_t(\varphi^\circ, \vartheta^\circ)=h_t$ given in (\ref{h_t^2
representation}). }

 {
To estimate the conditional quantiles of $Y_t$,
we now construct the $\tau$th quantile regression estimator of
$\theta^\circ(\tau)=(\xi^\circ(\tau), {\varphi^\circ}^T,{\vartheta^\circ}^T)^T$.
Denote by $\theta=(\xi, \varphi^T, \vartheta^T)^T$ a parameter
vector which belongs to  $\Theta\subset \mathbb{R}^{P+Q+2} \times [0,\infty)^{p+2q}$.
If the parameter space $\Theta$ satisfies assumption {\bf (A4)} below,
given the stationary solution $\{Y_t: t\in\mathbb{Z}\}$ and
$\theta\in \Theta$,  we can define the stationary processes $\{
\eps_t(\varphi) \}$, $\{ h_t(\varphi, \vartheta) \}$ and
$\{q_t(\theta)\}$ consecutively as follows:
\begin{align}
    \eps_{t}(\varphi) &=   -\phi_0 + Y_t - \sum_{j=1}^P \phi_j Y_{t-j} - \sum_{i=1}^Q \psi_i \eps_{t-i}(\varphi);
    \label{def of eps_t(varphi)} \\
    h_t^2(\varphi, \vartheta) &= 1+ \sum_{i=1}^q \gamma_{1i} (\eps_{t-i}^+(\varphi))^2 + \sum_{i=1}^q \gamma_{2i} (\eps_{t-i}^-(\varphi))^2 + \sum_{j=1}^p \beta_j h_{t-j}^2(\varphi, \vartheta);  \label{def of h_t(vpt)}  \\
    q_t(\theta) &= \phi_0 + \sum_{j=1}^P \phi_j Y_{t-j} + \sum_{i=1}^Q \psi_i \eps_{t-i}(\varphi)
    + \xi h_t(\varphi, \vartheta) \label{def of AGARCH q_t(theta)}
\end{align}
for $t\in\mathbb{Z}$. Then, it can be seen that
$Q_\tau(Y_t|\Ft)=q_t(\theta^\circ(\tau))$. In practice,
$q_t(\theta)$ ($1\leq t\leq n$) cannot be computed
excepting the AR($P$)-asymmetric ARCH($q$) model case as mentioned
in Section~\ref{subsec22}. To compute an approximated conditional
quantile function, we define $\{ \tilde\eps_t(\varphi):t\geq1 \}$,
$\{ \tilde{h}_t(\varphi, \vartheta):t\geq1 \}$ and
$\{\tilde{q}_t(\theta):t\geq1\}$ by using the same equations
(\ref{def of eps_t(varphi)})--(\ref{def of AGARCH q_t(theta)}) for
$t\geq1$ and by setting the initial values
$\tilde\eps_t(\varphi)=0$, $Y_t=\phi(1)^{-1} \phi_0 $, and
$\tilde{h}_t^2(\vpt)=\beta(1)^{-1}$ for $t\leq0$. Here, we denote
$\phi(z)=1-\sum_{j=1}^P \phi_j z^j$ and $\beta(z) = \sum_{j=1}^p
\beta_j z^j$. Then, the $\tau$th quantile regression estimator
$\hat\theta_n(\tau)$  of $\theta^\circ(\tau)$ for the ARMA-AGARCH
model~(\ref{ARMA model eqn})\textcolor[rgb]{1,0,0}{--}(\ref{AGARCH model eqn}) is defined by
(\ref{def of QRE}). }

 {
To show the identifiability of the conditional quantile functions,
we introduce the following assumptions. Assumptions {\bf (A3)}(i)
and (ii) are the standard identifiability conditions for  AGARCH and
ARMA models, respectively. {\bf(A5)} assumes that $u_t$ is a
continuous random variable, which is common in real applications. }

\begin{description}
    \item[(A1)]  { $E|u_t|^\delta<\infty$ for some $\delta>0$ and the Lyapunov exponent associated with $\vartheta^\circ$ and $\{u_t\}$ is strictly negative.}
    \item[(A2)] All zeros of  $\phi^\circ(z)$ and $\psi^\circ(z)$ lie outside the unit disc.
    \item[(A3)] (i) $\gamma_1^\circ(1)+\gamma_2^\circ(1)>0$ and for each $l=1, 2$, $\gamma_{l}^\circ(z)$, $\beta^\circ(z)$ have no common zeros and
                $(\gamma_{lq}^\circ, \beta_p^\circ) \ne (0,0)$; \\
                (ii) $\phi^\circ(z)$ and $\psi^\circ(z)$ have no common zeros and $(\phi_P^\circ, \psi_Q^\circ) \ne (0,0)$.
    \item[(A4)] $\theta^\circ(\tau)\in\Theta $ and
                for all $\theta\in\Theta$, $\psi(z)=1+\sum_{i=1}^Q \psi_i z^i \ne 0$ for $|z|\leq1$
                and $\sum_{j=1}^p \beta_j <1$.
    \item[(A5)] The support of the distribution of $u_1$ is $\mathbb{R}$.
    \item[(A6)]  $E|\eps_t|<\infty$.

\end{description}

\begin{lemma}\label{identification lemma}
Suppose that  assumptions {\bf(A1)--(A5)} hold in the model~(\ref{ARMA
model eqn})--(\ref{AGARCH model eqn}) and
$q_t(\theta)=q_t(\theta^\circ(\tau))$ a.s. for some $t\in\mathbb{Z}$
and $\theta\in\Theta$. Then, we have the following:
\begin{itemize}
\item[(i)] If $\xi^\circ(\tau)\ne0$, then $\theta=\theta^\circ(\tau)$.
\item[(ii)] If $\xi^\circ(\tau)=0$, then it holds either that $\varphi=\varphi^\circ$ and $\xi=0$
or that $\phi_j=\phi_j^\circ, 1\leq j\leq P$, $\psi_i=\psi_i^\circ, 1\leq i\leq Q$, $\gamma_{1i}=\gamma_{2i}=0, 1\leq i\leq q$, and $\phi_0 + \xi \psi(1) \beta(1)^{-1/2} = \phi_0^\circ$.
\end{itemize}
\end{lemma}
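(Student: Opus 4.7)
My plan is to first rewrite the hypothesis $q_t(\theta)=q_t(\theta^\circ(\tau))$ using (\ref{def of AGARCH q_t(theta)}) as the fundamental identity
\[
\xi^\circ(\tau)\, h_t(\varphi^\circ,\vartheta^\circ)-\xi\, h_t(\vpt)=f_t(\varphi)-f_t(\varphi^\circ)\qquad\text{a.s.,}
\]
which by (M1) holds simultaneously for every $t\in\mathbb{Z}$. Next I shift to index $t+1$ and view both sides as functions of $Y_t$ after conditioning on $\mathcal{F}_{t-1}$. Since $Y_t=f_t(\varphi^\circ)+h_t(\varphi^\circ,\vartheta^\circ)\,u_t$, and (A5) together with (M2) imply that the conditional support of $Y_t$ given $\mathcal{F}_{t-1}$ is $\mathbb{R}$, continuity of both sides in $Y_t$ extends the a.s.\ equality to a functional identity on all of $\mathbb{R}$. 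A direct computation using $\partial\eps_t(\varphi)/\partial Y_t=1$ from (\ref{def of eps_t(varphi)}) shows that the left-hand side of the $(t{+}1)$-shifted identity is affine in $Y_t$, whereas via (\ref{def of h_t(vpt)}) the right-hand side is a difference of square roots of piecewise-quadratic functions with kinks at $Y_t=f_t(\varphi^\circ)$ and $Y_t=f_t(\varphi)$.

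For Part (i) ($\xi^\circ(\tau)\neq 0$), I would require the right-hand side to be affine on $\mathbb{R}$. Computing second derivatives in $Y_t$ on the smooth region $Y_t>\max\{f_t(\varphi),f_t(\varphi^\circ)\}$ and matching polynomial coefficients will force the two kinks to coincide, i.e.\ $f_t(\varphi)=f_t(\varphi^\circ)$ a.s., together with $\xi^2\gamma_{11}=\xi^\circ(\tau)^2\gamma_{11}^\circ$; the symmetric analysis on $Y_t<\min\{f_t(\varphi),f_t(\varphi^\circ)\}$ will give $\xi^2\gamma_{21}=\xi^\circ(\tau)^2\gamma_{21}^\circ$, and higher lags will be handled by iterating the same argument at times $t+2,t+3,\dots$. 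Since $f_t(\varphi)=f_t(\varphi^\circ)$ for every $t$ is equivalent to $\eps_t(\varphi)=\eps_t$, combining (\ref{def of eps_t(varphi)}) with (A2), (A3)(ii), and (A4) will deliver $\varphi=\varphi^\circ$ by the standard ARMA identifiability argument. The identity will then reduce to $\xi h_t(\varphi^\circ,\vartheta)=\xi^\circ(\tau)\,h_t(\varphi^\circ,\vartheta^\circ)$; squaring and substituting (\ref{def of h_t(vpt)}) on both sides, the constant terms force $\xi^2=\xi^\circ(\tau)^2$, and positivity of $h_t$ makes the signs agree, so $\xi=\xi^\circ(\tau)$. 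Matching coefficients of $(\eps_{t-i}^\pm)^2$ and $h_{t-j}^2$ via (A3)(i) will then deliver $\vartheta=\vartheta^\circ$.

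For Part (ii) ($\xi^\circ(\tau)=0$), the fundamental identity becomes $f_t(\varphi)-f_t(\varphi^\circ)=-\xi h_t(\vpt)$ a.s. If $\xi=0$, then $f_t(\varphi)=f_t(\varphi^\circ)$ and the ARMA argument above gives $\varphi=\varphi^\circ$. If $\xi\neq 0$, the $(t{+}1)$-shifted identity has an affine left-hand side in $Y_t$ while its right-hand side $-\xi h_{t+1}(\vpt)$ is a single square-root term; affinity on $\mathbb{R}$ will be possible only when $\gamma_{11}=\gamma_{21}=0$, and iterating the argument across $Y_{t-1},\ldots,Y_{t-q+1}$ will force $\gamma_{1i}=\gamma_{2i}=0$ for all $i$. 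By (A4), (\ref{def of h_t(vpt)}) will then have the unique stationary solution $h_t(\vpt)\equiv\beta(1)^{-1/2}$, so the identity reads $\eps_t(\varphi)-\eps_t=\xi\beta(1)^{-1/2}$, a deterministic constant. Expanding via (\ref{def of eps_t(varphi)}) and matching the coefficients of the linearly independent stationary processes $\{Y_{t-j}\}_{j=1}^P$ and $\{\eps_{t-i}\}_{i=1}^Q$ will give $\phi_j=\phi_j^\circ$ and $\psi_i=\psi_i^\circ$, and the constant term will yield $\phi_0+\xi\psi(1)\beta(1)^{-1/2}=\phi_0^\circ$. The main obstacle throughout is the kink-matching step in Part (i), where one must rigorously convert the identity between an affine function and a difference of shifted hyperbolic-type square roots into parameter equalities; the degenerate subcases in which some $\gamma_{li}$ or $\gamma_{li}^\circ$ vanish have to be propagated to higher-lag AGARCH terms, with (A3)(i) providing the nondegeneracy of $\vartheta^\circ$ needed to close the argument.
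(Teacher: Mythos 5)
Your plan follows essentially the same route as the paper's proof: rewrite $q_t(\theta)=q_t(\theta^\circ)$ as an identity between an affine term and a difference of square roots of piecewise quadratics, use {\bf (M2)} and {\bf (A5)} to upgrade the a.s.\ identity to a pointwise identity in a single real variable (you use $Y_t$ conditionally on $\mathcal{F}_{t-1}$; the paper divides by $h_{t-1}$ and uses $u_{t-1}$, an affine reparametrization of the same thing), extract parameter relations from asymptotic slopes, curvature and kink locations, and close with the ARMA coprimality condition {\bf (A3)}(ii) and the AGARCH coprimality condition {\bf (A3)}(i); your case split in (ii) also matches the paper. So the approach is not different, but as written the hard steps contain genuine gaps.

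First, your lag-one conclusion in part (i) --- that the kinks must coincide, giving $f_t(\varphi)=f_t(\varphi^\circ)$ together with $\xi^2\gamma_{11}=(\xi^\circ)^2\gamma_{11}^\circ$ --- is false precisely in the degenerate case $\gamma_{11}^\circ=\gamma_{21}^\circ=0$ (allowed by {\bf (A3)}(i) when $q\geq 2$): then the true side contributes no kink or curvature at lag one, and the identity only yields $a_1=\xi\gamma_{11}^{1/2}=\xi\gamma_{21}^{1/2}=0$ via the slope conditions obtained from $\lim_{x\to\pm\infty}x^{-1}f(x)=0$. You flag this as an ``obstacle'' but do not resolve it, whereas the paper's proof is organized around it: it reduces to the smallest lag $m$ with $c_{1m}^\circ+c_{2m}^\circ>0$ and then uses second-derivative vanishing off the kinks together with one-sided third-derivative limits at the kinks to force $B=0$, i.e.\ $\eps_{t-m}(\varphi)=\eps_{t-m}$, before invoking {\bf (A3)}(ii). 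Second, your device for higher lags --- ``iterating the same argument at times $t+2,t+3,\dots$'' --- does not work as stated: by stationarity it merely reproduces the lag-one information, and a forward scheme in which the time-$(t+k)$ equation is viewed as a function of $Y_t$ would require a multi-variable conditional-support argument (the intermediate $Y_{t+1},\dots,Y_{t+k-1}$ also move) that you have not set up; the paper instead iterates backward, reducing the identity after the lag-$m$ step to $\xi C_{t,m+1}^{1/2}=\xi^\circ D_{t,m+1}^{1/2}$, which involves only variables up to time $t-m-1$, and repeating with respect to the next innovation back. Third, in the final step of (i), ``the constant terms force $\xi^2=(\xi^\circ)^2$'' is incorrect: the constants are $\xi^2\beta(1)^{-1}$ and $(\xi^\circ)^2\beta^\circ(1)^{-1}$, so one only gets $\xi^2\gamma_l(z)/\beta(z)=(\xi^\circ)^2\gamma_l^\circ(z)/\beta^\circ(z)$, and {\bf (A3)}(i) must be used to conclude $\beta\equiv\beta^\circ$ before deducing $\xi=\xi^\circ$ (same sign since $h_t,h_t(\varphi,\vartheta)>0$) and $\gamma_l\equiv\gamma_l^\circ$; likewise ``matching coefficients of $h_{t-j}^2$'' in the finite recursion is not legitimate, since $h_{t-j}^2(\varphi^\circ,\vartheta)$ and $h_{t-j}^2$ are different processes --- the ARCH($\infty$) expansion is needed. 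Your sketch of part (ii) is essentially the paper's argument and is fine, modulo justifying that $a_0+\sum_{j\geq1}a_j\eps_{t-j}$ being a.s.\ constant forces $a_j=0$ for all $j\geq1$ before applying the ARMA minimality argument.
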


 {
Lemma~\ref{identification lemma} ensures that the identifiability assumption {\bf (C4)}
for the ARMA-AGARCH model holds if $\xi^\circ(\tau)\ne0$ and it shows that only AR and MA coefficients are identifiable
in the case of $\xi^\circ(\tau)=0$.
For the consistency of $\hat\theta_n(\tau)$, we added the finite
first moment condition of the AGARCH process, which is equivalent to
$E|Y_t|<\infty$ under {\bf(A2)}. An application of
Theorem~\ref{strong consistency} and Lemma~\ref{identification
lemma} yields the strong consistency addressed below. }

\begin{remark}\label{AGARCH(1,1) moment remark}
In the GARCH case, \citet{Ling2007} presented a necessary and
sufficient condition for the stationarity and fractional moments
including {\bf(A6)}. For the AGARCH($1,1$) case, such a condition
can be obtained by using Theorem~2.1 of \citet{Ling2007} and
Theorem~6 of \citet{Panetal2008}: for $m>0$, the AGARCH($1,1$)
process $\{\eps_t\}$ is strictly stationary with
$E|\eps_t|^{2m}<\infty$ if and only if $E \left( \beta_1^\circ +
\gamma_{11}^\circ (u_{t}^+)^2 +\gamma^\circ_{21} (u_{t}^-)^2
\right)^{m} <1 $. As in there, one can use Minkowski's inequality
for $m\geq1$ and the one: $(a+b)^m\leq a^m+b^m$, $a\geq0, b\geq0$
for $0<m<1$.
\end{remark}

\begin{theorem}\label{consistency for AGARCH}
    Suppose that assumptions {\bf (C2)} and {\bf(A1)--(A6)} hold in model~(\ref{ARMA
model eqn})--(\ref{AGARCH model eqn}). Then, we have the following:
\begin{itemize}
    \item[(i)] If  $\xi^\circ(\tau)\ne0$,  $\hat\theta_n(\tau) \to \theta^\circ(\tau)$ a.s. as
    $n\to\infty$.
    \item[(ii)] If  $\xi^\circ(\tau) = 0$,
    $(\hat\phi_{1n}(\tau), \ldots, \hat\phi_{Pn}(\tau), \hat\psi_{1n}(\tau), \ldots, \hat\psi_{Qn}(\tau)  )
    \to   (\phi^\circ_1, \ldots, \phi^\circ_P, \psi^\circ_1, \ldots, \psi^\circ_Q )$ a.s. as $n\to\infty$.
\end{itemize}

\end{theorem}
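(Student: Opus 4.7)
The plan is to apply Theorem~\ref{strong consistency} to the ARMA-AGARCH model: assumptions {\bf(M1)}, {\bf(M2)}, {\bf(C1)}, {\bf(C3)}, {\bf(C5)}, and {\bf(C6)} will be deduced from {\bf(A1)}--{\bf(A6)}, assumption {\bf(C2)} is already granted, and the identifiability condition {\bf(C4)} is supplied by Lemma~\ref{identification lemma}. Part~(i) will then follow directly, while part~(ii) requires an additional projection argument because Lemma~\ref{identification lemma}(ii) identifies only the AR and MA components.

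For the routine conditions I would proceed as follows. Assumption {\bf(A1)} together with Theorem~5 of \citet{Panetal2008} yields a strictly stationary and ergodic $\{\eps_t\}$ that is $\sigma(u_s:s\le t)$-measurable, and invertibility from {\bf(A2)} then propagates stationarity, ergodicity, and the same measurability to $\{Y_t\}$ through the AR$(\infty)$ representation~(\ref{f_t representation}); since $u_t$ is independent of $\{u_s:s<t\}$, this gives {\bf(M1)} and {\bf(M2)}. The full-support continuity of $u_1$ in {\bf(A5)} yields {\bf(C1)}, the lower bound $h_t^2\ge(1-\sum_j \beta_j^\circ)^{-1}>0$ read off from~(\ref{h_t^2 representation}) gives {\bf(C5)}, and continuity of $q_1(\theta)$ in {\bf(C3)(i)} is immediate from the smoothness of the recursions~(\ref{def of eps_t(varphi)})--(\ref{def of AGARCH q_t(theta)}). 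For {\bf(C3)(ii)} I would expand $\eps_t(\varphi)$ as an AR$(\infty)$ series in $\{Y_{t-k}\}$ using the invertibility from {\bf(A4)} and $h_t(\vpt)$ as an ARCH$(\infty)$ series in $\{\eps_{t-k}(\varphi)\}$; compactness of $\Theta$ gives uniform geometric coefficient bounds, and {\bf(A6)}'s $E|\eps_t|<\infty$ (equivalent to $E|Y_t|<\infty$ under {\bf(A2)}) delivers an integrable dominating function.

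The main obstacle will be {\bf(C6)}. The approximation $\tilde q_t$ differs from $q_t$ only through initial values, but the discrepancy propagates through two nested recursions. The error $\eps_t(\varphi)-\tilde\eps_t(\varphi)$ satisfies a linear recursion driven by $\psi$ and decays geometrically in $t$, uniformly in $\varphi\in\Theta$, by {\bf(A4)}; this error then feeds into the squared-innovation inputs of the AGARCH recursion~(\ref{def of h_t(vpt)}), which adds a second geometric layer via $\sum_j\beta_j<1$. Keeping the combined bound both summable in $t$ and uniform in $\theta\in\Theta$ is delicate and will require Minkowski-type inequalities together with the moment control from {\bf(A6)} (in the style of Remark~\ref{AGARCH(1,1) moment remark}), so that the expectation of the supremum of the discrepancy is summable and almost sure summability follows.

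Once the hypotheses of Theorem~\ref{strong consistency} are verified, part~(i) follows by invoking Lemma~\ref{identification lemma}(i) to supply {\bf(C4)}. For part~(ii), I would note that the uniform convergence of the objective function (obtained from the same ingredients) forces every a.s.\ limit point of $\hat\theta_n(\tau)$ into the set $M\subset\Theta$ of global minimizers of the a.s.\ limit, and Lemma~\ref{identification lemma}(ii) describes $M$ explicitly; projecting $M$ onto the $(\phi_1,\ldots,\phi_P,\psi_1,\ldots,\psi_Q)$-coordinates collapses it to the singleton of true values, which together with the compactness of $\Theta$ yields the desired partial consistency.
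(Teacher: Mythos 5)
Your overall architecture matches the paper's: part (i) is obtained by feeding {\bf(A1)}--{\bf(A6)} into Theorem~\ref{strong consistency} with {\bf(C4)} supplied by Lemma~\ref{identification lemma}(i), and part (ii) is handled exactly as in the paper, by a compactness/subsequence argument in which every a.s.\ limit point must annihilate the limit criterion $\Gamma$ and Lemma~\ref{identification lemma}(ii) then pins down the AR and MA coordinates. The verifications of {\bf(M1)}, {\bf(M2)}, {\bf(C1)}, {\bf(C3)}, {\bf(C5)} are also as in the paper.

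However, your plan for {\bf(C6)} contains a genuine gap. You propose to make ``the expectation of the supremum of the discrepancy'' summable in $t$ using Minkowski-type inequalities and the moment control of {\bf(A6)}. Under {\bf(A6)} only $E|\eps_t|<\infty$ (equivalently $E|Y_t|<\infty$) is available; finite second moments appear only in {\bf(A1')}, which is reserved for the asymptotic normality. But the initialization error in the scale recursion involves \emph{squared} residuals: bounding $\sup_{\theta\in\Theta}|h_t^2(\vpt)-\tilde h_t^2(\vpt)|$ produces tail terms of the form $\sum_{k\geq t}K\rho^k(\eps_{t-k}^{\pm}(\varphi))^2$, whose expectation may be infinite when $E\eps_t^2=\infty$, so $E\bigl[\sup_\theta|q_t(\theta)-\tilde q_t(\theta)|\bigr]$ need not even be finite and the expectation-summability route collapses. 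The paper's proof never takes expectations of the discrepancy: it works pathwise, showing $\sup_{\theta\in\Theta}|\eps_t(\varphi)-\tilde\eps_t(\varphi)|\leq V\rho^t$ and then $\sup_{\theta\in\Theta}|h_t(\vpt)-\tilde h_t(\vpt)|\leq \tfrac12\sup_{\theta\in\Theta}|h_t^2(\vpt)-\tilde h_t^2(\vpt)|\leq V\rho^t$ (the first inequality using $h_t,\tilde h_t\geq 1$), where $V$ is an a.s.\ finite random variable; its finiteness comes from $E\log^+|S_t|<\infty$ and Lemma~2.2 of \citet{Berkesetal2003}, which guarantees a.s.\ convergence of the weighted series $\sum_j\rho^j S_{-j+1}^4$ even without any second (or even first) moment for the squared terms. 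Then $\sum_t V\rho^t<\infty$ a.s.\ gives {\bf(C6)} directly. Your plan could be repaired by taking fractional powers before expectations (e.g.\ bounding $E[(\sup_\theta|h_t^2-\tilde h_t^2|)^{1/2}]$ via subadditivity of $x\mapsto x^{1/2}$ and $E|\eps_t|<\infty$), but as stated the step would fail; also note that Remark~\ref{AGARCH(1,1) moment remark} concerns moments of the stationary solution, not the initialization discrepancy, so it does not supply the needed tool.
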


To ensure the $\sqrt{n}$-consistency of $\hat\theta_n(\tau)$, moment
conditions {\bf (N3)}(ii) and (iii) are necessary. It turns out that
these conditions are implied by $EY_t^2<\infty$, or equivalently,
 $E\eps_t^2<\infty$. For the asymptotic normality, we assume the
following moment condition:

\begin{description}
    \item[(A1')] $E u_t^2<\infty$ and  $E (u_t^+)^2 \sum_{i=1}^q \gamma_{1i}^\circ  + E (u_t^-)^2 \sum_{i=1}^q \gamma_{2i}^\circ + \sum_{j=1}^p \beta_j^\circ <1$.
\end{description}

By Theorem~6.(ii) of \citet{Panetal2008}, {\bf (A1')} implies that
the model~(\ref{AGARCH model eqn}) has a stationary solution with $E
\eps_t^2 <\infty$.
Thus, {\bf (A1)} becomes redundant. Lemma~\ref{positive
definiteness} below ensures  assumption {\bf (N5)}, which is related
to the non-singularity of the asymptotic covariance matrix. The
proof of Lemma~\ref{positive definiteness} is deferred to the
supplementary material.

\begin{lemma}\label{positive definiteness}
    If assumptions {\bf(N2)}, {\bf (A1')}, and {\bf(A2)--(A5)} hold in the model~(\ref{ARMA model eqn})--(\ref{AGARCH model
    eqn})  {and $\xi^\circ(\tau)\ne0$,}
    then  $J(\tau)$ in (\ref{def of J(tau)}) and $V(\tau)$ in Theorem~\ref{asymptotic normality} are positive definite.
\end{lemma}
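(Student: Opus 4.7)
Both $J(\tau) = E[h_t^{-1} D_t D_t^T]$ and $V(\tau) = E[D_t D_t^T]$, where $D_t := \partial q_t(\theta^\circ(\tau))/\partial \theta$, are expectations of a.s.\ positive semi-definite matrices weighted by strictly positive factors ($1/h_t > 0$ a.s.\ since $h_t \geq 1$ by (\ref{AGARCH model eqn}), and $1$). Hence for any $\lambda \in \mathbb{R}^{P+Q+2+p+2q}$,
\begin{equation*}
\lambda^T J(\tau)\lambda = E\bigl[h_t^{-1}(\lambda^T D_t)^2\bigr], \qquad \lambda^T V(\tau)\lambda = E\bigl[(\lambda^T D_t)^2\bigr],
\end{equation*}
and each vanishes iff $\lambda^T D_t = 0$ a.s. My plan is therefore to reduce the lemma to showing that the components of $D_t$ admit no nontrivial a.s.\ linear relation, and then to unravel such a relation block by block using the ARMA--AGARCH structure.

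Partition $\lambda = (\lambda_\xi, \lambda_\varphi^T, \lambda_\vartheta^T)^T$ conformally to $\theta = (\xi, \varphi^T, \vartheta^T)^T$. Differentiating (\ref{def of AGARCH q_t(theta)}) at $\theta^\circ(\tau)$ gives
\begin{equation*}
\lambda^T D_t = \lambda_\xi h_t^\circ + \lambda_\varphi^T \frac{\partial f_t(\varphi^\circ)}{\partial \varphi} + \xi^\circ(\tau)\Bigl( \lambda_\varphi^T \frac{\partial h_t(\varphi^\circ,\vartheta^\circ)}{\partial \varphi} + \lambda_\vartheta^T \frac{\partial h_t(\varphi^\circ,\vartheta^\circ)}{\partial \vartheta} \Bigr),
\end{equation*}
where $h_t^\circ := h_t(\varphi^\circ,\vartheta^\circ)$. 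Assume this is $0$ a.s. I would then proceed in three stages. Stage~1 (the $\vartheta$-block): differentiating (\ref{AGARCH model eqn}) shows that $\partial h_t^2/\partial \vartheta$ satisfies a linear recursion driven by lagged $(\eps_{t-i}^+)^2$, $(\eps_{t-i}^-)^2$ whose solution admits an ARCH$(\infty)$ expansion with coefficients given by power-series expansions of rational functions of $\vartheta^\circ$. The identifiability assumption {\bf (A3)}(i) together with the full-support assumption {\bf (A5)}---which ensures that, on sets of positive probability, the sequences $(\eps_{t-i}^+)^2$ and $(\eps_{t-i}^-)^2$ are linearly independent in the same sense used in Lemma~\ref{identification lemma}---force the matching coefficient sequences of $\lambda_\vartheta^T \partial h_t/\partial\vartheta$ to vanish, and consequently $\xi^\circ(\tau)\lambda_\vartheta = 0$. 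Since $\xi^\circ(\tau)\ne 0$, we conclude $\lambda_\vartheta = 0$. Stage~2 (the $\varphi$-block): the residual identity $\lambda_\xi h_t^\circ + \lambda_\varphi^T\partial f_t/\partial\varphi + \xi^\circ(\tau)\lambda_\varphi^T \partial h_t/\partial\varphi = 0$ a.s. is attacked by matching AR$(\infty)$ coefficients in the $Y_{t-k}$, $\eps_{t-k}$ expansion coming from (\ref{f_t representation}); the ARMA identifiability {\bf (A3)}(ii) and the invertibility constraint {\bf (A4)} yield $\lambda_\varphi = 0$, with the $\partial h_t/\partial \varphi$ piece handled by a second coefficient-matching in its ARCH$(\infty)$ expansion. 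Stage~3: what remains is $\lambda_\xi h_t^\circ = 0$ a.s., and $h_t^\circ \geq 1$ forces $\lambda_\xi = 0$.

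The main obstacle is Stage~1 in the asymmetric setting: unlike ordinary GARCH, one must simultaneously disentangle the $\gamma_1$-block and the $\gamma_2$-block derivatives, whose generating coefficient sequences are tied to the separate power series of $\gamma_1^\circ(z)/\beta^\circ(z)$ and $\gamma_2^\circ(z)/\beta^\circ(z)$. Here assumption {\bf (A5)} is essential: the full support of $u_t$ guarantees that $\eps_{t-i}^+$ and $\eps_{t-i}^-$ are each non-degenerate on sets of positive probability, which is precisely what prevents a spurious linear dependence between the positive-part and negative-part derivative blocks (the same mechanism exploited in the proof of Lemma~\ref{identification lemma} via the technique of \citet{NohLee2013}). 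A secondary technical point is the need, throughout, for the moment condition {\bf (A1')}, which ensures $E\eps_t^2 < \infty$ and hence that all the coefficient-matching manipulations take place in $L^2$, so that $L^2$-linear dependence coincides with a.s.\ dependence; without $\xi^\circ(\tau)\ne 0$ the $\vartheta$-block collapses and positive definiteness legitimately fails, consistent with Theorem~\ref{consistency for AGARCH}(ii).
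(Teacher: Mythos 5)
Your opening reduction --- positive definiteness of $J(\tau)$ and $V(\tau)$ fails only if $\lambda^T\partial q_t(\theta^\circ(\tau))/\partial\theta=0$ a.s.\ for some $\lambda\neq 0$, so it suffices to rule this out --- is exactly the paper's starting point, and Stage~3 by itself is harmless. But the block-by-block schedule of Stages~1--2 has a genuine gap. The hypothesis is the single relation $\lambda_\xi h_t+\lambda_\varphi^T\,\partial f_t/\partial\varphi+\xi^\circ(\tau)\{\lambda_\varphi^T\,\partial h_t/\partial\varphi+\lambda_\vartheta^T\,\partial h_t/\partial\vartheta\}=0$ a.s., and you cannot conclude $\lambda_\vartheta=0$ in Stage~1 by ``matching the ARCH$(\infty)$ coefficients of $\lambda_\vartheta^T\partial h_t/\partial\vartheta$,'' because the other blocks load on exactly the same quantities: after multiplying by $2h_t$, the $\lambda_\xi$-term becomes $2\lambda_\xi h_t^2$, which is itself an affine combination of the lagged $(\eps_{t-i}^+)^2$, $(\eps_{t-i}^-)^2$ and $h_{t-j}^2$, and the $\varphi$-terms (still present at your Stage~1) contribute further terms involving $\eps_{t-k}^{\pm}$. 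In fact $\lambda_\xi$ and $\lambda_\vartheta$ can only be eliminated \emph{jointly}: the paper first removes $\lambda_\varphi$, reduces the identity to $2\lambda_\xi h_t^2+\xi^\circ\lambda_\vartheta^T\,\partial h_t^2/\partial\vartheta=0$ a.s., applies $\beta^\circ(B)$, and then uses the linear independence of $1,(\eps_{t-i}^+)^2,(\eps_{t-i}^-)^2,h_{t-j}^2$ under \textbf{(A3)}(i) (as in \citealt{FrancqZakoian2004}) to obtain $\lambda_\xi=0$ and $\lambda_\vartheta=0$ simultaneously; your residual Stage~3 identity $\lambda_\xi h_t=0$ never appears in that clean form unless this joint step has already been carried out.

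The second missing idea is what legitimizes ``coefficient matching'' at all: the a.s.\ relation is not linear in the lagged innovations --- it mixes terms linear in $\eps_{t-k}$ with square roots of affine combinations of their squares --- so one cannot simply equate coefficients in an AR$(\infty)$/ARCH$(\infty)$ expansion. The paper's device (the technique of \citealt{NohLee2013}, already used for Lemma~\ref{identification lemma}) is to divide by $h_{t-1}$, condition on $\mathcal{F}_{t-2}$, and use \textbf{(A5)} together with continuity to convert the a.s.\ relation into an identity $f(x)=0$ for all real $x$ with $\mathcal{F}_{t-2}$-measurable coefficients; calculus on this one-variable function (e.g.\ $\lim_{x\to0}\md f(x)/\md x=0$ and $\lim_{x\to0}\md^3 f(x)/\md x^3=0$) is what actually separates the linear-in-$u_{t-1}$ part from the square-root part, yielding $\lambda_\varphi^T\,\partial\eps_{t-1}(\varphi^\circ)/\partial\varphi=0$ a.s.\ and hence $\lambda_\varphi=0$ by the ARMA minimality \textbf{(A3)}(ii); the argument is then iterated over lags. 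This analytic mechanism is absent from your proposal, and your appeal to \textbf{(A1')} so that ``$L^2$-dependence coincides with a.s.\ dependence'' misstates its role: the argument is purely pathwise, and the moment condition is only needed so that the expectations defining $J(\tau)$ and $V(\tau)$ exist. So the proposal identifies the right target, but Stage~1 as scheduled would fail and the central separation device is missing.
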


\begin{remark}\label{identifiability proof remark}
      Lemmas~\ref{identification lemma} and \ref{positive definiteness}
    can be verified by using a technique in \citet{NohLee2013}.
    The method shares a common idea with that used for the verification of identifiability
    in \citet{StraumannMikosch2006} and \citet{LeeLee2012},
    but is seemingly more widely applicable.
\end{remark}

\begin{theorem}\label{Asymptotics for AGARCH}
    Suppose that  assumptions {\bf(C2)}, {\bf (N1)}, {\bf(N2)}, {\bf (A1')}, and {\bf(A2)--(A5)} hold
    in the model~(\ref{ARMA model eqn})--(\ref{AGARCH model eqn}). If  {$\xi^\circ(\tau)\ne0$}, then
    $\sqrt{n} ( \hat{\theta}_n(\tau) - \theta^\circ(\tau) )$ converges in distribution to the one in Theorem~\ref{asymptotic normality}.
\end{theorem}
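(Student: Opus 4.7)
The plan is to deduce Theorem~\ref{Asymptotics for AGARCH} from the general asymptotic normality statement Theorem~\ref{asymptotic normality} by verifying each of its hypotheses from the primitive conditions (C2), (N1), (N2), (A1$'$), (A2)--(A5). First, I observe that (A1$'$) implies both (A1) and (A6): by Jensen's inequality applied to the concave function $\log$, the Lyapunov-type quantity satisfies $E\log(\beta_1^\circ + \gamma_{11}^\circ(u_t^+)^2 + \gamma_{21}^\circ(u_t^-)^2)\le \log E(\cdot) <0$, while Theorem~6(ii) of Pan et al.\ yields $E\eps_t^2<\infty$. Hence Theorem~\ref{consistency for AGARCH}(i) applies under $\xi^\circ(\tau)\ne0$, delivering $\hat\theta_n(\tau)\to\theta^\circ(\tau)$ a.s.\ and, along the way, all of (M1), (M2), (C1)--(C6). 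The conditions (N1), (N2) are assumed outright and (N5) is Lemma~\ref{positive definiteness}; only (N3) and (N4) remain to be verified.

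For (N3)(i) and (N4)(i), I would note from (\ref{def of eps_t(varphi)}) that under (A4) the process $\eps_t(\varphi)$ is $C^\infty$ in $\varphi$, while $(\eps_{t-i}^\pm(\varphi))^2$ is $C^1$ in $\varphi$ with globally Lipschitz derivative (though not $C^2$, as emphasized in Remark~\ref{Lipschitz in AGARCH}). Iterating (\ref{def of h_t(vpt)}) and exploiting the uniform geometric decay of the ARCH$(\infty)$ coefficients $c_{lk}(\vartheta)$ over a small compact neighbourhood $N_\delta$ (forced by $\sum\beta_j<1$ in (A4)), $h_t^2(\vpt)$ is $C^1$ in $\vpt$ with Lipschitz derivative. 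Since $h_t^2\ge 1$, the same holds for $h_t=\sqrt{h_t^2}$, and therefore for $q_t(\theta)$ in (\ref{def of AGARCH q_t(theta)}). The identical conclusions hold for $\tilde q_t(\theta)$ obtained from the truncated recursions with zeroed initial conditions.

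For the moment conditions (N3)(ii)--(iii), the crucial ingredient is the standard GARCH-type dominance $(\eps_{t-i}^\pm(\varphi))^2 \le C\, h_t^2(\vpt)$, which follows directly from (\ref{def of h_t(vpt)}). Combined with $h_t^2\ge 1$, this yields the uniform bound $\sup_{\theta\in N_\delta}\|\partial h_t/\partial\theta\|\le C\, h_t$, and since $E h_t^2<\infty$ by the ARCH$(\infty)$ representation (\ref{h_t^2 representation}) together with $E\eps_t^2<\infty$ from (A1$'$), one obtains the required $L^2$ bound for $\sup_{\theta\in N_\delta}\|\partial q_t/\partial\theta\|$; the AR($\infty$) representation (\ref{f_t representation}) simultaneously controls the $\partial f_t/\partial\varphi$ contribution via $EY_t^2<\infty$. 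A similar iteration on (\ref{def of h_t(vpt)}) produces an analogous bound of the form $\sup_{\theta\in N_\delta}\|\partial^2 q_t/\partial\theta\partial\theta^T\|\le C(1+h_t^2)$, which is in $L^1$, giving (N3)(iii). For (N4)(ii)--(iii), the differences $\partial q_t/\partial\theta-\partial\tilde q_t/\partial\theta$ and their second-order analogues satisfy stable linear recursions forced solely by the initial-value discrepancy at $t\le 0$; hence they decay at a geometric rate in $t$ and are a.s.\ summable, as in the argument of Pan et al.\ for threshold GARCH derivatives.

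The main technical obstacle is the failure of $C^2$-smoothness of $h_t(\vpt)$ at parameter values where some $\eps_{t-i}(\varphi)=0$: a classical second-order Taylor expansion of the objective function is therefore unavailable. This is precisely why (N3)(i) and (N4)(i) are phrased in terms of Lipschitz continuity of the gradient rather than $C^2$-regularity, and why Theorem~\ref{asymptotic normality} is proved via the Huber--Pollard-style local quadratic approximation (Lemma~\ref{BR lemma}) instead of pointwise differentiation. Once (N3) and (N4) are discharged as outlined and Lemma~\ref{positive definiteness} supplies (N5), a direct application of Theorem~\ref{asymptotic normality} completes the proof.
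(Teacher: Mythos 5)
Your proposal is correct and follows essentially the same route as the paper's proof: reduce to Theorem~\ref{asymptotic normality} by invoking Theorem~\ref{consistency for AGARCH}(i) (with (A1$'$) making (A1), (A6) redundant via Theorem~6(ii) of Pan et al.), take (N5) from Lemma~\ref{positive definiteness}, and verify (N3)--(N4) through the explicit derivatives of $q_t(\theta)$, the ARCH($\infty$)/AR($\infty$) representations with uniformly geometric coefficient decay on $N_\delta$, the Lipschitz (rather than $C^2$) regularity caused by $(\eps_{t-i}^{\pm}(\varphi))^2$, and geometric decay of the $q_t$--$\tilde q_t$ derivative differences. One imprecision: the intermediate dominance $\sup_{\theta\in N_\delta}\bigl\|\partial h_t(\vpt)/\partial\theta\bigr\|\leq C\,h_t(\vpt)$ is not valid for the $\varphi$-derivatives, since $\partial h_t^2(\vpt)/\partial\varphi$ involves $\partial\eps_{t-k}(\varphi)/\partial\varphi$ (infinite weighted sums of past $Y$'s) that are not controlled by $h_t(\vpt)$; the paper instead uses $a_k x^{\pm}/\{a_k (x^{\pm})^2+\cdots\}^{1/2}\leq a_k^{1/2}$ to bound $\|\partial h_t(\vpt)/\partial\varphi\|$ by a geometrically weighted sum of $\|\partial\eps_{t-k}(\varphi)/\partial\varphi\|$, an $L^2$ stationary majorant under $EY_t^2<\infty$ from (A1$'$), which is what (N3)(ii) actually requires (your $h_t$-dominance is fine for the $\vartheta$-derivatives, where the paper appeals to Lemma~3.2 of Berkes et al. on the neighborhood where the AGARCH coefficients are bounded away from zero, as guaranteed by (N2)).
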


\begin{remark}\label{Lipschitz in AGARCH}
     {In view of (\ref{def of h_t(vpt)}) and (\ref{def of AGARCH q_t(theta)}),
    it can be shown that $\partial q_t(\theta)/\partial\theta$ is Lipschitz continuous but
    $\partial^2 q_t(\theta)/\partial\varphi\partial\varphi^T $ is discontinuous: see the proof of Theorem~\ref{Asymptotics for AGARCH}.
    In the pure AGARCH and ARMA-GARCH model cases,  $q_t(\theta)$ is twice continuously
    differentiable.}
\end{remark}

It is notable that the quantile regression yields a
$\sqrt{n}$-consistent estimation of ARMA-AGARCH parameters under the
mild moment condition of {\bf (A1')}, which is a finite second
moment condition on both the innovations and observations.
It is well known in the GARCH model that the popular Gaussian QMLE
is $\sqrt{n}$-consistent under $Eu_t^4<\infty$ but converges at a
slower rate if the innovation is heavy-tailed, that is,
$Eu_t^4=\infty$: see \citet{HallYao2003}. This fact also holds in
the reparameterized GARCH model as in Section~\ref{subsec21}: see
Section~5 of \citet{Fanetal2014}. In fact, the fourth moment
condition  of innovations is indispensable for obtaining the usual
$\sqrt{n}$-rate in various GARCH-type models: see
\citet{StraumannMikosch2006} and \citet{Panetal2008}. Further, for
mean-variance models such as the ARMA-GARCH model, the Gaussian QML
estimation additionally requires a finite fourth moment of
observations, that is, $E Y_t^4<\infty$: see
\citet{FrancqZakoian2004} and \citet{BardetWintenberger2009}.

 {In the estimation of GARCH-type models, researchers have paid considerable attention to
relaxing moment conditions and seeking robust methods against
heavy-tailed distributions of innovations or observations. For
example, \citet{BerkesHorvath2004} showed that the
$\sqrt{n}$-consistency of the two-sided exponential QMLE requires
only $Eu_t^2<\infty$ in the GARCH model, and \citet{ZhuLing2011}
verified it under $EY_t^3<\infty$ in the ARMA-GARCH model. These
moment conditions can be additionally relaxed by using weighted
likelihoods (\citealt{ZhuLing2011}) or other non-Gaussian
likelihoods (\citealt{BerkesHorvath2004, Fanetal2014}). In view of
these results, it can be reasoned that quantile regression approach
in this study also makes a reasonably good robust method in a broad
class of time series models. }

\begin{remark}
      { As mentioned in Section~\ref{subsec23}, the
    quantile regression for the location-scale models in (\ref{loc-scale model})
     requires a different conditional quantile specification when $\xi^\circ(\tau)=0$.
     Thus,
    it is necessary to test whether $\xi^\circ(\tau)$ is $0$ or not, especially for the values of $\tau$ around 0.5:
    if $\xi^\circ(\tau)=0$, the conditional quantile of $Y_t$ is just the conditional location $f_t(\alpha^\circ)$
    and the results of \citet{Weiss1991} can be applied.
    Under the null hypothesis of this testing problem, we can see that the other parameters are not identified by Lemma~\ref{identification lemma}.
    Inference in a similar situation can be found in \citet{FrancqHorvathZakoian2010} and references therein.
    We leave the development of such a test as a task of our future study. }
\end{remark}

\section{Simulation results}\label{sec4}

In this simulation study, we examine a finite sample performance of
the quantile regression estimation and illustrate its robustness
against the heavy-tailed distribution of innovations. The samples
are generated from the following ARMA($1,1$)-AGARCH($1,1$) model:
\begin{align*}
\begin{aligned}
    Y_t &= \phi_0 + \phi_1 Y_{t-1} + \psi_1 \eps_{t-1} + \eps_t, \\
    \eps_t &= h_t u_t, \quad h_t^2 = 1 + \gamma_{11} (\eps_{t-1}^+)^2 + \gamma_{21} (\eps_{t-1}^-)^2 + \beta_1 h_{t-1}^2
\end{aligned}
\end{align*}
with $Eu_t=0$, $E u_t^2=\omega$ and $( \phi_0, \phi_1, \psi_1,
\gamma_{11}, \gamma_{21}, \beta_1, \omega) = (0.04, 0.2, 0.1, 0.5,
1.25, 0.7, 0.2)$. As for the distribution of  innovation
$\omega^{-1/2} u_t$, we consider the two cases:
\begin{itemize}
    \item[(a)] standard normal distribution;
    \item[(b)] standardized skewed $t$-distribution with $4$ degrees of freedom and $0.71$ skew parameter.
\end{itemize}
The skewness of distribution (b) is approximately $-2$: see
\citet{FernandezSteel1998}. By using Remark~\ref{AGARCH(1,1) moment
remark}, we can check the stationarity and moment condition of
$\eps_t$ for the two distributions. For case (a), the AGARCH($1,1$)
process has a finite forth moment since $E\left( \beta_1+\gamma_{11}
(u_t^+)^2 + \gamma_{21} (u_t^-)^2 \right)^2 \approx 0.84<1$. For
case (b), it only holds that $Eu_t^2<\infty$ and $E\eps_t^2<\infty$
since $E\left( \beta_1+\gamma_{11} (u_t^+)^2 + \gamma_{21} (u_t^-)^2
\right)^2 \approx 2.45 > 1$ and $E\left( \beta_1+\gamma_{11}
(u_t^+)^2 + \gamma_{21} (u_t^-)^2 \right)  \approx 0.90 < 1$
according to a Monte Carlo computation.

The sample size $n$ is 2,000 and the repetition number is always
$1,000$. In computing quantile regression estimates, the Nelder-Mead
method in \verb"R" is employed and  the Gaussian-QML estimates are
used as initial values for the optimization process.

\begin{table}[ht]
\begin{center}
\caption{Performance of the quantile regression estimators for (a) $N(0,1)$}\label{tbl:Normal}
\vspace{1ex}
\begin{tabular}{ll|c c c c c c c } \hline
\multicolumn{2}{l}{}&\multicolumn{1}{c}{$\xi(\tau)$}&\multicolumn{1}{c}{$\phi_0$}&\multicolumn{1}{c}{$\phi_1$}&\multicolumn{1}{c}{$\psi_1$}
&\multicolumn{1}{c}{$\gamma_{11}$}&\multicolumn{1}{c}{$\gamma_{21}$}&\multicolumn{1}{c}{$\beta_1$}\\
\hline
$\tau=0.05$&Bias&-0.008&~0.035&~0.002&~0.010&~0.193&~0.366&-0.017\\
&SD&~0.230&~0.457&~0.238&~0.199&~0.583&~1.543&~0.097\\
&ASD&~0.262&~0.494&~0.227&~0.197&~0.325&~1.095&~0.078\\
\hline
$\tau=0.25$&Bias&-0.037&~0.086&~0.000&~0.005&~0.402&~0.398&-0.042\\
&SD&~0.238&~0.492&~0.170&~0.141&~0.982&~1.557&~0.166\\
&ASD&~0.420&~0.833&~0.172&~0.143&~0.848&~3.181&~0.123\\
\hline
$\tau=0.75$&Bias&~0.063&-0.122&-0.013&~0.015&~0.188&~0.688&-0.032\\
&SD&~0.275&~0.478&~0.160&~0.134&~0.899&~1.438&~0.146\\
&ASD&~0.350&~0.609&~0.175&~0.145&~1.098&~1.702&~0.123\\
\hline
$\tau=0.95$&Bias&-0.003&-0.008&~0.002&~0.012&~0.172&~0.407&-0.015\\
&SD&~0.257&~0.385&~0.219&~0.186&~0.823&~0.912&~0.087\\
&ASD&~0.296&~0.428&~0.243&~0.205&~0.521&~0.567&~0.077\\
\hline
\end{tabular}
\vspace{3mm}
\end{center}
\end{table}

Tables~\ref{tbl:Normal} and \ref{tbl:t4} exhibit the empirical
biases and standard deviations (SD) of the quantile regression
estimates at $\tau\in \{ 0.05, 0.25, 0.75, 0.95\}$ for cases (a) and
(b), respectively. We also report the asymptotic standard deviations
(ASD) derived from Theorem~\ref{asymptotic normality} by using  the
true parameter values and $f_u(F^{-1}_u(\tau))$. It is remarkable
that AGARCH parameters are estimated more accurately at the tail
part ($\tau=0.05, 0.95$) than at the middle part ($\tau=0.25,
0.75$). Tables~\ref{tbl:Normal} and \ref{tbl:t4} suggest that the
quantile regression method is robust against the heavy-tailed
distribution.

\begin{table}[ht!]
\begin{center}
\caption{Performance of the quantile regression estimators for (b) standardized skewed $t_4$}\label{tbl:t4}
\vspace{1ex}
\begin{tabular}{ll|c c c c c c c } \hline
\multicolumn{2}{l}{}&\multicolumn{1}{c}{$\xi(\tau)$}&\multicolumn{1}{c}{$\phi_0$}&\multicolumn{1}{c}{$\phi_1$}&\multicolumn{1}{c}{$\psi_1$}
&\multicolumn{1}{c}{$\gamma_{11}$}&\multicolumn{1}{c}{$\gamma_{21}$}&\multicolumn{1}{c}{$\beta_1$}\\
\hline
$\tau=0.05$&Bias&-0.024&~0.075&~0.010&~0.023&~0.572&~0.514&-0.047\\
&SD&~0.349&~0.755&~0.393&~0.338&~1.120&~2.001&~0.148\\
&ASD&~0.432&~0.862&~0.489&~0.421&~0.686&~1.679&~0.111\\
\hline
$\tau=0.25$&Bias&-0.064&~0.122&-0.008&~0.009&~0.508&~0.307&-0.065\\
&SD&~0.244&~0.494&~0.195&~0.174&~1.159&~1.529&~0.201\\
&ASD&~0.701&~1.381&~0.213&~0.176&~2.211&~7.147&~0.159\\
\hline
$\tau=0.75$&Bias&~0.031&-0.073&-0.004&~0.006&~0.110&~0.546&-0.008\\
&SD&~0.181&~0.359&~0.125&~0.108&~0.823&~1.357&~0.080\\
&ASD&~0.205&~0.370&~0.128&~0.107&~0.772&~1.221&~0.073\\
\hline
$\tau=0.95$&Bias&-0.012&~0.007&~0.002&~0.007&~0.122&~0.418&-0.008\\
&SD&~0.210&~0.358&~0.212&~0.188&~0.838&~0.976&~0.073\\
&ASD&~0.255&~0.425&~0.249&~0.209&~0.632&~0.664&~0.070\\
\hline
\end{tabular}
\vspace{3mm}
\end{center}
\end{table}

\begin{table}[ht!]
\begin{center}
\caption{The RMSE ratio of the Gaussian-QMLE to the quantile regression estimators}\label{tbl:RMSE ratio}
\vspace{1ex}
\begin{tabular}{lr|c c c c c c } \hline
\multicolumn{2}{l}{}&\multicolumn{1}{c}{$\phi_0$}&\multicolumn{1}{c}{$\phi_1$}&\multicolumn{1}{c}{$\psi_1$}
&\multicolumn{1}{c}{$\gamma_{11}$}&\multicolumn{1}{c}{$\gamma_{21}$}&\multicolumn{1}{c}{$\beta_1$}\\
\hline
Normal &$\tau=0.05$&0.062&0.324&0.398&0.235&0.158&0.437\\
&0.25&0.057&0.452&0.561&0.136&0.156&0.252\\
&0.75&0.058&0.480&0.588&0.157&0.157&0.289\\
&0.95&0.074&0.353&0.425&0.172&0.251&0.488\\
\hline
Skewed $t_4$ &$\tau=0.05$&0.040&0.297&0.347&0.370&0.506&0.556\\
&0.25&0.060&0.597&0.677&0.367&0.671&0.409\\
&0.75&0.083&0.935&1.091&0.560&0.715&1.078\\
&0.95&0.085&0.550&0.628&0.549&0.986&1.182\\
\hline
\end{tabular}
\vspace{3mm}
\end{center}
\end{table}

We demonstrate this robust feature in comparison with Gaussian-QMLE.
To do so, we calculate the relative efficiency defined as the ratio
of the root mean squared error (RMSE) of the Gaussian-QMLE to that
of the quantile regression estimates. Table~\ref{tbl:RMSE ratio}
shows that the relative efficiency increases in  the skewed
$t$-distribution case.

 {
It is noteworthy that the quantile regression method for pure
volatility models is identical to the CAViaR method of
\citet{EngleManganelli2004}: see Remark~9 of \citet{LeeNoh2013}. The
performance of CAViaR method has been reported in many empirical
studies. It would be interesting to examine the the performance of
our method for various location-scale models in VaR forecasting as
well. We leave this as a task of our future study. }

\section{A real data analysis}\label{sec5}

In this section, we showcase a real example of the quantile
regression for the AR($1$)-AGARCH($1,1$) model by using the daily
log returns (computed as 100 times the difference of the log prices)
of the Hong Kong Hang Seng Index series taken from Datastream from
January 4, 1993 to December 31, 2012, consisting of 5216
observations.

\begin{table}[ht!]
\begin{center}
\caption{Gaussian-QML estimation results based on the reparameterized AR($1$)-AGARCH($1,1$) model}\label{QML fits}
\vspace{1ex}
\begin{tabular}{l|cccccc}
\hline
            &   $\phi_0$ & $\phi_1$ &  $\gamma_{11}$   &  $\gamma_{21}$  &  $\beta_1$  & $\omega$   \\
\hline
 Estimates  & 0.0360 & 0.0476 & 1.2979 & 4.4150 & 0.9214 &0.0242    \\

 S.E.  &0.0180 &0.0124 &0.4659 &0.8776 &0.0102 &0.0065       \\
 $p$-values  &0.0454 &0.0001 &0.0053 &0.0000 &0.0000 &0.0002   \\
 \hline
\end{tabular}
\end{center}
\end{table}

\begin{figure}[ht!]
  \centerline{
 \includegraphics[width=16cm]{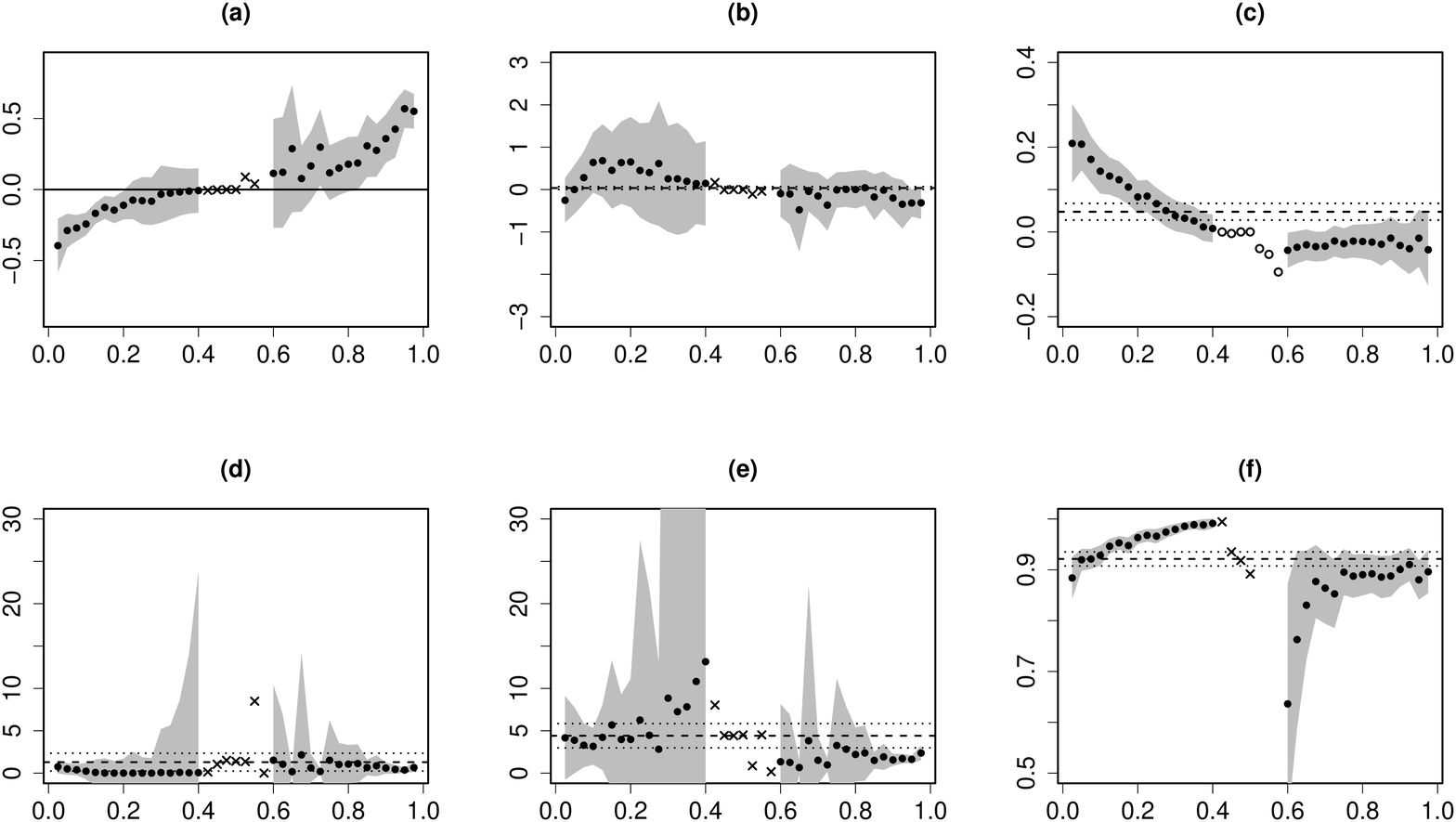}
 \vspace{0ex}
 }
\caption{Quantile regression estimates of (a) $\xi(\tau)$ (b)
$\phi_0$ (c) $\phi_1$ (d) $\gamma_{11}$ (e) $\gamma_{21}$  and (f)
$\beta_1$ at every $2.5\%$ probability level. The shaded region
illustrates $90\%$ confidence intervals. The dashed and dotted lines
represent the corresponding QML estimates and the $90\%$ confidence
intervals, respectively. For $\tau\in(0.4, 0.6)$, circles in (c)
denote consistent estimates while crosses in others denote
inconsistent ones. }\label{quantile reg fit}
\end{figure}

Table~\ref{QML fits} reports the Gaussian-QML estimates of the
parameters in model (\ref{ARMA model eqn})--(\ref{AGARCH model eqn})
with $P=1$, $Q=0$ and $p=q=1$.
The large value of $\hat\gamma_{21}$ indicates the asymmetry in
volatility, that is, negative values of returns result in a bigger
increase in future volatility than positive values. The significance
of the AR coefficient indicates that the conditional location-scale
model is better fitted to the data than pure volatility models.
Meanwhile, using the parameter estimates and residuals, it is
obtained that $n^{-1}\sumtn\{ \hat\beta_1 + \hat{\gamma}_{11}
(\hat{u}_t^+)^2 + \hat{\gamma}_{21} (\hat{u}_t^-)^2 \}\approx
0.9918$ with standard error $0.0023$, which seemingly indicates the
validity of {\bf(A1')}.

Figure~\ref{quantile reg fit} illustrates the results of the
quantile regression estimation at every $2.5\%$ probability level.
The confidence intervals are obtained based on the asymptotic
covariance estimator in (\ref{asymptotic covariance estimator}). The
test for $\xi^\circ(\tau)=0$ is not available at present, but one
can guess that $\xi^\circ(\tau)$ would be $0$ at some $\tau\in(0.4,
0.6)$ by a rule-of-thumb. Then, owing to Theorem~\ref{consistency
for AGARCH} and Lemma~\ref{identification lemma}, it can be
determined that the estimates at the $\tau$ excepting the AR
coefficients are inconsistent. Overall, our findings show that the
quantile regression estimates have the values similar to the QMLEs,
but some remarkable differences exist between both
$\hat\phi_1(\tau)$ and $\hat\phi_1^{QML}$ and $\hat\beta_1(\tau)$
and $\hat\beta_1^{QML}$ for the lower values of $\tau$. For
instance, it can be seen from (c) of Figure~\ref{quantile reg fit}
that the values of $\hat\phi_1(\tau)$ are more deviated from the
$\hat\phi_1^{QML}$ estimate in the lower conditional quantiles.
Further, it can be reasoned from (f) of Figure~\ref{quantile reg
fit} that the asymmetry of volatility still remains even after
fitting the AGARCH model.

\appendix

\section{Appendix: proofs of Theorems~\ref{strong consistency} and \ref{asymptotic normality}}\label{appendix}
\setcounter{equation}{0}
\setcounter{lemma}{0}

\renewcommand{\theequation}{A.\arabic{equation}}
\renewcommand{\thelemma}{A.\arabic{lemma}}
\renewcommand{\thecorollary}{A.\arabic{corollary}}


For simplicity, we suppress the dependence of $\theta^\circ(\tau)$
and $\xi^\circ(\tau)$ on  $\tau$. Further, we denote $\X_t=(Y_t,
Y_{t-1}, \ldots )$ and define
\begin{align*}
    g(\X_t, \theta) &= \rho_\tau(Y_t - q_t(\theta)), &
    G_n(\theta) &={1\over n} \sumtn g(\X_t, \theta), &
    \tilde{G}_n(\theta) &= {1\over n} \sum_{i=1}^n \rho_\tau(Y_t - \tilde{q}_t(\theta)),
\end{align*}
where $q_t(\theta)$ and $\tilde{q}_t(\theta)$ are those defined in
Section~\ref{subsec21}.

In the proof of the asymptotic normality, the main difficulty arises
from the lack of smoothness and stationarity of the objective
function $\tilde{G}_n(\theta)$. Lemma~\ref{G_n(theta) approximation}
below validates a quadratic approximation of $G_n(\theta)$ by
applying Lemma~\ref{BR lemma} which deals with the lack of
smoothness and extends Lemma~3 of \cite{Huber1967}. Here, we can obtain
$\hat\theta_n-\theta^\circ=O_p(n^{-1/2})$ from the approximation.
Then, Lemma~\ref{tilde G approximation} below justifies a quadratic
expansion of $\tilde{G}_n(\theta)$ in a $n^{-1/2}$-neighborhood of
$\theta^\circ$, which yields the desired asymptotic normality
result.

\ \\ \indent \textbf{Proof of Theorem~\ref{strong consistency}}. To
establish the consistency, we show that
$\tilde{G}_n(\theta)-\tilde{G}_n(\theta^\circ)$ converges uniformly
to a continuous function on $\Theta$ a.s. and the limit has a unique
minimum at $\theta^\circ$. Let $C(\Theta)$ be the space of
continuous real-valued functions on $\Theta$ equipped with the
sup-norm. Since $\{ Y_t \}_{ t\in\mathbb{Z} }$ is strictly
stationary ergodic, {\bf (C3)} implies that $\{ g(\X_t, \cdot) \}_{
t\in\mathbb{Z} }$ is a stationary ergodic sequence of
$C(\Theta)$-valued random elements: see Proposition~2.5 of \cite{StraumannMikosch2006}.
Note that due to the Lipschitz continuity of
$\rho_\tau(\cdot)$ and {\bf (C3)}(ii), we have $
    E \left[ \sup_{\theta\in\Theta} \left| g(\X_1, \theta) - g(\X_1, \theta^\circ) \right| \right]  <\infty
$. Hence, by applying the ergodic theorem (see Theorem~2.7 of \citealt{StraumannMikosch2006}), it follows that
\begin{align*}
    \sup_{\theta\in\Theta} \left| G_n(\theta)-G_n(\theta^\circ) - \Gamma(\theta) \right| \xar[]{} 0 \quad\text{a.s.},
\end{align*}
where
$
    \Gamma(\theta) = E \left[ g(\X_1, \theta) - g(\X_1, \theta^\circ) \right]
$.
Also, from {\bf (C6)}, we have
\begin{align}\label{difference of tilda G and G}
    \sup_{\theta\in\Theta} \left| \tilde{G}_n(\theta) - \tilde{G}_n(\theta^\circ) - \left\{ G_n(\theta)- G_n(\theta^\circ) \right\} \right|
    \leq 2 \sup_{\theta\in\Theta}  {1\over n} \sumtn \left| q_t(\theta) - \tilde{q}_t(\theta) \right|
    \to 0  \quad\text{a.s.}
\end{align}

Now, we show that $\Gamma(\theta)$ is uniquely minimized at
$\theta=\theta^\circ$. Recall that $Y_t = f_t({\alpha^\circ}) +
h_t({\alpha^\circ}) u_t$ and $q_t(\theta^\circ)=f_t({\alpha^\circ})
+ \xi^\circ h_t({\alpha^\circ})$. By {\bf (C5)} and the fact that
$\rho_\tau(cx)=c \rho_\tau(x)$, $c>0$, we have
\begin{align*}
    \Gamma(\theta) &= E \left[  h_t({\alpha^\circ}) \left\{ \rho_\tau \left( u_t - \frac{q_t(\theta)-f_t({\alpha^\circ})}{h_t({\alpha^\circ})} \right)
    - \rho_\tau \left( u_t - \xi^\circ \right) \right\} \right] \\
    &= E \left[ h_t({\alpha^\circ}) H\left( \frac{q_t(\theta)-f_t({\alpha^\circ})}{h_t({\alpha^\circ})} \right) \right],
\end{align*}
where $ H(x) \equiv E \left[ \rho_\tau(u_t - x ) - \rho_\tau (u_t -
\xi^\circ) \right] $. It can be easily checked that under {\bf
(C1)}, $H(x)\geq 0$ and $H(x)=0$ if and only if $x=\xi^\circ$: see
(2.9) of \cite{BassettKoenker1986}. Hence, $\Gamma(\theta) \geq 0$
and $\Gamma(\theta)=0$ if and only if $\left\{
{q_t(\theta)-f_t({\alpha^\circ})}
\right\}/{h_t({\alpha^\circ})}=\xi^\circ$ a.s. for some
$t\in\mathbb{Z}$. Since {\bf (C4)} directly indicates that
$\Gamma(\theta)$ has a unique minimum at $\theta^\circ$, the theorem
is established by a standard compactness argument. \hfill \fbox{}

\begin{lemma}\label{BR lemma}
    Let $\{ \mathcal{G}_t : t\in\mathbb{Z} \}$ be a sequence of nondecreasing $\sigma$-fields.
    Suppose that $\{Z_t : t\in\mathbb{Z} \}$ is a strictly stationary ergodic sequence of random variables
    and $Z_t$ is measurable with respect to $\mathcal{G}_t$, say $Z_t\in \mathcal{G}_t$, for all $t$.
    Define $\Z_t=(Z_t, Z_{t-1}, \ldots )$.
    For $\theta^\circ\in\mathbb{R}^d$ and $\theta$ near $\theta^\circ$, let
    $f(\cdot, \theta):\mathbb{R}^\infty \to \mathbb{R}$ be measurable functions such that $f(\cdot, \theta^\circ)=0$.
Suppose that the following conditions hold:
\begin{itemize}
  \item[(a)] $E \left[ \sup_{\|h\|\leq R} \left| f(\Z_t, \theta^\circ+h) \right|^2 \right] \to 0$ as $R\to 0$.
  \item[(b)] There exist a $d_0>0$ and a stationary ergodic sequence $\{B_t \}$ with $E[B_t]<\infty$ and $B_t\in\mathcal{G}_{t}$  such that
for all $t$  and $R\geq0$,
\begin{align*}
    \sup_{\theta:\|\theta-\theta^\circ \|+R\leq d_0 } E\left[ \sup_{\|h\|\leq R} \left| f(\Z_t, \theta+h) - f(\Z_t, \theta) \right| \Big| \mathcal{G}_{t-1} \right] \leq B_{t-1} R.
\end{align*}
\end{itemize}
Then, as $n\to\infty$,
\begin{align}\label{Stochastic Differentiability Condition}
    \sup_{\|\theta-\theta^\circ\|\leq d_0}  \frac{\left| \mathcal{W}_n(f(\cdot,\theta)) \right|}{1+\sqrt{n}\|\theta-\theta^\circ\|} \topr 0,
\end{align}
where
$\mathcal{W}_n(f(\cdot,\theta)) = n^{-1/2} \sumtn \left\{ f(\Z_t, \theta) - E\left[ f(\Z_t, \theta) | \mathcal{G}_{t-1} \right] \right\}
$.
\end{lemma}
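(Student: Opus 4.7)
The plan is to adapt the classical Huber--Pollard stochastic-equicontinuity argument from the i.i.d.\ setting to our martingale-difference, stationary ergodic framework. The pivotal observation is that for each fixed $\theta$, $\xi_t(\theta) := f(\Z_t,\theta) - E[f(\Z_t,\theta)\mid\mathcal{G}_{t-1}]$ is a $\{\mathcal{G}_t\}$-martingale-difference sequence, so orthogonality yields the pointwise variance bound
$E[\mathcal{W}_n(f(\cdot,\theta))^2] \leq E[f(\Z_1,\theta)^2] \leq \phi(\|\theta-\theta^\circ\|)$,
where $\phi(R) := E[\sup_{\|h\|\leq R}|f(\Z_1,\theta^\circ+h)|^2]\to 0$ as $R\to 0$ by~(a) --- this plays the role of Huber's $L^2$ continuity hypothesis. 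Paired with the conditional $L^1$ Lipschitz control~(b), it is the working machinery.

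I would first split $\{\theta : \|\theta-\theta^\circ\|\leq d_0\}$ into the inner ball $T_0 = \{\sqrt n\|\theta-\theta^\circ\|\leq M\}$ and dyadic shells $T_k = \{2^{k-1}M < \sqrt n\|\theta-\theta^\circ\|\leq 2^kM\}$ for $k=1,\ldots,K_n$ with $2^{K_n}M \sim d_0 \sqrt n$, where $M$ is a large constant to be sent to infinity at the end. The denominator in (\ref{Stochastic Differentiability Condition}) is at least $2^{k-1}M$ on $T_k$ (and at least~$1$ on $T_0$), so the conclusion reduces to controlling $\sup_{\theta\in T_k}|\mathcal{W}_n(f(\cdot,\theta))|$ by $\eta\max(1,2^{k-1}M)$ with probability summable in $k$. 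Each $T_k$ is then covered by $N_k = O((C/\eta)^d)$ balls of radius $\rho_k = \eta\, 2^kM/(C\sqrt n)$ centred at $\theta_{k,1},\ldots,\theta_{k,N_k}$, with $N_k$ uniform in $k,n$ for a large constant $C$.

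At each ball centre, Chebyshev with the variance bound above yields
$P(|\mathcal{W}_n(f(\cdot,\theta_{k,i}))| > \eta\, 2^{k-1}M/2) \leq 16\,\phi(2^kM/\sqrt n)/(\eta^2\, 4^kM^2)$,
which is summable in $k$ with sum vanishing as $n\to\infty$ and then $M\to\infty$, since $\phi\leq \phi(d_0)$ globally and $\phi(R)\to 0$ as $R\to 0$. For the oscillation within a ball I would bound
$\sup_{\|\theta-\theta_{k,i}\|\leq\rho_k}|\mathcal{W}_n(f(\cdot,\theta)-f(\cdot,\theta_{k,i}))| \leq n^{-1/2}\sum_t g_t(\theta_{k,i},\rho_k) + n^{-1/2}\sum_t B_{t-1}\rho_k$
with $g_t(\theta,R) := \sup_{\|h\|\leq R}|f(\Z_t,\theta+h) - f(\Z_t,\theta)|$; assumption~(b) supplies $E[g_t(\theta_{k,i},\rho_k)\mid\mathcal{G}_{t-1}] \leq B_{t-1}\rho_k$, so each normalized sum has $L^1$-norm at most $\sqrt n\, E[B_0]\rho_k = \eta\, 2^k M\, E[B_0]/C$, and Markov's inequality caps the oscillation probability by $O(1/C)$.

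The main obstacle will be aggregating these shell-wise bounds into a single $o_p(1)$ statement: since $K_n \sim \log_2\sqrt n \to \infty$ and the $O(1/C)$ Markov estimate is merely constant per shell, the crude union bound diverges logarithmically in $n$. Sharpening this is the principal extension of the lemma beyond Huber's i.i.d.\ proof; the route I would take is to invoke the $C(\Theta)$-valued ergodic theorem for stationary ergodic sequences, Proposition~2.5 of \cite{StraumannMikosch2006} (already used in the proof of Theorem~\ref{strong consistency}), applied to $\{g_t(\cdot,\rho_k)\}_t$ and $\{B_t\}_t$, to replace the fixed-probability Markov step by a genuine in-probability limit $n^{-1}\sum_t g_t(\theta_{k,i},\rho_k) \to E[g_1(\theta_{k,i},\rho_k)] \leq E[B_0]\rho_k$. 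This promotes the oscillation bound to one that, combined with the geometric decay of the Chebyshev contribution from the centres, becomes summable across shells; sending $n\to\infty$, then $C\to\infty$, and finally $M\to\infty$ in that order delivers (\ref{Stochastic Differentiability Condition}) and completes the proof.
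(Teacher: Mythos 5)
You have correctly isolated the weak point yourself, and it is indeed fatal as the argument stands: the within-ball oscillation is controlled only by a first-moment Markov bound of size $O(1/C)$ per ball, and with $K_n\sim\log_2\sqrt n$ shells the union bound diverges. The proposed repair via the $C(\Theta)$-valued ergodic theorem does not close this gap. First, the quantities $n^{-1}\sumtn g_t(\theta_{k,i},\rho_k)$ form a triangular array, not a fixed stationary sequence: the radii $\rho_k=\eta 2^kM/(C\sqrt n)$ and the centres $\theta_{k,i}$ change with $n$, so Proposition~2.5/Theorem~2.7 of Straumann--Mikosch does not apply to them as stated. Second, the ergodic theorem is rate-free: what you need is a bound of the form $n^{-1}\sumtn g_t(\theta,\rho)\leq (C/4)\rho$ holding simultaneously for all scales $\rho$ down to order $n^{-1/2}$; at those smallest scales the target bound is itself of order $n^{-1/2}$, so a statement that the average equals its mean plus an unquantified $o_P(1)$ gives nothing after comparison with the threshold $\eta 2^{k-1}M/2\asymp C\sqrt n\,\rho_k/4$. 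Third, even granting a per-shell limit, the number of required statements grows with $n$, so pointwise-in-$(k,i)$ convergence in probability does not aggregate into a single $o_p(1)$ bound without an additional maximal argument, which is precisely what is missing.

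The paper's proof (following Huber's Lemma~3 and Pollard's Lemma~4) resolves exactly this point by a different mechanism, and it is worth seeing why it works where the LLN does not. It uses $n$-free dyadic annuli $A(k)$ of radius $R(k)=2^{-k}$ around $\theta^\circ$ and a bracketing cover whose cardinality $K_\eps$ is independent of $k$ and $n$; within a bracket, the oscillation is split into its conditional mean, bounded via (b) by $B_{t-1}\,b_0^{-1}\eps R(k)$ and therefore handled for \emph{all} shells simultaneously by the single ergodic event $C_n=\{(nb_0)^{-1}\sumtn B_{t-1}\leq 2\}$, and a martingale fluctuation which is absorbed into the bracket upper function $\overset{-}{f}_i$ and controlled by Chebyshev, whose variance bound comes from condition (a) and martingale orthogonality. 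Summability over shells then follows from the geometric factors $nR(k)^2$ in the Chebyshev denominators together with $\phi(R)\to 0$. Your scheme can in fact be repaired along the same lines: center $g_t(\theta_{k,i},\rho_k)$ at $E[g_t\mid\mathcal{G}_{t-1}]$ (which (b) bounds by $B_{t-1}\rho_k$, handled once by the ergodic event), note that $g_t(\theta,\rho)\leq 2\sup_{\|h\|\leq\|\theta-\theta^\circ\|+\rho}|f(\Z_t,\theta^\circ+h)|$ so its second moment is controlled by (a), and apply Chebyshev to the resulting martingale; that restores geometric summability over your shells and is essentially the paper's bracketing step in disguise. As written, however, the proposal has a genuine gap at its acknowledged key step.
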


\begin{proof}
  The proof is essentially the same as that of Lemma~4 of \cite{Pollard1985} except that
the summands in $\mathcal{W}_n(f(\cdot,\theta))$ are not i.i.d. but
a sequence of martingale differences. We take $d_0$ to be 1 for
convenience. First, we show that (b) implies that $\mathfrak{F}:=\{
f(\cdot,\theta) : \|\theta-\theta^\circ\|\leq1 \}$ satisfies the
bracketing condition in \cite{Pollard1985}. Denote $b_0=E[B_t]$.
Given $\eps>0$ and $0< R \leq 1$, there exist open balls
$B(\theta_i, (2b_0)^{-1}\eps R)$, $i=1, 2, \ldots, K_\eps$, covering
$B(\theta^\circ, R)$. Notice that the same $K_\eps$ works for every
$R$. Thus, there is a partition $\{ U_i(R) : i=1, 2, \ldots, K_\eps
\}$ of $B(\theta^\circ, R)$
 such that $U_i(R)\subset B(\theta_i,(2b_0)^{-1}\eps R)$.
For each partition, upper and lower bracketing functions
$\overset{-}{f}_i$, $\overset{\circ}{f}_i$ are defined as $f(\Z_t,
\theta_i) \pm \sup_{\theta\in U_i(R)} |f(\Z_t, \theta) - f(\Z_t,
\theta_i)|$, respectively. Using condition (b), it follows that
\begin{align}\label{bound of E_t-t of BR ftns}
\begin{aligned}
    E\left[ \overset{-}{f}_i(\Z_t) - \overset{\circ}{f}_i(\Z_t) \Big| \mathcal{G}_{t-1} \right]
    &\leq  2 E\left[  \sup_{\|h\|\leq (2b_0)^{-1}\eps R} \left| f(\Z_t, \theta_i +h) - f(\Z_t, \theta_i) \right| \Big| \mathcal{G}_{t-1} \right] \\
    &\leq  {B_{t-1} b_0^{-1} \eps R  },
\end{aligned}
\end{align}
so that $E\left[ \overset{-}{f}_i(\Z_t) - \overset{\circ}{f}_i(\Z_t)  \right] \leq \eps R$.
Hence, $\mathfrak{F}$ satisfies the bracketing condition.


For each $k \in\{0\}\cup \mathbb{N}$, put $R(k) ={ 2^{-k} }$. Let
$B(k)$ be the ball of radius $R(k)$ centered at $\theta^\circ$ and
let $A(k)$ be the annulus $B(k)-B(k+1)$. Then, for given $\eps>0$
and $k$, there is a partition $U_1(R(k)), ~U_2(R(k)), \ldots,
~U_{K_\eps}(R(k))$ of $B(k)$. It follows from (\ref{bound of E_t-t
of BR ftns}) that for $\theta\in U_i(R(k))$,
\begin{align*}
    \mathcal{W}_n(f(\cdot,\theta)) &\leq {1\over\sqrt{n}} \sumtn \left\{ \overset{-}{f}_i (\Z_t) - E\left[ \overset{-}{f}_i(\Z_t) \Big| \mathcal{G}_{t-1} \right]     \right\}
    + {1\over\sqrt{n}} \sumtn   E\left[ \overset{-}{f}_i(\Z_t) -  \overset{\circ}{f}_i(\Z_t) \Big| \mathcal{G}_{t-1} \right]  \\
    &\leq \mathcal{W}_n( \overset{-}{f}_i(\cdot) ) + \sqrt{n} \eps R(k) \left( {1\over n b_0}\sumtn {B_{t-1}} \right).
\end{align*}
If we set $C_n=\left( (n b_0)^{-1} \sumtn B_{t-1} \leq 2 \right)$,
$P(C_n)$ tends to 1 by ergodicity. Further, as in \cite{Pollard1985}, it
can be seen that
\begin{align*}
    P\left( \sup_{A(k)} \frac{ \mathcal{W}_n(f(\cdot,\theta)) }{1+\sqrt{n}\|\theta-\theta^\circ \|} > 8\eps, C_n \right)
    \leq K_\eps \max_{1\leq i \leq K_\eps} P\left( \mathcal{W}_n(\overset{-}{f}_i(\cdot) ) > 2\eps \sqrt{n} R(k)
    \right).
\end{align*}
Then, using the arguments as in the rest part of the proof of
Lemma~4 of \cite{Pollard1985}, we can establish the lemma.
\end{proof}

\begin{lemma}\label{G_n(theta) approximation}
    Under assumptions {\bf (C3), (C5)} and {\bf (N1)--(N3)}, we have
\begin{multline*}
    G_n(\theta)- G_n(\theta^\circ)
    = {{f_u(\xi^\circ) \over2} } (\theta-\theta^\circ)^T J(\tau) (\theta-\theta^\circ) \\
    + {n^{-1/2} (\theta-\theta^\circ)^T } \left[ n^{-1/2} \sumtn     \frac{\partial q_t(\theta^\circ)}{\partial\theta} \left\{ I(Y_t<q_t(\theta^\circ))-\tau \right\} \right]
    + {n^{-1/2}}{\|\theta-\theta^\circ \|}  R_n(\theta),
\end{multline*}
where $J(\tau)$ is defined in (\ref{def of J(tau)}) and as
$n\to\infty$,
\begin{align*}
    \sup_{\|\theta-\theta^\circ \|\leq r_n} \frac{|R_n(\theta)|}{1+\sqrt{n}\|\theta-\theta^\circ \|} \topr 0
\end{align*}
 for every sequence $\{r_n\}_{n\in\mathbb{N}}$ tending to 0.
\end{lemma}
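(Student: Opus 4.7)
The approach is the standard Knight/Pollard decomposition, with the novelty that Lemma~\ref{BR lemma} replaces the i.i.d. stochastic equicontinuity step. Writing $\Delta_t(\theta)=q_t(\theta)-q_t(\theta^\circ)$ and applying Knight's identity
\begin{equation*}
\rho_\tau(x-y)-\rho_\tau(x) = -y\bigl(\tau-I(x<0)\bigr) + \int_0^{y}\bigl[I(x\le s)-I(x\le 0)\bigr]\,ds
\end{equation*}
with $x=Y_t-q_t(\theta^\circ)$, $y=\Delta_t(\theta)$, I would write
\begin{equation*}
G_n(\theta)-G_n(\theta^\circ) = L_n(\theta) + Q_n(\theta),
\end{equation*}
where $L_n(\theta)= -\frac{1}{n}\sum_{t=1}^n\Delta_t(\theta)\bigl(\tau-I(Y_t<q_t(\theta^\circ))\bigr)$ and $Q_n(\theta)$ is the $\frac{1}{n}\sum_{t}\int_0^{\Delta_t(\theta)}[I(\cdot\le s)-I(\cdot\le 0)]\,ds$ piece.

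For $L_n(\theta)$, the Lipschitz-smoothness of $\partial q_t/\partial\theta$ in {\bf (N3)}(i) together with the moment bound in {\bf (N3)}(ii)--(iii) gives $\Delta_t(\theta)=(\theta-\theta^\circ)^T\partial_\theta q_t(\theta^\circ)+O(\|\theta-\theta^\circ\|^2)$ with an integrable bound on the remainder coefficient. Substituting yields the middle term of the claim, with the quadratic piece contributing only to $n^{-1/2}\|\theta-\theta^\circ\|R_n(\theta)$ via the Cauchy--Schwarz bound $n^{-1}\sum_t \|\partial_\theta q_t\|\cdot\|\theta-\theta^\circ\|^2 \le \|\theta-\theta^\circ\|^2 \cdot O_p(1)$ and the observation that this is $n^{-1/2}\|\theta-\theta^\circ\|\cdot O_p(\sqrt{n}\|\theta-\theta^\circ\|)$, which fits the prescribed $R_n$ form on $\{\|\theta-\theta^\circ\|\le r_n\}$.

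For $Q_n(\theta)$, write $Q_n(\theta)=M_n(\theta)+B_n(\theta)$ where
\begin{equation*}
M_n(\theta)=\frac{1}{n}\sum_{t=1}^n\Bigl\{f(\mathbf{X}_t,\theta)-E\bigl[f(\mathbf{X}_t,\theta)\bigm|\mathcal{F}_{t-1}\bigr]\Bigr\},\qquad f(\mathbf{X}_t,\theta)=\int_0^{\Delta_t(\theta)}\!\!\bigl[I(Y_t-q_t(\theta^\circ)\le s)-I(Y_t-q_t(\theta^\circ)\le 0)\bigr]ds,
\end{equation*}
and $B_n(\theta)=\frac{1}{n}\sum_{t=1}^n E[f(\mathbf{X}_t,\theta)\mid\mathcal{F}_{t-1}]$. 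Using {\bf (M2)} and $Y_t=f_t(\alpha^\circ)+h_t(\alpha^\circ)u_t$,
\begin{equation*}
E\bigl[f(\mathbf{X}_t,\theta)\bigm|\mathcal{F}_{t-1}\bigr]=\int_0^{\Delta_t(\theta)}\!\!\Bigl[F_u\bigl(\xi^\circ+s/h_t(\alpha^\circ)\bigr)-\tau\Bigr]ds = \frac{f_u(\xi^\circ)}{2h_t(\alpha^\circ)}\Delta_t(\theta)^2 + o\bigl(\Delta_t(\theta)^2\bigr),
\end{equation*}
the remainder controlled uniformly by the continuity and boundedness of $f_u$ in {\bf (N1)} and {\bf (C5)}. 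Expanding $\Delta_t(\theta)^2$ around $\theta^\circ$ using {\bf (N3)} and invoking the ergodic theorem produces $\frac{f_u(\xi^\circ)}{2}(\theta-\theta^\circ)^TJ(\tau)(\theta-\theta^\circ)$ plus an error of the correct order for $R_n$.

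The decisive step---and the main obstacle---is showing $n^{1/2}M_n(\theta)=n^{-1/2}\|\theta-\theta^\circ\|\cdot o_p(1)$ uniformly on shrinking neighborhoods, for which I apply Lemma~\ref{BR lemma} to $f(\cdot,\theta)$ with $\mathcal{G}_t=\mathcal{F}_t$. Condition (a) follows because $|f(\mathbf{X}_t,\theta)|\le |\Delta_t(\theta)|$, so $\sup_{\|h\|\le R}|f(\mathbf{X}_t,\theta^\circ+h)|^2 \le R^2 \sup_{N_\delta}\|\partial_\theta q_t\|^2$, whose expectation tends to $0$ as $R\to 0$ by {\bf (N3)}(ii). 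For condition (b), I split
\begin{equation*}
f(\mathbf{X}_t,\theta+h)-f(\mathbf{X}_t,\theta)=\int_{\Delta_t(\theta)}^{\Delta_t(\theta+h)}\bigl[I(Y_t-q_t(\theta^\circ)\le s)-I(\cdot\le 0)\bigr]ds
\end{equation*}
and take conditional expectation; using the bounded density of $F_u$, the inner expectation is $\le C |\Delta_t(\theta+h)-\Delta_t(\theta)|\le C\|h\|\sup_{N_\delta}\|\partial_\theta q_t\|$, giving $B_{t-1}=C\,\sup_{N_\delta}\|\partial_\theta q_t\|$, which is in $L^1$ by {\bf (N3)}(ii). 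Lemma~\ref{BR lemma} then delivers (\ref{Stochastic Differentiability Condition}), i.e.\ $|M_n(\theta)|/(1+\sqrt{n}\|\theta-\theta^\circ\|)=o_p(n^{-1/2})$, which is precisely the remainder estimate we need. Collecting the linear term of $L_n$, the deterministic quadratic from $B_n$, and all the subleading pieces into $R_n(\theta)$ concludes the proof.
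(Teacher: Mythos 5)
Your overall architecture (Knight's identity, martingale/compensator split, Lemma~\ref{BR lemma} for the centered part, ergodic theorem plus continuity of $f_u$ for the compensator) is close in spirit to the paper's proof, but the decisive step, as you present it, does not deliver the rate the lemma requires. The remainder in the statement is $n^{-1/2}\|\theta-\theta^\circ\|R_n(\theta)$ with $\sup_{\|\theta-\theta^\circ\|\le r_n}|R_n(\theta)|/(1+\sqrt{n}\|\theta-\theta^\circ\|)\to_p 0$, i.e.\ every error term must be $o_p\bigl(n^{-1/2}\|\theta-\theta^\circ\|+\|\theta-\theta^\circ\|^2\bigr)$ uniformly. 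Applying Lemma~\ref{BR lemma} directly to $f(\X_t,\theta)=\int_0^{\Delta_t(\theta)}[I(\cdot\le s)-I(\cdot\le 0)]\,ds$ only gives $\sup|\mathcal{W}_n(f(\cdot,\theta))|/(1+\sqrt{n}\|\theta-\theta^\circ\|)=o_p(1)$, i.e.\ $M_n(\theta)=o_p\bigl(n^{-1/2}+\|\theta-\theta^\circ\|\bigr)$, which is short by a full factor of $\|\theta-\theta^\circ\|$: in the relevant regime $\|\theta-\theta^\circ\|\asymp n^{-1/2}$ your bound is $o_p(n^{-1/2})$ while the linear and quadratic terms are only $O_p(n^{-1})$, so the claimed expansion does not follow. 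Your statement that this ``is precisely the remainder estimate we need'' is where the argument breaks. The same defect appears in your treatment of $L_n$: the Taylor remainder of $\Delta_t(\theta)$ gives a term bounded crudely by $\|\theta-\theta^\circ\|^2\cdot\frac{1}{n}\sum_t M_{2t}=O_p(1)\|\theta-\theta^\circ\|^2$, and an $R_n$ of size $O_p(\sqrt{n}\|\theta-\theta^\circ\|)$ does not satisfy the prescribed normalization; Cauchy--Schwarz alone cannot make this term negligible at the quadratic order.

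The missing idea is to pull the factor $(\theta-\theta^\circ)^T$ out of the martingale sums \emph{before} invoking Lemma~\ref{BR lemma}, so that the lemma's $o_p(1+\sqrt{n}\|\theta-\theta^\circ\|)$ conclusion gets multiplied by $\|\theta-\theta^\circ\|$. This is exactly what the paper does: by the Lipschitz continuity in {\bf (N3)}(i) and the fundamental theorem of calculus, $g(\X_t,\theta)-g(\X_t,\theta^\circ)=(\theta-\theta^\circ)^T\int_0^1 g_1(\X_t,\theta^\circ+u(\theta-\theta^\circ))\,du$ with $g_1(\X_t,\theta)=\{I(Y_t<q_t(\theta))-\tau\}\,\partial q_t(\theta)/\partial\theta$, and Lemma~\ref{BR lemma} is applied to the increments $r(\X_t,\theta)=I(Y_t<q_t(\theta))\,\partial q_t(\theta)/\partial\theta-I(Y_t<q_t(\theta^\circ))\,\partial q_t(\theta^\circ)/\partial\theta$, the prefactor $(\theta-\theta^\circ)^T$ supplying the extra $\|\theta-\theta^\circ\|$; the compensator $\frac1n\sum_t E[g(\X_t,\theta)-g(\X_t,\theta^\circ)\,|\,\mathcal{F}_{t-1}]$ is then expanded to second order and handled by the uniform ergodic theorem together with a dominated-convergence argument on shrinking balls (the $A_{2t}$ term), which also covers the discontinuous second derivative. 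Your route could be repaired in the same spirit, e.g.\ by writing $f(\X_t,\theta)=\Delta_t(\theta)\,\kappa_t(\theta)$ with $\Delta_t(\theta)$ being $\mathcal{F}_{t-1}$-measurable and $\Delta_t(\theta)=(\theta-\theta^\circ)^T\int_0^1\partial q_t(\theta^\circ+u(\theta-\theta^\circ))/\partial\theta\,du$, so that $M_n(\theta)$ becomes $(\theta-\theta^\circ)^T$ times a conditionally centered average to which Lemma~\ref{BR lemma} applies, and similarly for the $L_n$ remainder; but as written the proposal has a genuine gap at its central step.
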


\begin{proof}
Note that $\rho_\tau(x)$ is Lipschitz continuous in $x$ and its
derivative is $\psi_\tau(x) \equiv \tau-I(x<0)$ excepting $x=0$. By
{\bf (N3)}(i), $g(\X_{t},\theta)$ is Lipschitz continuous in
$\theta\in N_\delta$ with probability 1. Thus,
$g(\X_{t},\theta^\circ + u(\theta-\theta^\circ) )$ is absolutely
continuous in $u\in [0,1]$, so is differentiable at every $u$
outside a set of Lebesgue measure 0. Hence, by the fundamental
theorem of  calculus, we can express
\begin{align*}
    g(\X_{t},\theta) - g(\X_{t},\theta^\circ) &= (\theta-\theta^\circ)^T \int_0^1  g_1(\X_t,\theta^\circ + u(\theta-\theta^\circ) ) \md u,
\end{align*}
where $g_1(\X_t,\theta) \equiv    \left\{ I(Y_t < q_t(\theta)) -
\tau \right\}  {\partial q_t(\theta)}/{\partial\theta} $. This with
{\bf (N3)}(ii) yields
\begin{align}\label{E g theta - g theta_0}
    E\left[ g(\X_t,\theta) - g(\X_t,\theta^\circ) |\Ft \right]
    &= (\theta-\theta^\circ)^T \int_0^1 E\left[ g_1(\X_t, \theta(u) ) |\Ft \right] \md u,
\end{align}
where $\theta(u)\equiv\theta^\circ + u(\theta-\theta^\circ)$ Then,
following the arguments as in \cite{Pollard1985}, we get
\begin{equation}\label{decomposition}
\begin{aligned}
    & G_n(\theta)-G_n(\theta^\circ) \\
    &={1\over n} \sumtn E\left[ g(\X_t,\theta) - g(\X_t,\theta^\circ) |\Ft \right]
    +  {(\theta-\theta^\circ)^T \over n} \sumtn  g_1(\X_t, \theta^\circ )
    + {(\theta-\theta^\circ)^T \over\sqrt{n}} R_{1n}(\theta),
\end{aligned}
\end{equation}
where $$
    R_{1n}(\theta):
    = \int_0^1 {1\over\sqrt{n}} \sumtn \left\{   g_1(\X_t, \theta(u) ) -  g_1(\X_t, \theta^\circ)
    - E\left[  g_1(\X_t, \theta(u) ) | \Ft \right]
    \right\} \md u$$ satisfies
\begin{equation}\label{for R_1n}
\begin{aligned}
    &\sup_{\|\theta-\theta^\circ \|\leq r_n} \frac{\|R_{1n}(\theta)\|}{1+\sqrt{n}\|\theta-\theta^\circ \|} \\
    &\leq
    \sup_{\|\theta-\theta^\circ \|\leq r_n} \int_0^1 \frac{ \left\| {1\over\sqrt{n}} \sumtn \left\{   g_1(\X_t, \theta(u) ) -  g_1(\X_t, \theta^\circ)
    - E\left[  g_1(\X_t, \theta(u) ) |\Ft \right]
    \right\}  \right\| }{1+\sqrt{n}\|\theta(u)-\theta^\circ \|} \md u \\
    &\leq \sup_{\|\theta-\theta^\circ \|\leq r_n} \frac{\| \mathcal{W}_n(r(\cdot, \theta))\|}{1+\sqrt{n}\|\theta-\theta^\circ \|},
\end{aligned}
\end{equation}
where $r(\X_t, \theta) \equiv I(Y_t < q_t(\theta)) { \partial
q_t(\theta) / {\partial\theta} } -  I(Y_t < q_t(\theta^\circ)) {
\partial q_t(\theta^\circ)  / {\partial\theta} }$ and
$\mathcal{W}_n(\cdot)$ is defined in (\ref{Stochastic
Differentiability Condition}). Define $e_t(\theta) = E\left[
g(\X_t,\theta) - g(\X_t,\theta^\circ) |\Ft \right]$. In view of
(\ref{decomposition}) and (\ref{for R_1n}), it suffices to verify
that for every sequence of $\{r_n\}_{n\in\mathbb{N}}$ tending to 0,
\begin{align}\label{StDiffCond}
    \sup_{\|\theta-\theta^\circ \|\leq r_n} \frac{\| \mathcal{W}_n(r(\cdot, \theta))\|} { {1+\sqrt{n}\|\theta-\theta^\circ \|} } &\topr 0
\end{align}
and
\begin{align}\label{Calculation of cond mean}
    \sup_{\|\theta-\theta^\circ \|\leq r_n} \left| {1\over n} \sumtn e_t(\theta)
    - {f_u(\xi^\circ)\over2}(\theta-\theta^\circ)^T J(\tau) (\theta-\theta^\circ) \right| / \|\theta-\theta^\circ \|^2 &\topr
    0.
\end{align}

We first verify (\ref{StDiffCond}) utilizing Lemma~\ref{BR lemma}.
Note that for $\|\theta-\theta^\circ \|\leq R \leq \delta$,
\begin{align*}
    I\left(Y_t < q_t(\theta^\circ) - R M_{1t} \right) \leq I(Y_t <q _t(\theta)) \leq I\left(Y_t < q_t(\theta^\circ) + R M_{1t} \right),
\end{align*}
where $M_{1t} \equiv \sup_{\theta\in N_\delta} \left| {\partial q_t(\theta)}/{\partial\theta} \right|$.
Using this and the inequality $|ab-cd|\leq |a-c||b|+|c||b-d|$,
we have that for all small $R$,
\begin{align*}
    &\sup_{\|\theta-\theta^\circ \|\leq R} \left\| r(\X_t, \theta) \right\|^2
    \leq  2 \sup_{\|\theta-\theta^\circ \|\leq R} \left\| {\partial q_t(\theta) \over \partial\theta } - {\partial q_t(\theta^\circ) \over \partial\theta } \right\|^2 \\
    &+ 2    \left\| {\partial q_t(\theta^\circ) \over \partial\theta } \right\|^2 \left\{
    I\left(Y_t < q_t(\theta^\circ) + R M_{1t} \right) - I\left(Y_t < q_t(\theta^\circ) - R M_{1t} \right) \right\}
    \leq 10 M_{1t}^2.
\end{align*}
Thus, using the dominated convergence theorem,  {\bf (N1)} and {\bf
(N3)}, we can have
\begin{align}\label{checking condition (a)}
    \lim_{R\to 0} E\left[ \sup_{\|\theta-\theta^\circ \|\leq R} \left\| r(\X_t, \theta) \right\|^2 \right]
    &\leq 0+ 2 E\left[ \left\| {\partial q_t(\theta^\circ) \over \partial\theta } \right\|^2 I(Y_t=q_t(\theta^\circ)) \right] = 0.
\end{align}
Similarly, for all $\theta$ with $\|\theta-\theta^\circ \|+R \leq
\delta$ and $R\geq 0$,
\begin{align*}
    \sup_{|h|\leq R} \left\| r(\X_t, \theta +h) - r(\X_t, \theta) \right\|
    \leq R M_{2t} + M_{1t} \left\{ I(Y_t < q_t(\theta) + R M_{1t}) - I(Y_t < q_t(\theta) - R M_{1t}) \right\},
\end{align*}
where $M_{2t} \equiv \sup_{\theta\in N_\delta} \left\|
{\partial^2}q_t(\theta)/{\partial \theta \partial\theta^T}
\right\|$. As in the proof of Theorem~\ref{strong consistency}, it
can be shown that $\{M_{1t}\}$ and $\{ M_{2t} \}$ are stationary and
ergodic due to {\bf (N3)}(i). Further, $M_{1t}$ and $ M_{2t} $ are
$\Ft$-measurable for all $t\in\mathbb{Z}$. Note that by the mean
value theorem, {\bf (N1)} and {\bf (C5)},
\begin{align*}
    &\left| E\left[ I(Y_t < q_t(\theta) + R M_{1t}) - I(Y_t < q_t(\theta) - R M_{1t}) |\Ft
    \right] \right|
    \leq {2 c_0^{-1} \|f_u\|_{\infty}RM_{1t}} ,
\end{align*}
where $\|f_u\|_{\infty}=\sup_x |f_u(x)|$, so that for all $\theta$
with $\|\theta-\theta^\circ \|+R \leq \delta$ and $R\geq 0$,
\begin{align}\label{checking condition (b)}
    E\left[ \sup_{\|h\|\leq R} \left\| r(\X_t, \theta +h) - r(\X_t, \theta) \right\|  \Big| \Ft \right]
    \leq \left( M_{2t} + {2  c_0^{-1} \|f_u\|_{\infty}M_{1t}^2} \right) R.
\end{align}
Then, combining (\ref{checking condition (a)}) and (\ref{checking
condition (b)}) and applying Lemma~\ref{BR lemma} componentwise, we
get (\ref{StDiffCond}).

Next, we verify (\ref{Calculation of cond mean}). In view of (\ref{E
g theta - g theta_0}) and {\bf (N1)}, we have that for $\theta\in
N_\delta$,
\begin{align}\label{e_t theta}
    e_t(\theta) = (\theta-\theta^\circ)^T \int_0^1 \frac{\partial q_t(\theta(u))}{\partial\theta}
     \left\{ F_u\left(\xi^\circ + \frac{q_t(\theta(u))-q_t(\theta^\circ)}{h_t({\alpha^\circ})} \right) - \tau \right\} \md
     u,
\end{align}
and thus, $\partial e_t(\theta^\circ) / \partial\theta = 0$. As
mentioned in Remark~\ref{remark: Lipschitz continuity}, owing to
{\bf(N3)}(i), we can express $
   \displaystyle \frac{\partial^2 e_t(\theta)}{\partial \theta \partial \theta^T}
    = A_{1t}(\theta) + A_{2t}(\theta),
$ where
\begin{align*}
    A_{1t}(\theta) &= f_u\left( \xi^\circ + \frac{q_t(\theta)-q_t(\theta^\circ)}{h_t({\alpha^\circ})} \right)
    {1\over h_t({\alpha^\circ})} {\partial q_t(\theta) \over \partial\theta } {\partial q_t(\theta) \over \partial\theta^T}, \\
    A_{2t}(\theta) &= \left\{ F_u\left(\xi^\circ + \frac{q_t(\theta)-q_t(\theta^\circ)}{h_t({\alpha^\circ})} \right) - \tau \right\}
    \frac{\partial^2 q_t(\theta)}{\partial \theta \partial \theta^T}.
\end{align*}
Hence, by using the fundamental theorem of calculus, the term in
(\ref{Calculation of cond mean}) can be seen to be no more than
\begin{align}\label{Taylor expansion for E_n(theta)}
    &\sup_{\|\theta-\theta^\circ \|\leq r_n} \left\|
    \int_0^1 {1\over n} \sumtn \frac{\partial^2 e_t(\theta(u))}{\partial \theta \partial \theta^T}
    (1-u) \md u - {f_u(\xi^\circ)\over 2} J(\tau)
    \right\| \nonumber \\
    &\leq \sup_{\|\theta-\theta^\circ \|\leq r_n} \left\| {1\over n} \sumtn A_{1t}(\theta) - f_u(\xi^\circ) J(\tau) \right\|
    +  \sup_{\|\theta-\theta^\circ \|\leq r_n} \left\| {1\over n} \sumtn A_{2t}(\theta) \right\|.
\end{align}
As in the proof of Theorem~\ref{strong consistency}, owing to
{\bf(N3)}(i), $\{ A_{1t}(\cdot) \}_{t\in\mathbb{Z}}$ forms a
stationary and ergodic sequence of random elements with values in
the space of continuous functions from $N_\delta$ to
$\mathbb{R}^{d\times d}$. Further, by {\bf(C5)}, {\bf (N1)} and {\bf
(N3)}(ii), we have $E\left[\sup_{\theta\in N_\delta} \left\|
A_{1t}(\theta) \right\|\right]<\infty$. Thus,
$
    \sup_{\theta\in N_\delta} \left\| {1\over n}\sumtn A_{1t}(\theta)
    - E\left[ A_{1t}(\theta) \right] \right\|
    \topr 0$ by Theorem~2.7 of
\cite{StraumannMikosch2006}. Then, since $E\left[
A_{1t}(\theta^\circ) \right]=f_u(\xi^\circ) J(\tau)$, the first term
on the right-hand side of (\ref{Taylor expansion for E_n(theta)}) is
$\op(1)$.

Since the second derivative of $q_t(\theta)$ is not necessarily
continuous, we have to take an approach similar to that used to
verify Lemma~2.3 of \cite{ZhuLing2011}. Owing to {\bf(N3)}(iii),
using the dominated convergence theorem, we can have $
    \lim_{R\to0} E\left[ \sup_{\|\theta-\theta^\circ \|\leq R} \left\| A_{2t}(\theta)\right\|
    \right]=0,$
and thus, for any $\eps>0$, there exists $R>0$ such that
\begin{align*}
    P\left(  \sup_{\|\theta-\theta^\circ \|\leq r_n} \left\| {1\over n} \sumtn A_{2t}(\theta) \right\| >\eps
    \right)
    &\leq P\left( {1\over n} \sumtn \sup_{\|\theta-\theta^\circ \|\leq R} \left\| A_{2t}(\theta)\right\| >\eps
    \right)
    < \eps
\end{align*}
for all large $n$ with $r_n\leq R$. Therefore, (\ref{Calculation of cond mean}) is verified, which completes the proof.
\end{proof}

\begin{lemma}\label{tilde G approximation}
    Under the conditions in Lemma~\ref{G_n(theta) approximation} and {\bf (N4)}, we have
\begin{multline*}
    \tilde{G}_n(\theta)- \tilde{G}_n(\theta^\circ)
    = {{f_u(\xi^\circ) \over2} } (\theta-\theta^\circ)^T J(\tau) (\theta-\theta^\circ) \\
    + {n^{-1/2} (\theta-\theta^\circ)^T } \left[ n^{-1/2} \sumtn     \frac{\partial q_t(\theta^\circ)}{\partial\theta} \left\{ I(Y_t<q_t(\theta^\circ))-\tau \right\} \right]
    +  R_n(\theta),
\end{multline*}
where $ \sup_{\|\theta-\theta^\circ \|\leq C n^{-1/2}  } n|R_n(\theta)|
\topr 0$ and  $C$ is any positive real number.
\end{lemma}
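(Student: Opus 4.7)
The strategy is to reduce the lemma to Lemma~\ref{G_n(theta) approximation} by comparing $\tilde G_n$ with $G_n$. Set
\begin{align*}
    B_t(\theta) &:= \rho_\tau(Y_t - \tilde q_t(\theta)) - \rho_\tau(Y_t - q_t(\theta)), \qquad
    S_n(\theta) :=\sum_{t=1}^n B_t(\theta),
\end{align*}
so that $\tilde G_n(\theta)-G_n(\theta) = n^{-1}S_n(\theta)$ and hence
\begin{align*}
    \tilde G_n(\theta) - \tilde G_n(\theta^\circ)
     = \bigl[G_n(\theta) - G_n(\theta^\circ)\bigr] + \frac{1}{n}\bigl[S_n(\theta)-S_n(\theta^\circ)\bigr].
\end{align*}
Inserting the expansion delivered by Lemma~\ref{G_n(theta) approximation} produces the quadratic and linear terms of the present lemma automatically, leaving the residual
\begin{align*}
    R_n(\theta) \;=\; \frac{\|\theta-\theta^\circ\|}{\sqrt n}\,R_n^\ast(\theta) \;+\; \frac{1}{n}\bigl[S_n(\theta)-S_n(\theta^\circ)\bigr],
\end{align*}
where $R_n^\ast(\theta)$ is the remainder from Lemma~\ref{G_n(theta) approximation}. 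It therefore suffices to prove that $n$ times each of the two pieces on the right vanishes uniformly for $\|\theta-\theta^\circ\|\le Cn^{-1/2}$, in probability.

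\textbf{The $R_n^\ast$ piece.} For $\|\theta-\theta^\circ\|\le Cn^{-1/2}$ we have $1+\sqrt n\|\theta-\theta^\circ\|\le 1+C$, so Lemma~\ref{G_n(theta) approximation} applied with $r_n=Cn^{-1/2}$ implies
\begin{align*}
    \sup_{\|\theta-\theta^\circ\|\le Cn^{-1/2}} |R_n^\ast(\theta)|
    \;\le\;(1+C)\sup_{\|\theta-\theta^\circ\|\le Cn^{-1/2}}\frac{|R_n^\ast(\theta)|}{1+\sqrt n\|\theta-\theta^\circ\|}\topr 0.
\end{align*}
Multiplying by $\sqrt n\,\|\theta-\theta^\circ\|\le C$ then yields the desired $\op(1)$ bound for the first piece.

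\textbf{The $S_n$ piece.} The Lipschitz property of $\rho_\tau$ gives $|B_t(\theta)|\le \delta_t:=\sup_{\theta\in\Theta}|\tilde q_t(\theta)-q_t(\theta)|$, and assumption {\bf(C6)} supplies $\sum_{t=1}^\infty \delta_t<\infty$ a.s. By {\bf(N3)}(i) and {\bf(N4)}(i), both $q_t$ and $\tilde q_t$ are continuous on $N_\delta$ for every $t$, so each $B_t(\cdot)$ is continuous on $N_\delta$ a.s. A Weierstrass M-test argument then shows that $S_\infty(\theta):=\sum_{t=1}^\infty B_t(\theta)$ converges uniformly on $N_\delta$ a.s.\ to a \emph{continuous} function of $\theta$. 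Writing
\begin{align*}
    S_n(\theta)-S_n(\theta^\circ) = \bigl[S_n(\theta)-S_\infty(\theta)\bigr] - \bigl[S_n(\theta^\circ)-S_\infty(\theta^\circ)\bigr] + \bigl[S_\infty(\theta)-S_\infty(\theta^\circ)\bigr],
\end{align*}
the first two terms are dominated by $\sum_{t>n}\delta_t\to 0$ a.s.\ uniformly on $N_\delta$, and the third tends to $0$ a.s.\ as $n\to\infty$ by continuity of $S_\infty$ at $\theta^\circ$ (since $Cn^{-1/2}\to 0$). Hence $\sup_{\|\theta-\theta^\circ\|\le Cn^{-1/2}}|S_n(\theta)-S_n(\theta^\circ)|\to 0$ a.s., which finishes the proof.

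\textbf{Main obstacle.} The conceptual point requiring care is realizing that the genuinely delicate work—the empirical-process control of the non-smooth loss $\rho_\tau$—has already been carried out inside Lemma~\ref{G_n(theta) approximation} via Lemma~\ref{BR lemma}. Replacing $q_t$ by $\tilde q_t$ merely introduces the deterministic cost of the initial-value approximation, and that cost is absorbed by the uniform-series argument enabled by {\bf(C6)} together with the continuity provided by {\bf(N3)}(i) and {\bf(N4)}(i); no further stochastic equicontinuity step is needed.
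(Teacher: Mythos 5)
Your proposal is correct, but it follows a genuinely different and substantially more elementary route than the paper. The paper does not transfer the expansion from $G_n$ to $\tilde{G}_n$; it redoes the whole quadratic approximation directly for $\tilde{G}_n$, introducing $\tilde{e}_t(\theta)$, $\tilde{g}_{1t}(\theta)$ and $\tilde{R}_{1n}(\theta)$ (the analogues of $e_t$, $g_1$, $R_{1n}$ with $q_t$ replaced by $\tilde{q}_t$) and proving three separate statements: closeness of the scores $n^{-1/2}\sumtn\{\tilde{g}_{1t}(\theta^\circ)-g_1(\X_t,\theta^\circ)\}=\op(1)$ via a conditional-expectation bound and a martingale convergence result, closeness of the conditional-mean terms, and a stochastic-equicontinuity statement for the non-stationary array built from $\tilde{q}_t$, which Lemma~\ref{BR lemma} cannot handle and which is instead proved by a Bai (1994)-type partitioning/chaining argument combined with a lemma of Genon-Catalot and Jacod (1993). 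Your route bypasses all of this: since $\rho_\tau$ is Lipschitz, the transfer cost $[\tilde{G}_n(\theta)-\tilde{G}_n(\theta^\circ)]-[G_n(\theta)-G_n(\theta^\circ)]=n^{-1}\{S_n(\theta)-S_n(\theta^\circ)\}$ is bounded termwise by the a.s.\ summable sequence in \textbf{(C6)}, and your truncation-plus-continuity argument upgrades the crude $O_p(n^{-1})$ bound (which is all that display (\ref{difference of tilda G and G}) gives, and suffices only for the $\sqrt{n}$-consistency step) to $o(n^{-1})$ a.s.\ uniformly on $Cn^{-1/2}$-balls, which is exactly what the lemma asserts; your handling of the other piece, applying Lemma~\ref{G_n(theta) approximation} with $r_n=Cn^{-1/2}$ and the bound $1+\sqrt{n}\|\theta-\theta^\circ\|\le 1+C$, is also sound. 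What each approach buys: yours is much shorter, requires essentially only the continuity content of \textbf{(N3)}(i)/\textbf{(N4)}(i) and no new empirical-process work, while the paper's proof delivers, as a by-product, the expansion with score and Hessian expressed through the computable quantities $\tilde{q}_t$, a form that would matter if those intermediate objects were needed elsewhere. One caveat, shared with the paper: you invoke \textbf{(C6)}, which is not literally among ``the conditions in Lemma~\ref{G_n(theta) approximation} and \textbf{(N4)}''; however, the paper's own proof of this lemma also uses \textbf{(C6)} (for the score comparison and for $\sumtn|\tilde{q}_t(\theta^\circ)-q_t(\theta^\circ)|=\Op(1)$), and \textbf{(C6)} is in force where the lemma is applied in Theorem~\ref{asymptotic normality}, so this is an imprecision in the lemma's statement rather than a gap in your argument.
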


The proof of Lemma~\ref{tilde G approximation} is deferred to the supplementary material.

\ \\ \indent \textbf{Proof of Theorem~\ref{asymptotic normality}}.
We first improve the rate of convergence of $\hat\theta_n$ from
$o_p(1)$ to $O_p(n^{-1/2})$ by using Lemma~\ref{G_n(theta)
approximation} and (\ref{difference of tilda G and G}) and then
establish the theorem by using Lemma~\ref{tilde G approximation}.
Since  $\hat\theta_n$ lies in a shrinking neighborhood of
$\theta^\circ$ with probability tending to 1 due to
Theorem~\ref{strong consistency}, Lemma~\ref{G_n(theta)
approximation} and (\ref{difference of tilda G and G}) yield that
\begin{align*}
    \tilde{G}_n(\hat\theta_n) - \tilde{G}_n(\theta^\circ) =
    {{f_u(\xi^\circ) \over2} } (\hat\theta_n-\theta^\circ)^T J(\tau) (\hat\theta_n-\theta^\circ)
    + {n^{-1/2} (\hat\theta_n-\theta^\circ)^T } W_n
    + R_{1,n} + R_{2,n},
\end{align*}
where $R_{1,n} = o_p(n^{-1/2}\|\hat\theta_n-\theta^\circ\| + \|\hat\theta_n-\theta^\circ\|^2)$, $R_{2,n} = O_p(n^{-1})$ and
\begin{align*}
    W_n &=  n^{-1/2} \sumtn     \frac{\partial q_t(\theta^\circ)}{\partial\theta} \left\{ I(Y_t<q_t(\theta^\circ))-\tau \right\}.
\end{align*}
As in the proof of Theorem~\ref{strong consistency}, it is easily
checked that the summands in $\{W_n\}$ is stationary and ergodic. By
using {\bf (N3)}(ii) and applying the CLT for stationary ergodic
martingale difference sequences (e.g., \citealt{Billingsley1961}),
we can show that $W_n \Rightarrow N(0, \tau(1-\tau) V(\tau))$. Then,
from {\bf (N1)}, {\bf (N5)} and the fact that
$\tilde{G}_n(\hat\theta_n) - \tilde{G}_n(\theta^\circ) \leq 0$, the
$\sqrt{n}$-consistency of $\hat\theta_n$ can be obtained by some
algebras as seen in the proof of Theorem~2 of \cite{LeeNoh2013}.

Now, we put $U_n=- \left\{ f_u(\xi^\circ) J(\tau) \right\}^{-1}
n^{-1/2} W_n$ and use Lemma~\ref{tilde G approximation} to get
\begin{align*}
    &\tilde{G}_n(\hat\theta_n) - \tilde{G}_n(\theta^\circ)
    = {1\over2} (\hat\theta_n-\theta^\circ)^T \left\{ f_u(\xi^\circ)J(\tau) \right\} (\hat\theta_n-\theta^\circ)
    - (\hat\theta_n-\theta^\circ)^T \left\{ f_u(\xi^\circ)J(\tau) \right\}  U_n +
    o_p(n^{-1}),\\
   &\tilde{G}_n(\theta^\circ + U_n) - \tilde{G}_n(\theta^\circ)
    = -{1\over2} U_n^T \left\{ f_u(\xi^\circ)J(\tau) \right\} U_n + o_p(n^{-1}).
\end{align*}
Whence, as in the proof of Theorem~2 of \cite{Pollard1985}, it can
be seen  that the inequality $\tilde{G}_n(\hat\theta_n) \leq
\tilde{G}_n(\theta^\circ+U_n)$ yields  $ n^{1/2} ( \hat{\theta}_n -
\theta^\circ ) = n^{1/2} U_n + o_p(1)$, which together with
Slutsky's lemma asserts the theorem. \hfill \fbox{}
\\

{\bf Acknowledgements}. This work was supported by the National
Research Foundation of Korea(NRF) grant funded by the Korea
government(MSIP) (No.2012R1A2A2A01046092) (S. Lee), and National
Research Foundation of Korea Grant funded by the Korean Government
(Ministry of Education, Science and Technology)
(NRF-2011-355-C00022) (J. Noh).
\\

{\bf Supplementary material}. A supplementary material contains the
proofs of Lemma~\ref{tilde G approximation},
Lemmas~\ref{identification lemma}--\ref{positive definiteness} and
Theorems~\ref{consistency for AGARCH}--\ref{Asymptotics for AGARCH}.


\newpage

\setcounter{footnote}{0}

\begin{center}
 {\bf\LARGE  Supplementary Material to
 ``Quantile Regression for Location-Scale Time Series
 Models with Conditional Heteroscedasticity"}
 \vspace{.5cm}
\end{center}
\begin{center}
Jungsik Noh\footnote{Quantitative Biomedical Research Center, Department of Clinical Sciences, University of Texas Southwestern Medical Center, Dallas, TX 75390, USA. Email: nohjssunny@gmail.com}
and
Sangyeol Lee\footnote{Department of Statistics, Seoul National
University, Seoul 151-747, Korea. Email: sylee@stats.snu.ac.kr}\\
$^1$University of Texas Southwestern Medical Center \\
$^2$Seoul National University
\end{center}

\begin{abstract}
    { The aim of this supplementary material is to provide the proofs of Lemma~\ref{tilde G approximation}, Lemmas~\ref{identification lemma}--\ref{positive definiteness},
and Theorems~\ref{consistency for AGARCH}--\ref{Asymptotics for AGARCH} used for obtaining the results stated in the main article.
    }
\end{abstract}

\section{Supplementary}
\setcounter{equation}{0}
\setcounter{lemma}{0}

\renewcommand{\theequation}{S.\arabic{equation}}

\textbf{Proof of Lemma~\ref{tilde G approximation}}. By using the
same arguments to obtain (\ref{decomposition}), we can see that due
to {\bf (N4)}, for $\theta\in N_\delta$,
\begin{align*}
    \tilde{G}_n(\theta)- \tilde{G}_n(\theta^\circ) ={n^{-1}} \sumtn \tilde{e}_t(\theta)
    +  {n^{-1} (\theta-\theta^\circ)^T } \sumtn  \tilde{g}_{1t}(\theta^\circ )
    + {n^{-1/2} (\theta-\theta^\circ)^T } \tilde{R}_{1n}(\theta),
\end{align*}
where $\tilde{e}_t(\theta)$, $\tilde{g}_{1t}(\theta)$ and
$\tilde{R}_{1n}(\theta)$ are the same as $e_t(\theta)$, $g_1(\X_t,
\theta)$ and $R_{1n}(\theta)$ in Lemma~\ref{G_n(theta)
approximation} with $q_t(\cdot)$ replaced by $\tilde{q}_t(\cdot)$,
respectively. To establish the lemma, it suffices to show that
\begin{align}
    &\left\| {n^{-1/2}} \sumtn \tilde{g}_{1t}(\theta^\circ) - {n^{-1/2}}  \sumtn    g_1(\X_t, \theta^\circ) \right\| \topr 0, \label{tilde_g_1-g_1}\\
    \sup_{|\theta-\theta^\circ|\leq  C n^{-1/2} } & n \left| {n^{-1}} \sumtn \tilde{e}_t(\theta) - {{f_u(\xi^\circ) \over2} } (\theta-\theta^\circ)^T J(\tau) (\theta-\theta^\circ) \right| \topr 0, \label{tilde_e-e}\\
    \sup_{|\theta-\theta^\circ|\leq C n^{-1/2} }
    & \left\| {n^{-1/2}} \sumtn \left\{   \tilde{g}_{1t}( \theta ) -  \tilde{g}_{1t}( \theta^\circ )
    - E\left[ \tilde{g}_{1t}(\theta) |\Ft \right]
    \right\}
    \right\| \topr 0\label{tilde_R_1n = o_P(1)}
\end{align}
 for any constant $C>0$.

We first verify (\ref{tilde_g_1-g_1}). Since $\{ \partial
q_t(\theta^\circ) /\partial\theta \}_{t\in\mathbb{Z}}$ is stationary
and ergodic, it follows from {\bf (N3)} that $n^{-1/2} \max_{1\leq t
\leq n} \|\partial q_t(\theta^\circ) /\partial\theta\| = o(1)$ a.s.
Thus, from {\bf (N4)}, we have
\begin{equation*}
\begin{aligned}
    &\left\| { n^{-1/2} } \sumtn \tilde{g}_{1t}(\theta^\circ) -{ n^{-1/2} }  \sumtn    g_1(\X_t, \theta^\circ) \right\| \\
    &\leq { n^{-1/2} }  \sumtn \left\| \frac{\partial \tilde{q}_t(\theta^\circ)}{\partial\theta} - \frac{\partial q_t(\theta^\circ)}{\partial\theta} \right\|
    + \left( \max_{1\leq t \leq n}  { n^{-1/2} }  \left\| \frac{\partial q_t(\theta^\circ)}{\partial\theta} \right\| \right)
    \sum_{t=1}^\infty \left| I(Y_t<\tilde{q}_t(\theta^\circ)) - I(Y_t< q_t(\theta^\circ)) \right| \\
    &\leq \op(1) + \op(1) \cdot \sum_{t=1}^\infty \left| I(Y_t<\tilde{q}_t(\theta^\circ)) - I(Y_t< q_t(\theta^\circ)) \right| .
\end{aligned}
\end{equation*}
On the other hand, by virtue of {\bf (C5)},  {\bf (N1)}
and the mean value theorem, we can have
\begin{align*}
    E\left[ \left| I(Y_t<\tilde{q}_t(\theta^\circ)) - I(Y_t< q_t(\theta^\circ)) \right| |\Ft \right]
    &= \left| F_u\left(\xi^\circ + \{ \tilde{q}_t(\theta^\circ) - q_t(\theta^\circ) \} / h_t({\vartheta^\circ}) \right) - F_u(\xi^\circ) \right| \\
    &\leq c_0^{-1} \|f_u\|_\infty \left| \tilde{q}_t(\theta^\circ) - q_t(\theta^\circ) \right|.
\end{align*}
Thus, by using {\bf(C6)} and Corollary~2.3 of \cite{HallHeyde1980}, we obtain (\ref{tilde_g_1-g_1}).

Next, we verify (\ref{tilde_e-e}). Owing to (\ref{e_t theta}), we
have
\begin{multline*}
    \tilde{e}_t(\theta) - e_t(\theta)
    = (\theta-\theta^\circ)^T
    \int_0^1
    \frac{\partial \tilde{q}_t(\theta(u))}{\partial\theta} \left\{ F_u\left(\xi^\circ + \frac{\tilde{q}_t(\theta(u))-q_t(\theta^\circ)}{h_t({\alpha^\circ})} \right) - \tau \right\} \\
    -  \frac{\partial q_t(\theta(u))}{\partial\theta} \left\{ F_u\left(\xi^\circ + \frac{q_t(\theta(u))-q_t(\theta^\circ)}{h_t({\alpha^\circ})} \right) - \tau \right\} \md u.
\end{multline*}
Similarly to the case of (\ref{tilde_g_1-g_1}), we have that for all
$n>\delta^{-2} C^2$,
\begin{align}\label{tilde_e-e sub}
\begin{aligned}
    &\sup_{\|\theta-\theta^\circ\|\leq  C n^{-1/2} } \left| \sumtn \tilde{e}_t(\theta) - \sumtn e_t(\theta) \right|\\
    &\leq { C n^{-1/2} } \sumtn \sup_{\|\theta-\theta^\circ \|\leq  C n^{-1/2} }
    \left\| \frac{\partial \tilde{q}_t(\theta)}{\partial\theta} - \frac{\partial q_t(\theta)}{\partial\theta} \right\| \\
    &\quad+ { C n^{-1/2} c_0^{-1} \|f_u\|_\infty}
    \sumtn \sup_{\|\theta-\theta^\circ \|\leq  C n^{-1/2} } \left\| \frac{\partial q_t(\theta)}{\partial\theta} \right\|
    \cdot
    \sup_{\|\theta-\theta^\circ \|\leq  C n^{-1/2} }  \left\|  \frac{\partial \tilde{q}_t(\theta)}{\partial\theta} - \frac{\partial q_t(\theta)}{\partial\theta} \right\|\\
    &\leq  \op(1) + {C c_0^{-1}  \|f_u\|_\infty} \left( \max_{1\leq t\leq n} { n^{-1/2} M_{1t} }  \right)
    \sumtn  \sup_{\|\theta-\theta^\circ \|\leq  C n^{-1/2} }  \left\|  \frac{\partial \tilde{q}_t(\theta)}{\partial\theta} - \frac{\partial q_t(\theta)}{\partial\theta} \right\| \\ &=\op(1).
\end{aligned}
\end{align}
This together with (\ref{Calculation of cond mean}) implies
(\ref{tilde_e-e}).

Finally, we deal with (\ref{tilde_R_1n = o_P(1)}). In this case, we
use arguments similar to those in Proposition~1 of \cite{bai1994}. Set
$z= n^{1/2} (\theta-\theta^\circ)$. For notational convenience, let
\begin{align*}
    \zeta_{nt}(z) &= \tilde{q}_t(\theta^\circ + n^{-1/2} z),&
    \zeta_{nt}^*(z) &= \frac{\partial \tilde{q}_t\left(\theta^\circ + n^{-1/2} z \right) }{\partial\theta}, &
    F_t(x) &= F_u\left(\xi^\circ + \frac{x-q_t(\theta^\circ)}{h_t({\alpha^\circ})} \right),
\end{align*}
and
\begin{align*}
    H_n(z)
    &= { n^{-1/2} } \sumtn \left[  \zeta_{nt}^*(z)
    \left\{ I\left( Y_t < \zeta_{nt}(z) \right) - F_t\left(\zeta_{nt}(z) \right)   \right\}
    - \zeta_{nt}^{*}(0) \left\{ I\left( Y_t < \zeta_{nt}(0) \right) - \tau \right\}
    \right].
\end{align*}
Then, (\ref{tilde_R_1n = o_P(1)}) is equivalent to $\sup_{\|z\|\leq C}
\|H_n(z)\|=\op(1)$, where  $C$ is any positive real number. Given any
 $\eta>0$, there exist a partition $\{ B_i : i=1, 2, \ldots,
m(\eta) \}$ of $\{z : \|z\|\leq C\}$ such that the diameter of each
$B_i$ is less than $\eta$. For each $i=1, 2, \ldots, m(\eta)$, pick
up $z_i\in B_i$. Note that for all $z\in B_i$ and  $n > \delta^{-2}
C^2$,
\begin{align*}
    I\left(Y_t < \zeta_{nt}(z_i) -  {\eta n^{-1/2} } \tilde{M}_{1t} \right)
    \leq I\left(Y_t < \zeta_{nt}(z) \right)
    \leq I\left(Y_t < \zeta_{nt}(z_i) +  {\eta n^{-1/2} } \tilde{M}_{1t} \right),
\end{align*}
where $\tilde{M}_{1t} \equiv \sup_{\theta\in N_\delta}
\|\partial\tilde{q}_t(\theta) / \partial\theta\|$. Therefore,  for all
large $n$,
\begin{align}\label{monotonicity for I(zeta_nt)}
\begin{aligned}
    &\sup_{z\in B_i}  \left\| \zeta_{nt}^*(z) I\left( Y_t < \zeta_{nt}(z) \right) - \zeta_{nt}^*(z_i) I\left( Y_t < \zeta_{nt}(z_i) \right)
    \right\| \\
    &\leq \sup_{z\in B_i} \left\| \zeta_{nt}^*(z) - \zeta_{nt}^*(z_i) \right\| + \left\| \zeta_{nt}^*(z_i) \right\| \cdot
    \sup_{z\in B_i}  \left| I\left( Y_t < \zeta_{nt}(z) \right) - I\left( Y_t < \zeta_{nt}(z_i) \right)
    \right| \\
    &\leq { \eta n^{-1/2} \tilde{M}_{2t} }  + \tilde{M}_{1t} \left\{ I\left( Y_t < \zeta_{nt}(z_i)  + {\eta n^{-1/2} } \tilde{M}_{1t}  \right)
    -  I\left( Y_t < \zeta_{nt}(z_i)  - {\eta n^{-1/2} } \tilde{M}_{1t}  \right)
    \right\},
\end{aligned}
\end{align}
where $\tilde{M}_{2t} \equiv \sup_{\theta\in N_\delta} \|\partial^2 \tilde{q}_t(\theta) / \partial\theta\partial\theta^T \|$.
Similarly, it follows from the mean value theorem that for all large $n$,
\begin{align}\label{monotonicity for F(zeta_nt)}
    \sup_{z\in B_i}  \left\| \zeta_{nt}^*(z) F_t\left( \zeta_{nt}(z) \right)
    - \zeta_{nt}^*(z_i) F_t\left( \zeta_{nt}(z_i) \right)
    \right\|
    &\leq {\eta n^{-1/2} \tilde{M}_{2t} }  + {\eta c_0^{-1} \|f_u\|_\infty n^{-1/2} \tilde{M}_{1t}^2 }.
\end{align}
From {\bf (N3)}(ii) and {\bf (N4)}(ii), we can have that
\begin{align*}
    {1\over n} \sumtn \tilde{M}_{1t}^2 \leq   {2\over n} \sumtn \sup_{\theta\in N_\delta} \left\| \frac{\partial q_t(\theta)}{\partial\theta} -  \frac{\partial\tilde{q}_t(\theta)}{\partial\theta} \right\|^2
    +  {2\over n} \sumtn M_{1t}^2 = O_p(1).
\end{align*}
It also follows from {\bf(N3)}(iii) and {\bf(N4)}(iii) that $n^{-1} \sumtn \tilde{M}_{2t}=\Op(1)$.

Define
\begin{align*}
    \mathcal{H}_n(z_i, \eta)
    &= { n^{-1/2} } \sumtn \tilde{M}_{1t} \left\{ I\left( Y_t < \zeta_{nt}(z_i)  + {\eta n^{-1/2} } \tilde{M}_{1t}  \right)
    - F_t\left( \zeta_{nt}(z_i)  + {\eta n^{-1/2} } \tilde{M}_{1t}  \right)
    \right\}.
\end{align*}
Then, it follows from (\ref{monotonicity for I(zeta_nt)}) and (\ref{monotonicity for F(zeta_nt)}) that for all large $n$,
\begin{align*}
    \sup_{z\in B_i} \|H_n(z)-H_n(z_i)\|
    &\leq
    \mathcal{H}_n(z_i, \eta) - \mathcal{H}_n(z_i, -\eta) \\
    &\quad + { n^{-1/2} } \sumtn \tilde{M}_{1t} \left\{
    F_t\left( \zeta_{nt}(z_i)  + {\eta n^{-1/2} } \tilde{M}_{1t}  \right)
    - F_t\left( \zeta_{nt}(z_i)  - {\eta n^{-1/2} } \tilde{M}_{1t}  \right)
    \right\} \\
    &\quad  + {2\eta n^{-1} } \sumtn \tilde{M}_{2t} + {\eta c_0^{-1} \|f_u\|_\infty  }{ n^{-1} } \sumtn \tilde{M}_{1t}^2  \\
    &\leq \left| \mathcal{H}_n(z_i, \eta) - \mathcal{H}_n(z_i, -\eta) \right| + \eta \Op(1),
\end{align*}
where the $\Op(1)$ does not depend on $z_i$.
Therefore,
\begin{align*}
    \sup_{\|z\|\leq C} \|H_n(z)\|
    &\leq \max_{1\leq i \leq m(\eta)} \sup_{z\in B_i} \|H_n(z)-H_n(z_i)\|  + \max_{1\leq i \leq m(\eta)} \|H_n(z_i)\| \\
    &\leq  \max_{1\leq i \leq m(\eta)} \left| \mathcal{H}_n(z_i, \eta) - \mathcal{H}_n(z_i, -\eta) \right|
    + \max_{1\leq i \leq m(\eta)} \|H_n(z_i)\| +  \eta \Op(1).
\end{align*}

Now, we only have to show that $ \mathcal{H}_n(z_i, \eta) -
\mathcal{H}_n(z_i, -\eta)$ and $H_n(z_i)$ converge to $0$ in
probability for each $\eta$ and $z_i$. Let $H_n(z_i)=\sumtn
\chi_{nt}(z_i)$.
First, it can be seen from {\bf(N4)} that $n^{-1/2} \max_{1\leq t \leq n}  \tilde{M}_{1t} = o(1) $ a.s.
Similarly, it follows that $n^{-1} \max_{1\leq t \leq n}  \tilde{M}_{2t}  =o(1)$ a.s.
Thus, we have
\begin{align}\label{E X_nt given}
\begin{aligned}
    \sumtn \left\| E\left[ \chi_{nt}(z_i) |\Ft \right] \right\|
    &= { n^{-1/2} } \sumtn \left\| \zeta_{nt}^{*}(0) \right\| \left| F_t\left( \zeta_{nt}(0) \right) - \tau \right| \\
    &\leq c_0^{-1} \|f_u\|_\infty  \left( \max_{1\leq t \leq n} n^{-1/2} \left\| \frac{\partial \tilde{q}_t(\theta^\circ)}{\partial\theta} \right\|  \right) \sumtn { \left| \tilde{q}_t(\theta^\circ) - q_t(\theta^\circ) \right|} \\
    &=\op(1) \Op(1).
\end{aligned}
\end{align}
Further, simple algebras show that
\begin{align*}
    &  n E\left[ \left\| \chi_{nt}(z_i)  \right\|^2 |\Ft \right] \\
    & = n E\left[ \left( \chi_{nt}(z_i) - E\left[ \chi_{nt}(z_i) |\Ft \right] \right)^T \chi_{nt}(z_i) \big|\Ft \right]
    + n \left\| E\left[ \chi_{nt}(z_i) |\Ft \right] \right\|^2   \\
    &\leq   E\left[ \left\| \zeta_{nt}^*(z_i) I\left( Y_t < \zeta_{nt}(z_i) \right)
    - \zeta_{nt}^{*}(0) I\left( Y_t < \zeta_{nt}(0) \right) \right\|^2
    | \Ft  \right]  +  \left\| \zeta_{nt}^{*}(0) \right\|^2 \left\{ F_t\left( \zeta_{nt}(0) \right) - \tau \right\}^2 \\
    &\leq 2   \left\| \zeta_{nt}^*(z_i) - \zeta_{nt}^*(0)  \right\|^2
    + 2  \left\| \zeta_{nt}^*(0) \right\|^2 \left| F_t\left( \zeta_{nt}(z_i) \right) - F_t\left( \zeta_{nt}(0) \right) \right|
    +  \left\| \zeta_{nt}^{*}(0) \right\|^2 \left\{ F_t\left( \zeta_{nt}(0) \right) - \tau \right\}^2 .
\end{align*}
Hence, it follows that for all large $n$,
\begin{align}\label{E X_nt^2 given}
\begin{aligned}
    \sumtn E\left[ \left\| \chi_{nt}(z_i)  \right\|^2 |\Ft \right]
    &\leq 2 C^2 \left( \max_{1\leq t \leq n} n^{-1} \tilde{M}_{2t} \right) n^{-1}\sumtn \tilde{M}_{2t} \\
    &\quad + 2 C c_0^{-1} \|f_u\|_\infty  \left( \max_{1\leq t \leq n} n^{-1/2} \tilde{M}_{1t} \right)
    n^{-1}\sumtn  \tilde{M}_{1t}^2 \\
    &\quad + c_0^{-2} \|f_u\|_\infty^2  \left( \max_{1\leq t \leq n} n^{-1} \tilde{M}_{1t}^2 \right) \sumtn \left| \tilde{q}_t(\theta^\circ) - q_t(\theta^\circ) \right|^2  \\
    &=\op(1).
\end{aligned}
\end{align}
By applying Lemma~9 of \cite{Genon-CatalotJacod1993} componentwise
together with (\ref{E X_nt given}) and (\ref{E X_nt^2 given}),
we have $H_n(z_i)=\op(1)$ for each $z_i$. Further, it can readily seen
that $ \mathcal{H}_n(z_i, \eta) - \mathcal{H}_n(z_i, -\eta)=\op(1)$
for each $\eta$ and $z_i$. Therefore, we get (\ref{tilde_R_1n =
o_P(1)}), which completes the proof. \hfill \fbox{}

\ \\ \indent \textbf{Proof of Lemma~\ref{identification lemma}}. For
the ARMA and AGARCH parameters $\varphi$ and $\vartheta$, we denote
 $\phi(z) = 1-\sum_{j=1}^P \phi_j z^j$,
 $\beta(z) =1-\sum_{j=1}^p \beta_j
z^j$, and $\gamma_l(z) = \sum_{i=1}^q \gamma_{li} z^i$ for $l=1, 2$.
Since we can write $q_t(\theta) = Y_t - \eps_t(\varphi) + \xi
h_t(\varphi,\vartheta)$, the condition can be reexpressed as
\begin{align}\label{identification equation}
    \eps_t(\varphi) - \eps_t - \xi h_t(\vpt) + \xi^\circ h_t = 0 \quad\text{a.s.}
\end{align}
for some $t\in\mathbb{Z}$ and $\theta\in\Theta$. Put
$\Delta_t(\varphi) = \eps_t(\varphi)-\eps_t$. Due to {\bf (A4)},
we can express 
\begin{align*}
    \Delta_t(\varphi) = a_0 + \sum_{j=1}^\infty a_j \eps_{t-j},
\end{align*}
where $a_0 = \{\psi(1) \phi^\circ(1)\}^{-1} {\phi(1) \phi_0^\circ} - \psi(1)^{-1} \phi_0$ and
$1+\sum_{j=1}^\infty a_j z^j = \{\psi(z) \phi^\circ(z)\}^{-1} {\phi(z) \psi^\circ(z)}$ for $|z|\leq1$.
Further,
\begin{align*}
    h_t^2(\vpt) &= { c_0 +
    \sum_{j=1}^\infty c_{1j} {( \eps_{t-j} + \Delta_{t-j}(\varphi) )^+}^2
    + \sum_{j=1}^\infty c_{2j} {( \eps_{t-j} + \Delta_{t-j}(\varphi) )^-}^2 }, \\
    h_t^2 &= { c_0^\circ +
    \sum_{j=1}^\infty c_{1j}^\circ ( \eps_{t-j}^+ )^2
    + \sum_{j=1}^\infty c_{2j}^\circ ( \eps_{t-j}^- )^2,
    }
\end{align*}
where $c_0 := c_0(\vartheta)= 1/\beta(1)$, $c_0^\circ := c_0(\vartheta^\circ)$,
$c_{lj} := c_{lj}(\vartheta)$,  $c_{lj}^\circ := c_{lj}(\vartheta^\circ)$, and
$
    \sum_{j=1}^\infty c_{lj}(\vartheta) z^j = {\beta(z)}^{-1} {\gamma_l(z)}
$ for $|z|\leq1$ and $l=1, 2$. Then, by dividing
equation~(\ref{identification equation}) by $h_{t-1}$ and expressing
it as a function of $u_{t-1}$, we rewrite (\ref{identification
equation}) as
\begin{align}\label{identification eqn2}
\begin{aligned}
    &f_1(u_{t-1}, A_{t,2}, B_{t,2}, C_{t,2}, D_{t,2} ) \\
    &:= a_1 u_{t-1} + A_{t,2} - \xi \left\{ c_{11} {(u_{t-1}+B_{t,2})^+}^2 + c_{21} {(u_{t-1}+B_{t,2})^-}^2
    + C_{t,2}   \right\}^{1/2} \\
    &\quad +\xi^\circ \left\{ c_{11}^\circ (u_{t-1}^+)^2 + c_{21}^\circ (u_{t-1}^-)^2
    + D_{t,2}   \right\}^{1/2} \\
    &=0 \quad\text{a.s.,}
\end{aligned}
\end{align}
where for $k\geq2$,
\begin{align*}
    A_{t,k} &= \left(  a_0 + \sum_{j=k}^\infty a_j \eps_{t-j} \right) / h_{t-k+1},  \qquad\qquad\qquad
    B_{t,k} =  \Delta_{t-k+1}(\varphi) / h_{t-k+1},  \\
    C_{t,k} &= \left( c_0 +   \sum_{j=k}^\infty c_{1j} ( \eps_{t-j}^+(\varphi) )^2
    + \sum_{j=k}^\infty c_{2j} ( \eps_{t-j}^-(\varphi) )^2
    \right) /h_{t-k+1}^2, \\
    D_{t,k} &= \left( c_0^\circ +    \sum_{j=k}^\infty c_{1j}^\circ ( \eps_{t-j}^+ )^2
    + \sum_{j=k}^\infty c_{2j}^\circ ( \eps_{t-j}^- )^2 \right) /h_{t-k+1}^2.
\end{align*}
Note that $A_{t,k}, B_{t,k}, C_{t,k}, D_{t,k}$ are
$\mathcal{F}_{t-k}$-measurable. Since $f_1(\cdot)$ is continuous, it
follows from (\ref{identification eqn2}) that $f_1(y)=0$ for any $y$
in the support of the random vector $(u_{t-1}, A_{t,2}, B_{t,2},
C_{t,2}, D_{t,2} )$. Here, the independence of $u_{t-1}$ and
$\mathcal{F}_{t-2}$ and {\bf (A5)} indicates that the above support
is a Cartesian product of $\mathbb{R}$ and the support of $(A_{t,2},
B_{t,2}, C_{t,2}, D_{t,2} )$. Subsequently, it must hold with
probability 1 that
\begin{align}\label{identification eqn3}
    f_1(x, A_{t,2}, B_{t,2}, C_{t,2}, D_{t,2} )=0 \quad\text{for all}~~ x\in\mathbb{R}.
\end{align}
Now, we consider the following identity: $\forall x\in\mathbb{R}$,
\begin{align}\label{the identity}
    f(x) := ax+A-\xi \left\{ c_1 {(x+B)^+}^2 + c_2 {(x+B)^-}^2 + C \right\}^{1/2}
    + \xi^\circ \left\{ c_1^\circ (x^+)^2 + c_2^\circ (x^-)^2 + D \right\}^{1/2} =0,
\end{align}
where coefficients are real numbers. Then, by taking the limit $x\to
\pm \infty$ of $x^{-1} f(x)$, we can check that the following holds:
\begin{align}\label{1st identity condition}
    a-\xi {c_1}^{1/2} + \xi^\circ {c_1^\circ}^{1/2} &=0  & \text{and} &&
    a+\xi {c_2}^{1/2} - \xi^\circ {c_2^\circ}^{1/2} &=0.
\end{align}

First, we consider the case that $\xi^\circ\ne0$. Let $m\geq1$ be
the smallest integer such that $c_{1m}^\circ+c_{2m}^\circ>0$. Assume
that $m>1$. Then, it holds that $c_{11}^\circ=c_{21}^\circ=0$. Note
that due to (\ref{identification eqn3}) and (\ref{1st identity
condition}), $a_1=\xi c_{11}^{1/2}= \xi c_{21}^{1/2}=0$ since
$c_{1j}$, $c_{2j}$, $j\geq1$, are nonnegative. Hence, it can be seen
by simple algebras that $a_k=\xi c_{1k}^{1/2}= \xi c_{2k}^{1/2}=0$
for $1\leq k < m$ and (\ref{identification eqn2}) is reduced to the
following:
\begin{align}\label{identification eqn4}
\begin{aligned}
    &f_m(u_{t-m}, A_{t,m+1}, B_{t,m+1}, C_{t,m+1}, D_{t,m+1}) \\
    &:= a_m u_{t-m} + A_{t,m+1} - \xi \left\{ c_{1m} {(u_{t-m}+B_{t,m+1})^+}^2 + c_{2m} {(u_{t-m}+B_{t,m+1})^-}^2
    + C_{t,m+1}   \right\}^{1/2} \\
    &\quad +\xi^\circ \left\{ c_{1m}^\circ (u_{t-m}^+)^2 + c_{2m}^\circ (u_{t-m}^-)^2
    + D_{t,m+1}   \right\}^{1/2} \\
    &=0 \quad\text{a.s.}
\end{aligned}
\end{align}
If $\xi=0$, by using (\ref{1st identity condition}) again, we get
$a_m=-\xi^\circ {c_{1m}^\circ}^{1/2}=\xi^\circ
{c_{2m}^\circ}^{1/2}$, which leads to a contradiction. Thus, it must
hold that $\xi\ne0$. Here, considering the identity in ~(\ref{the
identity}), assume that $f(x)=0$, $\forall x\in\mathbb{R}$, where
$c_1, c_2, c_1^\circ, c_2^\circ\geq 0$, $c_1^\circ+c_2^\circ>0$,
$\xi^\circ\ne0$, $\xi\ne0$, and $C, D>0$. Then, since $\md^2
f(x)/\md x^2=0$ for $x\ne0, -B$, $\lim_{x\downarrow \max\{0,-B\}}
\md^3 f(x) / \md x^3=0$, and $\lim_{x\uparrow \min\{0,-B\}} \md^3
f(x) / \md x^3=0$, one can check that $c_1+c_2>0$ and $B=0$,
which together with (\ref{identification eqn4}) yields $B_{t,m+1}=0$
a.s. Hence, $\eps_{t-m}(\varphi)=\eps_{t-m}$ a.s. and subsequently,
due to {\bf (A3)}(ii), we obtain $\varphi=\varphi^\circ$ and $a_j=0,
\forall j\geq0$. Since it can be further entailed that $A_{t,k}=0$
a.s. for all $k\geq2$ and $\xi {c_{lm}}^{1/2}= \xi^\circ
{c_{lm}^\circ}^{1/2}$ for $l=1,2$, (\ref{identification eqn4}) is
reduced to $\xi C_{t,m+1}^{1/2} = \xi^\circ D_{t,m+1}^{1/2}$ a.s.
Then, repeating the above steps, we are led to get $\xi
{c_{lj}}^{1/2}= \xi^\circ {c_{lj}^\circ}^{1/2}$ for all $j\geq1$ and
$l=1,2$. Since this directly implies $\xi {c_0}^{1/2} = \xi^\circ
{c_0^\circ}^{1/2}$, we have that  for $l=1,2$,
\begin{align*}
    \frac{\xi^2 \gamma_l(z)}{\beta(z)} &= \frac{ {\xi^\circ} ^2 \gamma_l^\circ(z)}{\beta^\circ(z)},   \quad |z|\leq1.
\end{align*}
Owing to this, by using {\bf (A3)}(i) and standard arguments (for
instance, those in \citealt[p.~2481]{StraumannMikosch2006}), it can be
verified that $\beta(\cdot) \equiv \beta^\circ(\cdot)$. This entails
$\xi=\xi^\circ$ and thus,
$\gamma_l(\cdot)\equiv\gamma_l^\circ(\cdot)$, $l=1,2$.

We now consider the case that $\xi^\circ=0$. If $\xi=0$, it follows
from (\ref{identification equation}) that $\varphi=\varphi^\circ$.
Suppose that $\xi\ne0$. Then, due to (\ref{1st identity condition})
and (\ref{identification eqn2}), we get $a_1=c_{11}=c_{21}=0$.
Similarly to the previous case as above, we can see that the
repeated arguments yield  $a_j=c_{1j}=c_{2j}=0$ for all $j\geq1$ and
therefore, $\phi(\cdot)\equiv \phi^\circ(\cdot)$, $\psi(\cdot)\equiv
\psi^\circ(\cdot)$, and $\gamma_1(\cdot) \equiv \gamma_2(\cdot)
\equiv 0$. Then, combining all these and (\ref{identification
equation}), we can obtain $a_0 - \xi \beta(1)^{-1/2}=0$, which
completes the proof. \hfill \fbox{}

\ \\ \indent \textbf{Proof of Lemma~\ref{positive definiteness}}. It
suffices to show that $\lambda^T ( \partial q_1(\theta^\circ)
/\partial\theta )=0$ a.s. for some
$\lambda\in\mathbb{R}^{P+Q+2+p+2q}$ implies $\lambda=0$. Suppose
that $\lambda^T ( \partial q_1(\theta^\circ) /\partial\theta )=0$
almost surely. Then, by the stationarity, $\lambda^T ( \partial
q_t(\theta^\circ) /\partial\theta )=0$ a.s. for all
$t\in\mathbb{Z}$. In view of (\ref{derivative of q_t}), we can
express $2 h_t \lambda^T ( \partial q_t(\theta^\circ)
/\partial\theta )=0$ as
\begin{align}\label{p.d. identity1}
    2\lambda_1 h_t^2 + \lambda_2^T \left( - 2 h_t \frac{\partial \eps_t(\varphi^\circ)}{\partial\varphi}
    + \xi^\circ \frac{\partial h_t^2(\varphi^\circ, \vartheta^\circ)}{\partial\varphi} \right) + \xi^\circ \lambda_3^T \frac{\partial h_t^2(\varphi^\circ, \vartheta^\circ)}{\partial\vartheta} =0,
\end{align}
where $\lambda^T=(\lambda_1, \lambda_2^T, \lambda_3^T)$,
$\lambda_1\in\mathbb{R}$, $\lambda_2=(\lambda_{2,1}, \lambda_{2,2},
\ldots, \lambda_{2,P+Q+1})^T\in\mathbb{R}^{P+Q+1}$ and
$\lambda_3=(\lambda_{3,1}, \lambda_{3,2}, \ldots,
\lambda_{3,2q+p})^T\in\mathbb{R}^{2q+p}$. To see that $\lambda=0$,
we use the same techniques to prove Lemma~\ref{identification
lemma}.  We express (\ref{p.d. identity1}) as a function of
$u_{t-1}$ as follows:
\begin{align}\label{pd ident-1}
    \frac{2 \lambda_1 h_t^2}{h_{t-1}^2} &= 2\lambda_1 \left\{ \gamma_{11}^\circ (u_{t-1}^+)^2 + \gamma_{21}^\circ (u_{t-1}^-)^2 + D_{t,2} \right\},
\end{align}
where $D_{t,2}$ is defined in the proof of Lemma~\ref{identification
lemma}. Similarly, due to (\ref{derivative of q_t}), we can express
\begin{align}\label{pd ident-2}
\begin{aligned}
    \left\{ {1\over h_{t-1}} (-2) \lambda_2^T  \frac{\partial \eps_t(\varphi^\circ)}{\partial\varphi} \right\}
    {h_t\over h_{t-1}}
    &= \left\{ 2(\lambda_{2,2}+\lambda_{2,P+2})u_{t-1} + E_{t,2} \right\}
    \left\{ \gamma_{11}^\circ (u_{t-1}^+)^2 + \gamma_{21}^\circ (u_{t-1}^-)^2 + D_{t,2} \right\}^{1/2}, \\
    {1\over h_{t-1}^2} \xi^\circ \lambda_2^T \frac{\partial h_t^2(\varphi^\circ, \vartheta^\circ)}{\partial\varphi}
    &= 2 \xi^\circ \left( \gamma_{11}^\circ u_{t-1}^+ - \gamma_{21}^\circ u_{t-1}^- \right) {1\over h_{t-1}} \lambda_2^T
    \frac{\partial \eps_{t-1}(\varphi^\circ)}{\partial\varphi}  + F_{t,2}, \\
    {1\over h_{t-1}^2} \xi^\circ \lambda_3^T \frac{\partial h_t^2(\varphi^\circ, \vartheta^\circ)}{\partial\vartheta}
    &= \xi^\circ \left\{ \lambda_{3,1} (u_{t-1}^+)^2 + \lambda_{3,q+1} (u_{t-1}^-)^2 \right\} +
    G_{t,2},
\end{aligned}
\end{align}
where $E_{t,2}, F_{t,2}, G_{t,2}$ are obviously defined and
$\mathcal{F}_{t-2}$-measurable. Then, due to (\ref{p.d. identity1}),
one can see that the sum of terms in the right-hand side of (\ref{pd
ident-1}) and (\ref{pd ident-2}) equals to $0$ a.s. Then, by using
the same arguments deducing (\ref{identification eqn3}), it can be
seen that  with probability 1, for all $x>0$,
\begin{align*}
    f(x) :=& (2\lambda_1\gamma_{11}^\circ + \xi^\circ\lambda_{3,1}) x^2
    + \left\{ {2 \xi^\circ \gamma_{11}^\circ  \over h_{t-1}} \lambda_2^T
    \frac{\partial \eps_{t-1}(\varphi^\circ)}{\partial\varphi} \right\} x
    + \left\{ 2\lambda_1 D_{t,2} + F_{t,2}+G_{t,2} \right\} \\
    &+ \left\{ 2(\lambda_{2,2}+\lambda_{2,P+2}) x + E_{t,2} \right\}
    \left\{ \gamma_{11}^\circ x^2  + D_{t,2} \right\}^{1/2} \\
    =&0.
\end{align*}
Here, note that $D_{t,2}>0$ a.s. and $\xi^\circ \ne 0$,
$\gamma_{11}^\circ >0$. Then, since $\lim_{x\to0} \md f(x)/\md x=0$
and $\lim_{x\to0} \md^3 f(x)/\md x^3=0$, we can have
\begin{align}\label{pd iden lambda_2}
    \lambda_2^T \frac{\partial \eps_{t-1}(\varphi^\circ)}{\partial\varphi} =0 \quad\text{a.s.}
\end{align}
As in \citet[p.~631]{FrancqZakoian2004}, we can also see that under
the minimality assumption {\bf (A3)}(ii) on the ARMA representation,
(\ref{pd iden lambda_2}) implies $\lambda_2=0$.

Now, (\ref{p.d. identity1}) is reduced to
\begin{align}\label{pd ident-3}
    2\lambda_1 h_t^2 + \xi^\circ \lambda_3^T \frac{\partial h_t^2(\varphi^\circ, \vartheta^\circ)}{\partial\vartheta} =0 \quad\text{a.s.}
\end{align}
Note that Lemma~\ref{identification lemma} entails
$h_t^2=(const.)\times h_t^2(\varphi^\circ, \vartheta)$ a.s. implies
$const.=1$ and $\vartheta=\vartheta^\circ$, which implies that the
representation (\ref{AGARCH model eqn}) is minimal particularly
under {\bf (A3)}(i). From (\ref{def of h_t(vpt)}), we have
\begin{align}\label{beta(B) h_t^2 derivative}
    \beta^\circ(B) \frac{\partial h_t^2(\varphi^\circ, \vartheta^\circ)}{\partial\vartheta}
    = \left( (\eps_{t-1}^+)^2, \ldots, (\eps_{t-q}^+)^2, (\eps_{t-1}^-)^2, \ldots, (\eps_{t-q}^-)^2, h_{t-1}^2, \ldots, h_{t-p}^2
    \right)^T.
\end{align}
Using the fact that $a(x^+)^2 + b(x^-)^2+c=0$, $\forall
x\in\mathbb{R}$ implies $a=b=0$, we can see that any constants and
random variables in the right-hand side of (\ref{beta(B) h_t^2
derivative}) are linearly independent due to {\bf(A3)}(i) (see the
arguments in \citealt[p.~621]{FrancqZakoian2004}). Since (\ref{pd
ident-3}) implies
\begin{align*}
    &2\lambda_1 \left( 1+ \sum_{i=1}^q \gamma_{1i}^\circ (\eps_{t-i}^+)^2 + \sum_{i=1}^q \gamma_{2i}^\circ (\eps_{t-i}^-)^2
    \right) + \xi^\circ \lambda_3^T \beta^\circ(B) \frac{\partial h_t^2(\varphi^\circ, \vartheta^\circ)}{\partial\vartheta} \\
    &= \left\{ \left( 2\lambda_1, 2\lambda_1 \gamma_{11}^\circ,  \ldots,
    2\lambda_1 \gamma_{1q}^\circ, 2\lambda_1 \gamma_{21}^\circ, \ldots, 2\lambda_1 \gamma_{2q}^\circ, 0, \ldots, 0 \right)
    + \left( 0, \xi^\circ \lambda_3^T \right)
    \right\}  \\
    &\quad\cdot \left( 1, (\eps_{t-1}^+)^2, \ldots, (\eps_{t-q}^+)^2, (\eps_{t-1}^-)^2, \ldots, (\eps_{t-q}^-)^2, h_{t-1}^2, \ldots, h_{t-p}^2
    \right)^T \\
    &=0 \quad\text{a.s.},
\end{align*}
we obtain $\lambda_1=0$ and so $\lambda_3=0$, which completes the
proof.
\hfill \fbox{} \\

In what follows, $\rho\in(0,1)$, $K>0$ and $V$ denote generic constants
and a generic random variable, respectively, which may vary from line to line. Further,
$\{S_t\}$ denotes a generic stationary ergodic process such that
$S_t\in\mathcal{F}_{t-1}$ and $E[S_t^2]<\infty$.

\ \\ \indent \textbf{Proof of Theorem~\ref{consistency for AGARCH}}.
Note that {\bf (A5)} is sufficient for {\bf (C1)} and that {\bf(C5)}
trivially holds with $h_1(\alpha^\circ)\geq1$ a.s.
We now verify {\bf(C3)} and {\bf(C6)}.
Recall that $E|\eps_t|<\infty$ is equivalent to $E|Y_t|<\infty$
under {\bf (A2)}. For
any analytic function $f(z)$ on $\{ z\in\mathbb{C}: |z|\leq1 \}$, we
denote by $a_k(f)$ the coefficient of $z^k$ in its Taylor's series
expansion. Due to {\bf (A4)}, we can express
\begin{align}
    \eps_t(\varphi) &= - \psi(1)^{-1} \phi_0 +   \sum_{k=0}^\infty a_k( \phi / \psi ) Y_{t-k},
    \label{eps filter representation} \\
    h_t(\vpt) &= \left\{ \beta(1)^{-1} +  \sum_{k=1}^{\infty} a_k(\gamma_1/\beta) ( \eps_{t-k}^+(\varphi) )^2
    + \sum_{k=1}^{\infty} a_k(\gamma_2/\beta) ( \eps_{t-k}^-(\varphi) )^2
    \right\}^{1/2}. \label{h_t filter representation}
\end{align}
Thus, it can be seen that $q_t(\theta) = Y_t - \eps_t(\varphi) + \xi
h_t(\vpt)$ is of the form in (\ref{def of q_t(theta)}). For any
polynomial $\psi(\cdot)$ of degree $Q$, we define $\rho(\psi) = \max
\{ |z_i|^{-1} : \psi(z_i)=0, z_i\in\mathbb{C}, 1\leq i\leq Q \}$.
Note that $\sum_{j=1}^p \beta_j<1$ implies $\rho(\beta)<1$ (see Lemma~2.1 of \citealt{Berkesetal2003}).
Due to {\bf (A4)} and the compactness of
$\Theta$, we have that $\sup_{\theta\in\Theta} \rho(\psi) <1$ and
$\sup_{\theta\in\Theta} \rho(\beta) <1$, from which it can be shown
that $\sup_{\theta\in\Theta} |a_k(1/\psi) | \leq K \rho^k$ and
$\sup_{\theta\in\Theta} |a_k(1/\beta) | \leq K \rho^k$ for all
$k\geq0$ (see, e.g., Theorem~3.1.1 of \citealt{BrockwellDavis1991}).
Further, one can see that $a_k(\phi/\psi)$, $a_k(\gamma_1/\beta)$
and $a_k(\gamma_2/\beta)$  decay exponentially fast uniformly on
$\Theta$. By using these, the fact that $E|Y_t| <\infty$, (\ref{eps
filter representation}), and (\ref{h_t filter representation}), we
have
\begin{align}\label{bdd for eps_t, h_t}
\begin{aligned}
    \sup_{\theta\in\Theta} |\eps_t(\varphi)| &\leq K+ \sum_{k=0}^\infty K\rho^k |Y_{t-k}| = S_{t+1}^2, \\
    \sup_{\theta\in\Theta} |h_t(\vpt)| &\leq \left\{ K+ \sum_{k=1}^\infty K\rho^k  S_{t-k+1}^4 \right\}^{1/2} \leq S_t^2,
\end{aligned}
\end{align}
which ensures {\bf (C3)}(ii).

Note that the recursion for $\tilde{\eps}_t(\varphi)$ in
Section~\ref{sec3} can be expressed as
\begin{align}\label{tilde eps backshift}
    \psi(B) \tilde\eps_t(\varphi) &= \phi(B) ( Y_t^* - \phi(1)^{-1} \phi_0 ),
\end{align}
where $B$ denotes the backshift operator, $Y_t^*=Y_t$ for $t\geq1$,
and $Y_t^*=\phi(1)^{-1} \phi_0$ for $t\leq0$. Then, owing to
(\ref{tilde eps backshift}), we have that for $t\geq1$,
\begin{align}\label{tilde eps filter representation}
    \tilde\eps_t(\varphi) &= \sum_{k=0}^{t-1} a_k(\phi/\psi) ( Y_{t-k} - \phi(1)^{-1}\phi_0).
\end{align}
Similarly, with the initial values of
$\tilde{h}_t^2(\vpt)=\beta(1)^{-1}$ for $t\leq0$, we can express
\begin{align}\label{tilde h_t^2 filter representation}
    \tilde{h}_t^2(\vpt) &= \beta(1)^{-1} + \sum_{k=1}^{t-1} a_k(\gamma_1/\beta) ( \tilde\eps_{t-k}^+(\varphi) )^2
    + \sum_{k=1}^{t-1} a_k(\gamma_2/\beta) ( \tilde\eps_{t-k}^-(\varphi) )^2.
\end{align}

It is easy to check that
\begin{align}
    \sup_{\theta\in\Theta} |\tilde\eps_t(\varphi)| &\leq \sum_{k=0}^\infty K\rho^k (|Y_{t-k}|+1) = S_{t+1}^2, \label{bddness of tilde eps_t} \\
    \sup_{\theta\in\Theta} \left| \eps_t(\varphi) -\tilde\eps_t(\varphi) \right|
    &\leq \rho^t \sum_{j=0}^\infty K \rho^j (|Y_{-j}| + 1)=V\rho^t.\label{closeness of eps_t and tilde}
\end{align}
Further, since $|(x^\pm)^2 - (y^\pm)^2|\leq |x-y|\{|x|+|y|\}$, it
can be easily seen that $
    \sup_{\theta\in\Theta} | (\eps_t^\pm(\varphi))^2 - ( \tilde\eps_t^\pm(\varphi) )^2 |
    \leq V \rho^t S_{t+1}^2 \leq V \rho^{t}$.
Since $E|S_t|<\infty$, we have $E \log^+|S_t|<\infty$ and thus, due
to Lemma~2.2 of \cite{Berkesetal2003}, $\sum_{j=0}^\infty \rho^j
S_{-j+1}^4$ converges with probability 1. Further, since $\min\{
h_t(\vpt), \tilde{h}_t(\vpt)\}\geq 1$ for all $\theta\in\Theta$, it
follows that
\begin{align}\label{closeness of h_t^2 and tilde}
    \sup_{\theta\in\Theta} \left| h_t(\vpt) - \tilde{h}_t(\vpt) \right|
    &\leq
    2^{-1} \sup_{\theta\in\Theta} \left| h_t^2(\vpt) - \tilde{h}_t^2(\vpt) \right| \nonumber \\
    &\leq \sum_{k=1}^{t-1} K\rho^k V \rho^{t-k} + \sum_{k=t}^\infty K\rho^k S_{t-k+1}^4 \leq V\rho^t.
\end{align}
This together with (\ref{closeness of eps_t and tilde}) implies {\bf
(C6)}, and henceforth, an application of Lemma~\ref{identification
lemma}(i) and Theorem~\ref{strong consistency} validates
Theorem~\ref{consistency for AGARCH}(i).

Next, we deal with the case when $\xi^\circ=0$. Since {\bf (C3)} and
{\bf (C6)} are satisfied,
$\tilde{G}_n(\theta)-\tilde{G}_n(\theta^\circ)$ uniformly converges
a.s. to $\Gamma(\theta)$, which is the one defined in the proof of
Theorem~\ref{strong consistency}.
Note that $\Gamma(\theta)\geq 0$ and $\Gamma(\theta)=0$ if and only if
$q_t(\theta)=Y_t - \eps_t(\varphi^\circ)$ in this case.
Then, it follows from Lemma~\ref{identification lemma}(ii) that
$\Gamma(\theta)=0$ implies $\phi_j=\phi_j^\circ, 1\leq j \leq P$ and $\psi_i=\psi_i^\circ, 1\leq i \leq Q$.
Due to the compactness of
$\Theta$, for each generic point $w$ of the underlying probability
space, there exists a subsequence $ \hat\theta_{n_k}
:=\hat\theta_{n_k}(w) $ tending to a limit
$\theta^\infty:=\theta^\infty (w)$. From the uniform convergence and
the continuity of $\Gamma(\theta)$, we have that $
\tilde{G}_{n_k}(\hat\theta_{n_k}) - \tilde{G}_{n_k}(\theta^\circ)
\to \Gamma(\theta^\infty)$ as $k\to\infty $. Since
$\tilde{G}_n(\hat\theta_n) \leq \tilde{G}_n(\theta^\circ)$ and
$\Gamma(\theta)\geq0$, we have $ \Gamma(\theta^\infty)=0$.
It follows from the above argument that
$\phi_j^\infty=\phi_j^\circ$, $1\leq j\leq P$ and
$\psi_i^\infty=\psi_i^\circ$, $1\leq i\leq Q$.
We have proved that any convergent subsequence of
$(\hat\phi_{1n}, \ldots, \hat\phi_{Pn}, \hat\psi_{1n}, \ldots, \hat\psi_{Qn}  )$
tends to the corresponding true parameter vector,
which validates
Theorem~\ref{consistency for AGARCH}(ii). \hfill \fbox{}

\ \\ \indent
\textbf{Proof of Theorem~\ref{Asymptotics for AGARCH}}.
In view of Theorem~\ref{consistency for AGARCH}, it remains to verify that assumptions {\bf (N3)}--{\bf(N5)} hold.
Recall that {\bf (A1')} implies $EY_t^2<\infty$.

Due to {\bf(N2)}, we can choose a neighborhood $N_\delta \subset
\Theta$ where $\gamma_{1i}$'s, $\gamma_{2i}$'s and $\beta_j$'s are
uniformly bounded away from $0$. From (\ref{def of eps_t(varphi)})
and (\ref{h_t filter representation}), the first derivatives of
$q_t(\theta)$ are given as follows:
\begin{align}\label{derivative of q_t}
    {\partial q_t(\theta)\over\partial\xi} &= h_t(\vpt), &
    {\partial q_t(\theta)\over\partial\varphi}  &= - {\partial \eps_t(\varphi)\over\partial\varphi} + \xi {\partial h_t(\vpt)\over\partial\varphi} , &
    {\partial q_t(\theta)\over\partial\vartheta} &= \xi {\partial h_t(\vpt)\over\partial\vartheta},
\end{align}
where
\begin{align*}
    {\partial \eps_t(\varphi)\over\partial\phi_0} &= - \psi(1)^{-1}, \qquad\qquad\qquad\qquad
    {\partial \eps_t(\varphi)\over\partial\phi_j} =  - \sum_{k=0}^\infty a_k(1/\psi) Y_{t-j-k}, \quad 1\leq j\leq P, \\
    {\partial \eps_t(\varphi)\over\partial\psi_i} &=   -\sum_{k=0}^\infty a_k(1/\psi) \eps_{t-i-k}(\varphi), \quad 1\leq i \leq Q, \\
    {\partial h_t^2(\vpt)\over\partial\varphi} &= 2 \sum_{k=1}^\infty  \left\{
    a_k(\gamma_1/\beta) \eps_{t-k}^+(\varphi) - a_k(\gamma_2/\beta) \eps_{t-k}^-(\varphi) \right\}
     {\partial \eps_{t-k}(\varphi)\over\partial\varphi}, \\
    {\partial h_t^2(\vpt)\over\partial\gamma_{1i}} &= \sum_{k=1}^\infty {\partial a_k(\gamma_1/\beta)\over\partial\gamma_{1i}}
    ( \eps_{t-k}^+(\varphi) )^2, \quad
    {\partial h_t^2(\vpt)\over\partial\gamma_{2i}} = \sum_{k=1}^\infty {\partial a_k(\gamma_2/\beta)\over\partial\gamma_{2i}}
    ( \eps_{t-k}^-(\varphi) )^2, \quad 1\leq i\leq q, \\
    {\partial h_t^2(\vpt)\over\partial\beta_j} &= -\beta(1)^{-2} +
    \sum_{k=1}^{\infty} {\partial a_k(\gamma_1/\beta)\over\partial\beta_j} ( \eps_{t-k}^+(\varphi) )^2
    + \sum_{k=1}^{\infty} {\partial a_k(\gamma_2/\beta)\over\partial\beta_j} ( \eps_{t-k}^-(\varphi) )^2, \quad 1\leq j\leq p.
\end{align*}
It can be seen that the above derivatives are all continuously differentiable
in $\theta\in N_\delta$ except for $\partial
h_t^2(\vpt)/\partial\varphi$.
In particular, $\partial^2
h_t^2(\vpt)/\partial\varphi\partial\varphi^T$ is discontinuous.
However, one can see that $\partial h_t^2(\vpt)/\partial\varphi$ is
Lipschitz continuous in $N_\delta$ and thus, {\bf(N3)}(i) is
satisfied.

Since $EY_t^2<\infty$, (\ref{bdd for eps_t, h_t}) becomes
$\sup_{\theta\in \Theta}|\eps_t(\varphi)| \leq S_{t+1}$ and
$\sup_{\theta\in \Theta} |h_t(\vpt)| \leq  S_t$. Thus, we have
$\sup_{\theta\in N_\delta} \|\partial\eps_t(\varphi)/\partial\varphi
\| \leq S_t$. Note that $a_k(\gamma_l/ \beta)\geq0$ for $k\geq1$ and
$l=1, 2$. Then, using $\partial h_t(\vpt) / \partial\theta = (2
h_t(\vpt))^{-1} \partial h_t^2(\vpt) / \partial\theta $, we get
\begin{align*}
    \left\| {\partial h_t(\vpt)\over\partial\varphi} \right\|
    &\leq  \sum_{k=1}^\infty \frac{
    \left\{  a_k(\gamma_1/\beta) \eps_{t-k}^+(\varphi) + a_k(\gamma_2/\beta) \eps_{t-k}^-(\varphi) \right\}
    \left\| {\partial \eps_{t-k}(\varphi)\over\partial\varphi} \right\| }
    {  \left\{  a_k(\gamma_1/\beta) (\eps_{t-k}^+(\varphi))^2 + a_k(\gamma_2/\beta) (\eps_{t-k}^-(\varphi))^2 \right\}^{1/2}
    } \\
    &\leq  \sum_{k=1}^\infty \left\{ a_k^{1/2}(\gamma_1/\beta) + a_k^{1/2}(\gamma_2/\beta) \right\} \left\| {\partial\eps_{t-k}(\varphi)\over\partial\varphi} \right\|.
\end{align*}
This in turn implies $ \sup_{\theta\in N_\delta} \|
\partial h_t(\vpt) /\partial\varphi \| \leq S_t$. Further, by virtue
of Lemma~3.2 of \cite{Berkesetal2003}, similarly, we can have
$\sup_{\theta\in N_\delta} \|
\partial h_t(\vpt) /\partial\vartheta \| \leq S_t$. Hence, {\bf (N3)}(ii) is satisfied.
On the other hand, simple algebras show that $\sup_{\theta\in
N_\delta}\|
\partial^2 \eps_t(\varphi) /\partial\varphi\partial\varphi^T\|\leq
S_t$. Then, using this and Lemma~3.3 of \cite{Berkesetal2003}, it can
be readily checked that $\sup_{\theta\in N_\delta}\| \partial^2
h_t^2(\vpt) /\partial\theta\partial\theta^T\|\leq S_t^2$. Hence, by
using {\bf(N3)}(ii) and the equality
\begin{align*}
    \frac{\partial^2 h_t(\vpt)}{\partial\theta\partial\theta^T}={1\over 2h_t(\vpt)}
    \frac{\partial^2 h_t^2(\vpt)}{\partial\theta\partial\theta^T} - {1\over h_t(\vpt)} \frac{\partial h_t(\vpt)}{\partial\theta}
    \frac{\partial h_t(\vpt)}{\partial\theta^T},
\end{align*}
we can see that {\bf (N3)}(iii) holds.

Meanwhile, owing to (\ref{tilde eps filter representation}),
(\ref{tilde h_t^2 filter representation}) and (\ref{derivative of
q_t}), we can derive
\begin{align*}
    \sup_{\theta\in N_\delta} \left\| \frac{\partial\eps_t(\varphi)}{\partial\varphi}
    - \frac{\partial\tilde\eps_t(\varphi)}{\partial\varphi} \right\| &\leq V\rho^t &\text{and}&&
    \sup_{\theta\in N_\delta} \left\| \frac{\partial h_t^2(\vpt)}{\partial\theta}
    - \frac{\partial\tilde{h}_t^2(\vpt)}{\partial\theta} \right\| &\leq V\rho^t
\end{align*}
in a similar fashion to obtain (\ref{closeness of eps_t and
tilde})--(\ref{closeness of h_t^2 and tilde}). Thus, by using the
inequality
\begin{align*}
    &\left\| \frac{\partial h_t(\vpt)}{\partial\theta} - \frac{\partial\tilde{h}_t(\vpt)}{\partial\theta} \right\| \\
    &\leq {1\over 2h_t(\vpt)} \left\| \frac{\partial h_t^2(\vpt)}{\partial\theta} \right\| \left| h_t(\vpt) - \tilde{h}_t(\vpt) \right|
    + {1\over2} \left\| \frac{\partial h_t^2(\vpt)}{\partial\theta}
    - \frac{\partial\tilde{h}_t^2(\vpt)}{\partial\theta} \right\|,
\end{align*}
we can see that $\sup_{\theta\in N_\delta} \| {\partial
h_t(\vpt)}/{\partial\theta} -
{\partial\tilde{h}_t(\vpt)}/{\partial\theta} \| \leq V\rho^t$, which
ensures {\bf(N4)}(ii). Further, by using similar arguments to verify
(\ref{bddness of tilde eps_t}) and {\bf(N3)}(iii), one can easily
check that {\bf(N4)}(iii) holds. Finally, {\bf(N5)} is a direct
result of Lemma~\ref{positive definiteness}. Therefore, the
asymptotic normality is asserted by Theorem~\ref{asymptotic
normality}. This completes the proof.
\hfill \fbox{} \\


\end{document}